\newtheorem{proposition}{Proposition}
\newtheorem{lemma}{Lemma}
\newtheorem{corollary}{Corollary}
\newtheorem*{definition}{Definition}
\theoremstyle{remark}
\DeclareMathOperator*{\argmax}{argmax}
\newcommand{\vst}{\vspace{3mm}}
\title{Screening and Information-Sharing Externalities\thanks{I am grateful to Marciano Siniscalchi, Eddie Dekel, Piotr Dworczak, Wojciech Olszewski, Alessandro Pavan, Harry Pei, Bruno Strulovici, and Asher Wolinsky for helpful comments and discussion}}
\author{Quitz\'{e} Valenzuela-Stookey\thanks{Department of Economics, Northwestern University}}
\date{November 8, 2020}
\begin{document}
\maketitle

\begin{note} 
    \Large \textcolor{blue}{WITH NOTES}
\end{note}

\begin{center}
    \Large \textcolor{blue}{\href{https://northwestern.box.com/s/lovaravqcvlv29hoxoljrjb6j38kh3lt}{Click here for the latest version}}
\end{center}
\vst

\begin{abstract}
In many settings, multiple uninformed agents bargain simultaneously with a single informed agent in each of multiple periods. For example, workers and firms negotiate each year over salaries, and the firm has private information about the value of workers' output. I study the effects of transparency in these settings; uninformed agents may observe others' past bargaining outcomes, e.g. wages. I show that in equilibrium, each uninformed agent will choose in each period whether to try to separate the informed agent's types (screen) or receive the same outcome regardless of type (pool). In other words, the agents engage in a form of experimentation via their bargaining strategies. There are two main theoretical insights. First, there is a \textit{complementary screening} effect: the more agents screen in equilibrium, the lower the information rents that each will have to pay. Second, the payoff of the informed agent will have a certain supermodularity property, which implies that equilibria with screening are ``fragile'' to deviations by uninformed agents. I apply the results to study pay-secrecy regulations and anti-discrimination policy. I show that, surprisingly, penalties for pay discrimination have no impact on bargaining outcomes. I discuss how this result depends on the legal framework for discrimination cases, and suggest changes to enhance the efficacy of anti-discrimination regulations. In particular, anti-discrimination law should preclude the so-called ``salary negotiation defense''.
\end{abstract}

Many ongoing policy debates relate to transparency in negotiations. These include pay transparency among employees and the disclosure of payer-negotiated rates by hospitals. These settings generally feature asymmetric information. Insurers that negotiate rates with a hospital have only partial knowledge of the hospital's costs for performing different services. Workers are generally unsure about how much revenue they will generate for a potential employer. When there is one-sided asymmetric information and the informed party, say a firm, is long-lived, the firm will have incentives to establish a reputation which will keep its future costs down. For example, a firm may be reluctant to increase wages for current workers, since future workers will infer that their value to the firm is high, and demand higher wages as well. On the other hand, when the uninformed party, say a worker, is also long-lived, they may engage in strategic experimentation to try to learn about the firm's private information. For example, early in their career the worker may take a hard-line position in wage negotiations, in the hopes of learning how valuable they are to the firm. This information will be useful to the worker in future negotiations. When there are multiple workers, each one may learn something about their own value to the firm by observing the wages of others. This gives rise to information externalities. 

The goal of this paper is to understand the interaction between reputation effects and information externalities. In particular, I am interested in the implications of this interaction for policies that affect the transparency of negotiations, i.e. the amount of information shared between the (initially) uninformed parties. I consider a model in which multiple uninformed parties screen a single agent with an unknown type. I will call the uninformed parties ``workers'' and the agent the ``firm''.\footnote{The use of the terms ``worker'' and ``firm'' is for exposition only. The model applies equally well to any setting in which multiple principals simultaneously screen an informed agent, such as hospital-insurer negotiations.} The firm's type determines the output generated by each worker. 

Before describing the model that will be studied in this paper, it will be helpful to establish some basic intuition regarding reputation effects. Reputation effects alone can be understood in a model in which workers arrive sequentially and bargain with the firm over their wage. Consider such a model, in which before negotiating a worker may observe the wages paid to some previous workers. Reputational bargaining games of this sort are studied by \cite{kreps1982reputation}, \cite{fudenberg1989reputation}, and \cite{schmidt1993commitment}. In general, workers who observe that previous workers were paid a high wage will infer that the firm is the high-value type, and demand high wages themselves. Thus the firm will be unwilling to pay high wages, for fear of damaging its reputation. Increases in negotiation transparency, for example example an increase in the probability that each worker observes the wages of those that came before, strengthen the firms reputational incentives, and lead to lower wages.\footnote{Greater information-sharing of this form is analogous to an increase in the discount factor of the firm.} In other words, the information rents that the workers must pay in order to separate the firm types increase in the degree of information sharing. When these rents become too large, workers will give up trying to screen the firm, and settle for a fixed wage independent of type. 

While the effect of observing past bargaining outcomes in the sequential setting is well understood, less is known about the information externalities that arise in situations in which negotiations take place with multiple uninformed parties each period. This is the case in many prominent settings. In the worker-firm example, the firm recruits a new class of junior employees, and negotiate wages with each simultaneously. Similarly, hospitals negotiate with multiple insurers over the the rates to be paid by the insurers for different services, where the hospital has private information about its costs. Information sharing between workers features prominently in ongoing policy debates in these settings. For workers and firms, the costs and benefits of ``pay secrecy'' policies, in which firms prevent employees from sharing or discussing their compensation, are the subject of lively debate.\footnote{\cite{bradford2018}.} Regulation in place since the National Labor Relations Act of 1935 aims to prevent employers from imposing pay-secrecy policies. Transparency of the payer-negotiated rates payed by insurers to hospitals is the subject of new regulations from the Centers for Medicaid and Medicare Services. The new regulation would require hospitals to publicly disclose negotiated rates, and facilitate access to this information by the general public. These requirements are being vigorously resisted by both insurers and hospitals.\footnote{\cite{wilsonpecci2020}.} 

I study a simple model in which multiple workers negotiate with the firm in each period. Both the firm and workers live for two periods. Before the start of the second period, workers may with some exogenous probability observe the wages paid to other workers in the first period. I show that in equilibrium, each worker will adopt one of two types of strategies; a screening strategy, in which the worker structures their offers so as to learn the firm's type in the first period, or a pooling strategy, in which they are paid the same first-period wage regardless of the firm's type. 

The key insight is that in the first period the firm is less averse to being screened by any individual worker when many of the workers are screening, as opposed to only a few screening and the rest pooling. I refer to this as the \textit{complementary screening effect}. To see why this is the case, consider the payoff to the firm from agreeing to a high first-period wage for worker $i$ which reveals the firm's type, i.e. allowing $i$ to screen. Compare the firm's payoff from doing so in two different scenarios: when the only worker, aside from $i$, who screens is $j$, or when workers $j$ and $k$ both screen. There are two factors to consider when comparing these two scenarios: the information flowing to $i$ and the information flowing from $i$. In terms of information flowing from $i$, there are two differences between these scenarios. First, note that information sharing only has an effect on workers that did not screen in the first period (and thus do not know the firm's type when bargaining in the second period). Thus when both $j$ and $k$ are screening, the firm does not need to worry about $k$ observing that $i$ received a high wage, as it would if only $j$ is screening. Second, some workers other than $i$ may observe $k$'s wage. When $k$ is screening these workers learn the firm's type, and so any information revealed to them by observing $i$'s wage is redundant. In other words, the information revealed by $i$'s wage is less likely to be pivotal. There is, moreover, one difference regarding the information flowing to $i$. Should the firm reject $i$'s offer, i.e. prevent $i$ from screening, $i$ is more likely to learn the firm's type anyway when both $j$ and $k$ are screening then when $j$ alone is screening. Thus in the scenario with both $j$ and $k$ screening, the firm's payoff from rejecting $i$'s screening offer is lower. All three of these differences point in the same direction: the firm is more willing to be screened by $i$ when $j$ and $k$ are also screening than when $j$ is screening and $k$ is not. Because of this complementary screening effect, there is an externality from screening that operates through the information rents that must be paid to the high-type firm in order to screen; which are lower the more workers screen. This is in addition to the purely informational externality: workers who do not screen may benefit from the information generated by a screening worker.

Formally, complementary screening manifests as supermodularity in the payoffs of the high-type firm, as a function of the set of workers that it allows to screen in the first period. This has a number of interesting implications. First, screening is fragile; if any worker who is expected to screen in the first period deviates, either by demanding too high of a wage or switching to a pooling strategy, the high type firm will mimic the low type firm with all workers. In other words, screening breaks down completely. Second, screening by some workers will only be possible in equilibrium if enough workers engage in screening. If only a small number of workers attempt to screen, the information rents needed to induce the high-type firm to reveal itself would be too high. 

I also consider a model in which workers differ in how costly it is for them to screen the firm. I show how the distribution of worker types affects wages. I discuss how policies that increase wage transparency may have the adverse effect of destroying screening, and thus reducing the amount of information about the firm type that is generated. These negative effects can be offset however by policies that encourage more workers to screen. Importantly, there are increasing returns to such interventions; the more workers engage in screening, the less costly it is to induce them to screen. 

Finally, I incorporate discrimination into the model of pay secrecy. Reducing pay discrimination is one of the primary motivations for increasing pay transparency. I show how the legal framework within which discrimination cases are tried affects the efficacy of penalties for discrimination. An important factor is whether the law admits the so-called ``salary negotiation defense''. If the law admits such a defense, a worker in a protected group must demonstrate not only that they were paid less than a coworker, but also that the firm rebuffed their attempt to negotiate a higher wage. Surprisingly, penalties for discrimination have no impact on the incentives of discriminatory high-type firms under such laws. This is due to the supermodularity in firm payoffs discussed above: if a firm rejects an equilibrium screening offer, or if it rejects a screening offer from a pooling worker who deviates to screening, then it will reject all screening initial offers. As a result, there will never be verifiable cases of pay discrimination, and hence the penalty in such cases is irrelevant. This implies that pay transparency will have not direct effect on discrimination by high-type firms. On the other hand, high penalties and high transparency help prevent discrimination by low-type firms. I discuss how these findings relate to empirical evidence. The results suggest ways in which discrimination should be defined in order to restore the efficacy of anti-discrimination penalties. 

\subsubsection*{Related literature}

This paper combines elements from the strategic experimentation and reputational bargaining literatures. Conceptually, this paper is closely related to the large literature on strategic experimentation with multiple agents, in which information externalities also arise. Screening by workers in the first period can be thought of as costly experimentation. In their seminal paper, \cite{bolton1999strategic} study a multi-agent version of the multi-armed bandit problem. When there are multiple agents experimenting and observing each other's signals, free riding and encouragement effects arise. \cite{keller2005strategic} study a version of this problem with exponential bandits. Other papers, such as \cite{murto2011learning}, study similar problems in which only actions are observed. One key difference between this literature and the current paper is that in my model the cost of experimentation is endogenously determined by the information rents that must be paid to the firm. This is the source of the complementary screening affect, and its interesting implications for transparency policy. These indirect externalities are distinct from the direct payoff externalities in some experimentation models, such as those of \cite{strulovici2010learning} and \cite{thomas2020strategic}. Moreover, the fact that information is revealed by the strategic firm means that each agents actions can have contemporaneous effects on the information received by other workers. For example, when a screening worker $i$ deviates to pooling, they anticipate that the high type firm will reject the screening offers of all other screening workers (Lemma \ref{lem:simple_screendeviate2}). This affects both the information that $i$ expects to receive from other workers and the payoffs of other workers. 

There is also a close connection between this paper and the reputational bargaining literature (early contributions include \cite{fudenberg1989reputation}, \cite{schmidt1993commitment}). In both cases, a long-lived principal, in this case the firm, has an incentive to take actions which influence the beliefs of short-lived agents (workers) regarding its type. \cite{chaves2019privacy} studies the role of privacy in reputation games. Much of this literature assumes the existence of so called ``commitment types'' of the informed player. More recent papers, such as \cite{pei2020trust}, study reputation without commitment types. In contrast to most of this literature, I study a situation in which \textit{i}) both the informed and uninformed players are long-lived, and \textit{ii}) there are multiple uninformed agents who negotiate simultaneously with the informed player. Combining these elements yields the novel insights on informational externalities and complementary screening. Additionally, there are no direct payoff externalities. \cite{fudenberg1987reputation} and \cite{ghosh2014multiple} share some of these features. However these papers focus on infinite time horizons and conditions under which the informed player can obtain their Stackleberg payoff. 

This paper is also related to the literature on bargaining with incomplete information, including the important contributions of \cite{grossman1986perfect}. For an overview of this literature see \cite{ausubel2002bargaining}. It also contains elements of bilateral bargaining games, such as those studied by \cite{stole1996intra} and \cite{collard2019nash}.

The remainder of this paper is organized as follows. Section \ref{sec:simple} presents the model, and introduces some preliminary results. Section \ref{sec:equilibrium} characterizes equilibrium play. In Section \ref{sec:worker_heterogeneity} I examine some implications of worker heterogeneity. Section \ref{sec:paysecrecy} presents the application to pay secrecy policy with discriminatory firms. Appendix \ref{sec:alternatingoffers} discusses an extension to a model in which bargaining occurs via an alternating offers game \`a la \cite{rubinstein1982perfect}.

\section{A simple bargaining model}\label{sec:simple}

There is a set $W$ of workers, one firm, and 2 periods. The firm has private information about the per-period output generated by a worker, which is $s \in \{s', s''\}$, with $s'' > s' > 0$. The firm's total output is additively separable across workers. The worker's outside option is $-d < 0$, so there are always gains from cooperation. The worker's prior belief is that the firm is of type $s''$ with probability $p$. Workers and firms discount at rate $\beta$ between periods. 

I present here the simplest bargaining game which illustrates the important forces involved in bargaining with multiple workers.\footnote{The general qualitative results regarding the complementarity of screening do not depend on the specific details of the stage game between the workers and the firm. For clarity I will focus on a simple bargaining game. In the appendix I show how similar results obtain when the bargaining game is alternating offers \`{a} la \cite{rubinstein1982perfect}. The essential feature of the stage game is that the worker chooses between a risky but informative ``screening'' action and a safe but uninformative ``pooling'' action.} There are two rounds of bargaining in each period. The worker makes the offer in each round. If the firm rejects both offers then players get their outside option of zero for that period (if this occurs in the first period the parties negotiate again in the second). Both workers and the firm discount at rate $\delta$ between bargaining rounds, so there is a cost to delaying agreement until the second round of offers. To simplify the analysis, I assume that $s' > ps' - d$. This means that a worker who has no information will make an offer of $s'$ in the second round.

Information sharing is as follows. Workers observe nothing of the negotiations between the firm and other workers within each period. However after the first period, but before the second period wage negotiations, worker $i$ observes worker $j$'s first period wage with probability $\rho_{ij}$ (so $i$'s observations are independent across workers). Wage observations are hard evidence; the worker can prove to the firm that they have observed a give wage. The solution concept is PBE. 
 
\subsection{Preliminary observations}

I say that a worker screens in a given period if, in equilibrium, it learns the type of the firm. The worker pools if it does not update it's belief about the firm's type. Partial screening, which will occur when the firm follows a mixed strategy, will be discussed later on.
 
First, observe the high-type firm has more to lose from delaying agreement. Thus whenever a worker screens it must be that only the high-type firm accepts the initial offer.

\begin{lemma}\label{lem:simple_singlecross}
Any worker that screens makes an initial offer $w_1$ that it expects only the type $s''$ firm to accept. 
\end{lemma}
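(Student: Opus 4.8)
The plan is to prove the lemma in two stages: first show that screening forces the two types to \emph{separate} at the initial offer, and then show that in any such separation it is the high type $s''$ that accepts.

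For the first stage I would argue by elimination over the possible round-one responses. If both types take the same action at the initial offer $w_1$ (both accept, or both reject), then that action is uninformative and the worker's belief after round one is still the prior $p$. In the both-reject case the worker proceeds to the second round with belief $p$, and by the maintained assumption $s' > ps' - d$ it offers $s'$, which both types accept; hence no information is revealed within the period and the worker pools. In the both-accept case the worker likewise learns nothing. So the only way for the worker to learn the firm's type is for exactly one type to accept $w_1$ and the other to reject, i.e. the initial offer must separate the types. It then remains to pin down which type accepts.

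For the second stage I would establish the single-crossing property underlying the ``more to lose from delay'' intuition: holding the worker's strategy fixed, the net gain to the firm from accepting $w_1$ rather than rejecting it is strictly increasing in the type $s$. Two forces push in the same direction. First, the per-round surplus $s - w_1$ at stake is larger for the high type, so the discount cost $\delta$ of delaying agreement to the second round bites harder on $s''$. Second, rejecting in order to separate as the high type destroys the firm's second-period information rents, and only the high type has such rents to lose, since an uninformed worker offers $s'$ next period, to which the low type is indifferent but from which the high type extracts $s'' - s'$. I would make this concrete as a deviation argument: suppose, toward a contradiction, that the separating configuration had $s'$ accepting and $s''$ rejecting $w_1$. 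Then rejection would reveal the high type, the worker would offer $s''$ in the second round and again in the second period, and the continuation payoff of $s''$ would be essentially zero; but by mimicking the low type's acceptance the high type would secure the current surplus $s'' - w_1 \geq s'' - s' > 0$ together with its full second-period rents, a strict improvement. This contradiction rules out the low-accepts configuration, leaving only $s''$ accepting and $s'$ rejecting, which is the claim.

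The main obstacle is handling the second-period continuation values cleanly, since the beliefs induced by acceptance versus rejection are endogenous and feed into next period's offers; I must verify that the reputational channel \emph{reinforces} rather than offsets the delay-cost channel in the single-crossing comparison, and that the off-path beliefs invoked when the high type contemplates mimicking are consistent with PBE. A secondary point requiring care is the knife-edge indifference of the low type at the round-two offer $s'$ and at the separating wage; I would fix the tie-breaking so that it does not manufacture a spurious informative response, and confirm that the maintained assumption indeed forecloses any informative second-round offer by an uninformed worker.
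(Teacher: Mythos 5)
Your proposal is correct and its core argument is the same as the paper's: assume toward a contradiction that only the low type accepts, note rejection then reveals the high type so that the worker (and any observers) demand $s''$ thereafter, and conclude the high type strictly prefers to accept $w_1 \leq s'$ and keep its rents. Your first stage, explicitly eliminating the both-accept and both-reject configurations via the assumption that an uninformed worker offers $s'$ in the second round, is a step the paper leaves implicit but is consistent with it; also note that the ``mimicking'' deviation is to an on-path action, so no off-path belief issues actually arise.
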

\begin{proof}
The type $s'$ firm will never accept an offer above $s'$. If in equilibrium only the low type firm accepts the initial offer $w_1^i$ made by worker $i$ then worker $i$ will offer $s''$ in the second round. If this is occurring in the first period, worker $i$, as well as any workers observing $i$'s fist period wage, will offer $s''$ in both rounds in the second period. Thus the type $s''$ firm would prefer to accept $i$'s initial offer of $w_1^i < s''$. 
\end{proof}

Consider first the second period equilibrium. From this point on there is no information sharing, so the problem separates completely across workers. In the second round of the second period negotiation, a worker with belief $q$ will either offer a wage of $s''$ (if $qs'' \geq s'$) or $s'$ (if $qs'' < s'$). In either case the low type firm would receive a payoff of 0. Thus the low type firm strictly prefers to accept any first round offer $w_1 < s'$ than to reject such an offer. This has the following implication.

\begin{lemma}\label{lem:simple_pooling2}
In the second period negotiation, the unique pooling equilibrium wage is $s'$. 
\end{lemma}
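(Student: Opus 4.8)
The plan is to pin down the pooling wage from two directions: an upper bound coming from the low-type firm's willingness to accept, and a matching lower bound coming from the worker's incentive to ask for as much as possible. As the groundwork, I would record the immediate consequence of the discussion preceding the lemma: in the second round of the second period the low-type firm obtains a payoff of exactly $0$, since the worker offers either $s''$ (which the low type rejects, getting $0$) or $s'$ (which the low type accepts, again getting $0$). The firm's continuation value from rejecting a first-round offer is $\delta$ times this round-two payoff, so the low type's rejection continuation is $0$; accepting a first-round wage $w_1$ instead yields $s'-w_1$. Hence the low type strictly prefers to accept when $w_1<s'$, is indifferent when $w_1=s'$, and strictly prefers to reject when $w_1>s'$.

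For the upper bound, I would argue that any pooling equilibrium features agreement at a wage no greater than $s'$. In a pooling equilibrium both types take the same on-path action, so the worker learns nothing; if that action is acceptance of a first-round offer $w_1$, then the low type's incentive constraint forces $w_1\le s'$. I would then dispose of pooling via delay or breakdown. Since the worker is the proposer, values higher accepted wages, and discounts at $\delta<1$ between rounds, any outcome in which both types reject in round one and then agree is strictly dominated by bringing that agreement forward; moreover the only common round-two acceptance is at $s'$, because an offer of $s''$ is rejected by the low type. A complete breakdown is likewise dominated by a first-round offer of $s'$, which both types weakly accept. Thus every pooling outcome involves a wage of $s'$, realized in round one.

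For the matching lower bound, I would invoke the worker's optimization. Suppose, toward a contradiction, that a pooling equilibrium has wage $w_1<s'$. The worker can deviate to the offer $s'$: the high type still accepts, since $s''-s'>0$ strictly exceeds its discounted rejection continuation of at most $\delta(s''-s')$, and the low type is indifferent and so can be taken to accept. Both types accepting keeps the outcome a pooling one while strictly raising the worker's payoff from $w_1$ to $s'$, contradicting equilibrium; combining the bounds gives $w_1=s'$. For existence I would verify that $w_1=s'$ is sustainable as a pooling equilibrium, the only candidate deviation being an upward offer $w_1'>s'$, which the low type rejects and which is therefore a screening (not a pooling) offer. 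The main obstacle I anticipate is the treatment of the low type's indifference at $w_1=s'$: the lower-bound step relies on the standard convention that an indifferent firm accepts, so that the worker's deviation to $s'$ is strictly profitable and no pooling equilibrium with $w_1<s'$ survives. I would also take care in ruling out delayed pooling agreements, where the inequality $\delta<1$ does the work of forcing immediate agreement while leaving the wage level equal to $s'$.
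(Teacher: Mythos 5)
Your argument has one genuine gap, and it sits exactly where you flagged it: the low type's indifference at the offer $s'$. Your lower-bound step deviates the worker from a candidate pooling wage $w_1 < s'$ to the offer $s'$ and then invokes ``the standard convention that an indifferent firm accepts.'' The solution concept here is PBE, and no such convention is available: an equilibrium is free to specify that the indifferent low type rejects $s'$. If it does, your deviation yields the worker $p s' + (1-p)\delta s'$ (the high type accepts, the low type delays to round two), which can be strictly less than $w_1$ when $w_1$ is close to $s'$ and $p,\delta$ are not close to $1$ --- e.g.\ $p=\delta=1/2$ and $w_1 = 0.9\,s'$ gives a deviation payoff of $0.75\,s'$. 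So the deviation you propose does not break all candidate pooling equilibria below $s'$, and the convention is doing real work rather than being a harmless tie-break.

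The paper's proof avoids this entirely by perturbing the deviation: the worker offers $w' \in (w_1, s')$, strictly below $s'$. At such a $w'$ the low type \emph{strictly} prefers acceptance ($s'-w' > 0$ versus a rejection continuation worth $0$), so its acceptance is forced by sequential rationality, rejection can only signal type $s''$, the worker then demands $s''$ in round two, and the high type therefore accepts as well. Both types accepting $w' > w_1$ makes the deviation strictly profitable with no appeal to tie-breaking. (In fact your own worst-case bound $s''-w' > \delta(s''-s')$ already shows the high type accepts any $w' \le s'$ regardless of beliefs, so the only missing move is replacing the offer $s'$ by an offer in $(w_1,s')$.) The rest of your writeup --- the upper bound from the low type's participation, and ruling out delayed or failed agreement via $\delta<1$ --- is correct and slightly more thorough than the paper's, which leaves those steps implicit; but the lower-bound step needs the perturbation, not the convention.
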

\begin{proof}
Suppose that a there is a pooling equilibrium in which the worker makes an initial offer of $w_1 < s'$ and both types accept. If the worker deviates and makes an initial offer of $w' \in (w_1,s')$ the low type firm strictly prefers acceptance to rejection. Thus if the firm rejects $w'$ the worker will believe that it is type $s''$, and make a wage offer of $s''$ in the second period. Therefore both firm types accept $w'$, so $w_1$ cannot be an equilibrium offer. Thus the unique pooling equilibrium wage is $s'$
\end{proof}

Suppose the worker screens the firm. Lemma \ref{lem:simple_singlecross} says that the high type firm will be the one to accept first. The type $s''$ firm will accept an initial offer of $w_1$ iff $s'' - w_1 \geq \delta(s'' - s')$. So the equilibrium screening wage is $w^s = \delta s' + (1-\delta)s''$. The worker with belief $q$ prefers screening to pooling in the second period iff $q w^s + (1-q)\delta s' \geq s'$, or equivalently, $qs'' \geq s'$. Notice that the low type firm always gets a payoff of zero in the second period, regardless of whether the worker screens or pools. Let $U_2(s'')$ be the equilibrium second period payoff of a type $s''$ firm when the worker has belief $p$, so $U_2(s'') = \delta(s''-s')$ if the worker screens and $s''-s'$ if the worker pools.

Turn now to the negotiations in the first period. To simplify the analysis, and focus attention on the implications of information sharing, assume that no worker will ever offer a wage below $s'$. I refer to this as the minimum-wage restriction. Offers below $s'$ could arise in equilibrium if no more restrictions are placed on off-path beliefs. This is due to the inferences made by workers who observe another's wage, combined with the fact that the entire negotiation is not observable.\footnote{If there is no information sharing (i.e. $\rho_{i,j} = 0 \forall i,j$), or alternatively if the information shared includes all the details of the negotiation, rather than just the agreed-upon wage, then no wages less that $s'$ will be offered in equilibrium.} 

Under the minimum-wage restriction, it will never be worthwhile for the firm to reject the offer of $s'$ workers who are supposed to pool on-path. 

\begin{lemma}\label{lem:simple_pool}
Under the minimum-wage restriction, if a worker pools on-path then the firm never benefits from rejection of the initial offer $s'$.
\end{lemma}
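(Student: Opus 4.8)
The plan is to verify that neither firm type can gain from rejecting the pooling offer $s'$, treating the two types separately and accounting for both the direct bargaining continuation and the belief-updating channel created by information sharing.

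First I would dispose of the low type. Since output under $s'$ equals the offered wage, accepting $s'$ yields exactly $0$, and by the minimum-wage restriction every subsequent offer to this worker is at least $s'$, so the low type can never secure a strictly positive payoff from this worker in the first period, and delay only erodes matters via $\delta$. Combined with the fact, established above, that the low type's second-period payoff is $0$ for every worker regardless of that worker's belief, the low type is indifferent across all continuations and in particular gains nothing from manipulating any worker's beliefs. Hence rejection is weakly unprofitable for the low type.

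The substantive case is the high type. Accepting $s'$ in the first round yields $s'' - s'$ from this worker immediately, plus a second-period continuation computed at the on-path pooling belief $p$. I would bound the rejection payoff by its most favorable continuation: by the minimum-wage restriction the worker's second-round offer is at least $s'$, so the best first-period payoff the high type can extract from this worker after rejecting is $\delta(s'' - s') < s'' - s'$. It therefore remains to show that the remaining channels---the worker's own second-period belief and the belief of any worker $j$ who observes this worker's first-period outcome with probability $\rho_{ij}$---cannot compensate for this strict first-period loss.

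The key step, and the main obstacle, is to sign this belief-updating channel, and in particular to rule out that rejection lowers some worker's belief across the screening threshold $s'/s''$, which would switch that worker from screening to pooling and so raise the high type's payoff from $\delta(s''-s')$ to $s''-s'$. I would rule this out by a monotonicity argument: by the single-crossing logic underlying Lemma~\ref{lem:simple_singlecross}, rejecting the low offer $s'$ is weakly evidence of the $s''$ type, so any affected belief---whether the deviating worker's own second-period belief or the belief of an observing worker $j$---moves weakly toward $s''$ and never below $p$. Since the high type's per-worker second-period payoff is weakly decreasing in belief (a higher belief only makes the worker more willing to screen, yielding $\delta(s''-s') \le s''-s'$), every such channel is weakly negative, and the strict first-period loss $(1-\delta)(s''-s') > 0$ is decisive. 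The care required is in confirming that, under the minimum-wage restriction, no admissible continuation makes rejection look like low-type behavior and thereby reverses the sign of the belief channel.
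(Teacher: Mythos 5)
Your treatment of the low type and your first-period bound for the high type (accepting yields $s''-s'$ now versus at most $\delta(s''-s')$ after rejection) match the paper. The gap is in your final step, where you sign the belief channel: you assert that rejecting $s'$ is ``weakly evidence of the $s''$ type,'' so that beliefs can only move toward $s''$ and the belief channel is weakly negative. This is not justified. The solution concept is PBE and rejection of a pooling offer is off path, so beliefs after rejection are unrestricted; nothing forces them upward. Worse, the single-crossing logic of Lemma~\ref{lem:simple_singlecross} actually points the other way --- the high type is \emph{more} averse to delay, so rejection is, if anything, low-type-like behavior --- and the entire reputational force in this model is that the high type wants to be mistaken for the low type. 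The deviation you must defeat is therefore exactly the one you dismiss: the high type rejects $s'$, then accepts $s'$ in the second round, hoping that worker $i$ (off path) concludes it is the low type and pools at $s'$ in period~2, earning the firm $s''-s'$ there instead of $U_2(s'')=\delta(s''-s')$.

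The paper closes this by a quantitative comparison rather than a monotonicity claim: it grants the firm the \emph{best possible} belief outcome from rejection (worker $i$ pools in period~2, while observers of a second-round wage of $s'$ retain their priors, and a second-round wage above $s'$ only reveals the high type) and checks whether
\begin{equation*}
    \delta(s'' - s') + \beta(s'' - s') \geq (s'' - s') + \beta \delta(s'' - s'),
\end{equation*}
i.e.\ $\delta + \beta \geq 1 + \beta\delta$, which is equivalent to $(1-\beta)(1-\delta)\leq 0$ and fails for $\beta<1$, $\delta<1$. In words, the maximal second-period gain $\beta(1-\delta)(s''-s')$ is strictly smaller than the first-period delay cost $(1-\delta)(s''-s')$ because the gain is discounted by $\beta$. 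Your proof needs this comparison; without it the argument does not go through, since the belief channel can in fact be positive for the firm.
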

\begin{proof}
This is obvious for the low-type firm, so consider a high-type firm. By rejecting the initial offer of $s'$ from some pooling worker $i$, the high type firm can only hope to gain by convincing some set of workers not to screen in the second period. If the firm accepts a second-round offer $w > s'$ then all workers who observe this will infer that the firm is type $s'$ for sure, so this cannot be beneficial. If instead the firm accepts a second-round offer of $s'$ then all workers other than $i$ who observe this will make the equilibrium inference and retain their prior. The gains from convincing $i$ not to screen in the second period are sufficient to justify the delay in the first period from the first to the second round iff
\begin{equation*}
    \delta(s'' - s') + \beta(s'' - s') \geq (s'' - s') + \beta \delta(s'' - s'),
\end{equation*}
or equivalently $\delta + \beta \geq  1 + \beta \delta$. This holds with equality if $\beta = 1$, but is violated for any $\beta < 1$ and $\delta < 1$. Thus the firm will never find it beneficial to reject the initial offer. 
\end{proof}

\section{Equilibrium}\label{sec:equilibrium}

I turn now to characterizing equilibrium play in the full game. Consider first the incentives of workers who screen in the first period, and of the high-type firm confronted with screening offers. I will restrict attention to pure strategies. This is not without loss; there exist mixed strategy equilibria, and these may be better for workers than any pure-strategy equilibrium. I will discuss these equilbiria, as well as conditions under which they are better for workers, in Section \ref{sec:mixed_strategy}. However in a decentralized environment it is natural to expect pure strategy equilibrium to arise.   

Equilibrium will be characterized by the set $C$ of workers who screen in the first period, with the remainder pooling. Let $P^j(C)$, the \textit{observability} of $C$ to $j \not\in C$, be the probability that $j$ observes the period 1 wage of at least one worker in $C$. This is given by 
\begin{equation*}
    P^j(C) = 1 - \prod_{k \in C} (1 - \rho_{jk}).
\end{equation*}
Define $\bar{P}(C)$ to be the expected number of workers outside of $C$ who will observe a wage in $C$, which is given by
\begin{equation*}
    \bar{P}(C) = \sum_{j \in W \setminus C} P^j(C)
\end{equation*}

The following Lemma says that there are diminishing returns to observability in the size of the set $C$.\footnote{Submodularity of $\bar{P}^j$ is the key property of information sharing. In particular, we can relax the assumption that workers' observations are independent.} 

\begin{lemma}\label{lem:submodular}
$P^j(\cdot)$ is submodular, i.e. $P^j(A) + P^j(B) \geq P^j(A\cap B) + P^j(A \cup B)$. Moreover, $P^j(A) + P^j(B) > P^j(A\cap B) + P^j(A \cup B)$ if and only if the following conditions jointly hold
\begin{enumerate}
    \item $\exists k \in  A\cap B$ with $\rho_{jk} < 1$
    \item $\exists i \in A\setminus B$ with $\rho_{j,i} > 0$
    \item $\exists \ell \in B\setminus A$ with $\rho_{j,\ell} > 0$
\end{enumerate}
\end{lemma}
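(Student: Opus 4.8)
The plan is to reduce the claim to an elementary statement about the multiplicative ``survival'' function $g(C) = \prod_{k \in C}(1 - \rho_{jk})$, so that submodularity of $P^j = 1 - g$ becomes supermodularity of $g$. First I would rewrite the target inequality $P^j(A) + P^j(B) \geq P^j(A \cap B) + P^j(A \cup B)$ using $P^j(C) = 1 - g(C)$; the four ``$1$''s cancel and it collapses to
\begin{equation*}
g(A \cap B) + g(A \cup B) \geq g(A) + g(B).
\end{equation*}

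Next I would exploit the fact that $g$ factors over disjoint unions. Writing $I = A \cap B$, $A' = A \setminus B$, and $B' = B \setminus A$ (three pairwise disjoint sets with $A = I \cup A'$, $B = I \cup B'$, and $A \cup B = I \cup A' \cup B'$) and setting $x = g(I)$, $y = g(A')$, $z = g(B')$, the four quantities become $g(A \cap B) = x$, $g(A \cup B) = xyz$, $g(A) = xy$, and $g(B) = xz$. The inequality above then reads $x + xyz \geq xy + xz$, i.e.
\begin{equation*}
x(1 - y)(1 - z) \geq 0.
\end{equation*}
Since every factor $1 - \rho_{jk}$ lies in $[0,1]$, we have $x, y, z \in [0,1]$, so all three factors on the left are nonnegative and submodularity follows immediately. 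This is really the whole content: there is no analytic obstacle, the only work is the multiplicative bookkeeping, and the empty-set boundary cases are handled automatically since an empty product equals $1$.

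For the strict inequality I would simply read off when each factor is strictly positive. The factor $x = \prod_{k \in A \cap B}(1 - \rho_{jk})$ is strictly positive precisely when \emph{every} $k \in A \cap B$ has $\rho_{jk} < 1$, which is the relevant reading of condition (1): a single $k \in A \cap B$ with $\rho_{jk} = 1$ forces $j$ to observe that $k$ for sure, making all four probabilities equal to $1$ and hence the difference zero. The factor $1 - y = 1 - \prod_{i \in A \setminus B}(1 - \rho_{ji})$ is strictly positive exactly when some $i \in A \setminus B$ has $\rho_{ji} > 0$ (condition (2)), and symmetrically $1 - z > 0$ exactly when some $\ell \in B \setminus A$ has $\rho_{j\ell} > 0$ (condition (3)). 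Conjoining the three gives the stated characterization.

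The step I would treat most carefully is this last one, since the ``if and only if'' is where the precise quantifiers matter: strictness in the first factor is a ``for all'' statement on $A \cap B$, whereas strictness in the other two is an ``exists'' statement on the symmetric differences, and I would want to verify each direction against the degenerate cases where $A \cap B$, $A \setminus B$, or $B \setminus A$ is empty (the corresponding product being $1$). The inequality itself, by contrast, I expect to be immediate once the factorization $x(1-y)(1-z)$ is in hand.
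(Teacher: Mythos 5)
Your proof is correct and follows essentially the same route as the paper's: both reduce $P^j=1-g$ to the survival products, factor over $A\cap B$, $A\setminus B$, and $B\setminus A$, and collapse the inequality to $x(1-y)(1-z)\geq 0$. Your careful reading of condition (1) is also right and in fact fixes a small imprecision in the lemma as stated: strict positivity of the first factor requires $\rho_{jk}<1$ for \emph{every} $k\in A\cap B$ (vacuously true when $A\cap B=\varnothing$), a universal rather than existential quantifier, which the paper's proof glosses over with ``The result follows.''
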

\begin{proof}
Using the definition of $P^j(\cdot)$, we have  $P^j(A) + P^j(B) \geq P^j(A\cap B) + P^j(A \cup B)$ iff
\begin{equation*}
    \prod_{k \in A\cap B} (1 - \rho_{jk}) + \prod_{k \in A\cup B} (1 - \rho_{jk}) \geq \prod_{k \in A} (1 - \rho_{jk}) + \prod_{k \in B} (1 - \rho_{jk})
\end{equation*}
which can be written as
\begin{align*}
    \prod_{k \in A\cap B} (1 - \rho_{jk})\cdot &\left(\prod_{k \in A\setminus B} (1 - \rho_{jk}) \cdot \prod_{k \in B\setminus A} (1 - \rho_{jk}) + 1 \right)  \\&\geq \prod_{k \in A\cap B} (1 - \rho_{jk})\cdot\left(\prod_{k \in A\setminus B} (1 - \rho_{jk}) + \prod_{k \in B\setminus A} (1 - \rho_{jk}) \right)
\end{align*}
The result follows. 
\end{proof}
A sufficient condition for strict submodularity if $P^j$ for all $j$ is that $\rho_{i,j} \in (0,1)$ for all $j$. One important implication of submodular observability is that the payoffs of the high-type firm will be supermodular in the set of screening offers it accepts. This corresponds to the intuition discussed earlier; when there are many screening workers the probability that the wage of each screening worker reveals pivotal information to other workers is lower. Additionally, when the high type firm reveals its type to many workers there are fewer workers for whom others' wages convey new information. Lemma \ref{lem:simple_supermodular} formalizes this intuition. 

Let $C$ be the set of workers screening in equilibrium in the first round, with initial offers given by $\omega_1 = \{ w^i_1\}_{i\in C}$. Let $\pi(A|C, \omega)$ be the payoff of the type $s''$ firm that accepts $w^i_1$ iff $i\in A$ (and accepts $s''$ in the second round from workers in $C\setminus A$). This is given by 
\begin{equation*}
\begin{split}
    \pi(A|C,\omega_1) &= \sum_{i\in A} \left( s'' - w_1^i \right) \\ 
    &+ \sum_{j \in W\setminus C} \left( s'' - s' + \beta\left(1- P^j(A)\right)\left( P^j(C\setminus A) (s'' - s') + (1-P^j(C\setminus A))U_2(s'')\right) \right) \\
    &+ \sum_{j \in C\setminus A} \left( \delta(s'' - s') + \beta\left(1 - P^j(A) \right)(s'' - s') \right)
\end{split}
\end{equation*}

The set function $A \mapsto \pi(A|C, \omega)$ is supermodular if $\pi(A\cup B| C,\omega) + \pi(A\cap B|C,\omega) \geq \pi(A| C,\omega) + \pi(B|C,\omega)$ for all $A,B \subseteq C$. 

\begin{lemma}\label{lem:simple_supermodular}
$A \mapsto \pi(A|C,w)$ is supermodular, strictly so when $\rho_{ij} \in (0,1) \ \forall \ i,j$.\footnote{It is easy to see from Lemma \ref{lem:submodular} that weaker conditions for supermodularity of $\pi$ can be stated. }
\end{lemma}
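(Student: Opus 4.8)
\section*{Proof proposal}

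The plan is to exploit the multiplicative structure of observability. Write $Q^j(A) := 1 - P^j(A) = \prod_{k\in A}(1-\rho_{jk})$, the probability that $j$ observes \emph{none} of the period-1 wages in $A$. The crucial facts are that $Q^j$ is multiplicative on disjoint unions — in particular $Q^j(A)\,Q^j(C\setminus A) = Q^j(C)$ for every $A\subseteq C$ — that it is nonincreasing in $A$, and that $Q^j = 1-P^j$ is supermodular precisely because $P^j$ is submodular (Lemma \ref{lem:submodular}). I will verify supermodularity of $\pi$ through the equivalent condition of increasing differences: for $i\in C\setminus A$ and $A\subseteq A'\subseteq C\setminus\{i\}$, I will show the marginal gain $\pi(A\cup\{i\}|C,\omega_1) - \pi(A|C,\omega_1)$ is nondecreasing as $A$ grows.

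First I would simplify the middle sum. For $j\in W\setminus C$, substituting $1-P^j(A)=Q^j(A)$ and $1-P^j(C\setminus A)=Q^j(C\setminus A)=Q^j(C)/Q^j(A)$ and expanding, the product $Q^j(A)\,Q^j(C\setminus A)=Q^j(C)$ is constant in $A$, so the entire $j$-th summand (including the $U_2(s'')$ term) collapses to an affine function $\mathrm{const}_j + \beta(s''-s')\,Q^j(A)$. The first sum $\sum_{i\in A}(s''-w_1^i)$ is modular, and the third sum has summands $\delta(s''-s')+\beta(s''-s')\,Q^j(A)$.

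Next I would compute the increment directly. When $i$ is moved from $C\setminus A$ (rejected) into the accepted set, worker $i$'s contribution shifts from the third sum to the first, and every remaining $Q^j(A)$ is multiplied by $(1-\rho_{ji})$, i.e. $Q^j(A\cup\{i\})-Q^j(A) = -\rho_{ji}Q^j(A)$. Because $A\cup\{i\}\subseteq C$, the residual sums over $W\setminus C$ and over $C\setminus(A\cup\{i\})$ fuse into a single sum over $W\setminus(A\cup\{i\})$, yielding
\begin{equation*}
\pi(A\cup\{i\}|C,\omega_1)-\pi(A|C,\omega_1) = (s''-w_1^i) - \delta(s''-s') - \beta(s''-s')\Big(Q^i(A) + \sum_{j\in W\setminus(A\cup\{i\})}\rho_{ji}Q^j(A)\Big).
\end{equation*}
From here increasing differences can be read off: $(s''-w_1^i)$ and $-\delta(s''-s')$ are constant in $A$; $-\beta(s''-s')Q^i(A)$ is nondecreasing because $Q^i$ is nonincreasing; and the bracketed sum is nonincreasing in $A$ both because each $Q^j(A)$ is nonincreasing and because enlarging $A$ deletes (nonnegative) terms from the index set $W\setminus(A\cup\{i\})$, so its negative is nondecreasing. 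Hence the marginal gain is nondecreasing in $A$, which is exactly supermodularity. For the strict statement, when every $\rho_{ij}\in(0,1)$ the term $Q^i(A)$ strictly decreases whenever an element is added to $A$, making the increment strictly increasing and giving strict supermodularity for incomparable $A,B$.

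The main obstacle to watch is the third sum, whose index set $C\setminus A$ itself depends on $A$; this blocks a naive termwise appeal to submodularity of the $P^j$. Organizing the argument around the single marginal increment — and in particular tracking that worker $i$'s own term leaves the third sum while the residual sums fuse into $\sum_{j\in W\setminus(A\cup\{i\})}$ — is what keeps the index-set bookkeeping clean and lets the monotonicity of $Q^j$ do all the work.
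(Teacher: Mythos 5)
Your proof is correct, and it takes a genuinely different route from the paper's. The paper verifies $\pi(A\cup B)+\pi(A\cap B)\geq \pi(A)+\pi(B)$ directly by partitioning workers into $W\setminus C$, $C\setminus(A\cup B)$, $A\setminus B$, $B\setminus A$, and $A\cap B$, and reducing each block to the submodularity of $P^j$ (Lemma \ref{lem:submodular}) together with the identity $(1-P^j(X))P^j(C\setminus X)=\Pi^j(X)-\Pi^j(C)$; the cross-comparisons of the $A\setminus B$ and $B\setminus A$ blocks are what absorb the fact that the index set $C\setminus A$ moves with $A$. You instead pass to the equivalent increasing-differences formulation, use the same multiplicative identity $Q^j(A)Q^j(C\setminus A)=Q^j(C)$ to collapse each $W\setminus C$ summand to an affine function of $Q^j(A)$, and compute the single-element increment in closed form, after which everything follows from monotonicity of $Q^j$ and the shrinking index set $W\setminus(A\cup\{i\})$ --- Lemma \ref{lem:submodular} is never invoked. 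I checked the increment formula and it is right, including the fusing of the two residual sums (valid because $A\cup\{i\}\subseteq C$) and the migration of worker $i$'s own term out of the third sum, which supplies the $Q^i(A)$ that drives strictness. Your approach buys a more self-contained and arguably more transparent argument (the economic content of the increment --- what the firm gains and loses by letting one more worker screen --- is visible on the page), at the cost of not reusing the submodularity lemma; the paper's decomposition is more modular and parallels the structure reused later in Lemma \ref{lem:simple_pooldeviate}. Two small points: your strictness step, like the paper's, implicitly needs $\beta>0$ (otherwise every $A$-dependent term carries a zero coefficient), and strictness is of course only claimed for incomparable $A,B$, which you correctly flag.
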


\begin{proof}
For simplicity I will write $\pi(X)$ rather than $\pi(X|C,\omega)$. I wish to show that $\pi(A\cup B) - \pi(B) \geq  \pi(A) - \pi(A\cap B)$. In all cases, the payoff the firm derives from workers in $A\cap B$ is unchanged. I will consider separately the payoffs of workers in $W\setminus C$, $C\setminus (A \cup B)$, $A \setminus B$ and $B\setminus A$.

Consider the payoff derived from workers in $W \setminus C$. Let $\Pi^j(X) = \prod_{k\in X}(1-\rho_{jk})$. Supermodularity of the $W\setminus C$ component of firm payoffs will follow from supermodularity of $X \mapsto 1 - P^j(X)$ and $ X \mapsto (1-P^j(X))P^j(C\setminus X) $. The former follows immediately from supermodularity of $P^j$. The expression $(1-P^j(X))P^j(C\setminus X)$ can be written as $\Pi^j(X) - \Pi^j(C)$, so supermodularity of $ X \mapsto (1-P^j(X))P^j(C\setminus X) $ follows from that of $\Pi^j$ (see Lemma \ref{lem:submodular}).

Now for the $C\setminus (A\cup B)$ component of $\pi(X)$. This follows immediately from supermodularity of $P^j$ (strictly when the conditions given in Lemma \ref{lem:submodular} are satisfied). 

Now consider the $A\setminus B$ component of payoffs. In both $\Pi(A\cup B)$ and $\Pi(A)$, these workers are successfully screening, so they generate the same payoff for the firm. Supermodularity here will follow by showing that the $A\setminus B$ component of $\pi(B)$ is less than that of $\pi(A\cap B)$. This follows from the fact that $P^j(B) \geq P^j(A)$, with strict inequality under the stated conditions for strict supermodularity. 

Finally, consider the $B\setminus A$ component of payoffs. Since these workers successfully screen in both $\pi(A\cup B)$ and $\pi(B)$, we need only show that the $B\setminus A$ component of payoffs is greater in $\pi(A\cap B)$ than in $\pi(A)$. This follows from the same argument given for the $A\setminus B$ component above.  
\end{proof}

I turn now to characterizing the wage that each screening worker offers to the high type firm. The subtlety here is that the binding incentive constraint under which type $s''$ accepts a screening wage is not the ``single deviation'' of rejecting $i$ and accepting all other $k \in C$. Supermodularity of $A \mapsto \pi(A|C,\omega)$ implies that the binding incentive constraint for the high type firm will be that associated with rejecting all offers of workers in $C$. In other words, in equilibrium the high-type firm will be indifferent between accepting all initial offers in $C$ and rejecting all initial offers in $C$. This constraint constrains the first-period initial screening offers made by workers in $C$. 

To see why this is the case, consider an equilibrium in which the set of screening workers is $C$, and these workers make initial offers of $\omega = \{w_1^i \}_{i\in C}$. $C$ and $\omega$ fully characterize on-path play in equilibrium. For each $i\in C$, let 
\begin{equation*}
    \chi^i = \argmax_{X \subseteq C\setminus i}  \pi(X|C,\omega).
\end{equation*}
$\chi^i$ need not be single valued, but when it will not cause confusion I will discuss as if it is, and refer to this set as $X^i$. In words, $X^i$ is the set of screening offers that they high-type firm will accept conditional on rejecting $i$'s initial offer. 

\begin{proposition}\label{prop:simple_deviation}
Assume $\rho_{ij} \in (0,1)$ for all $i,j$. In any equilibrium characterized by $C, \omega$, it must be that $\pi(C|C,\omega) = \pi(\varnothing|C,\omega)$, and $\varnothing \in \chi^i$ for all $i\in C$.
\end{proposition}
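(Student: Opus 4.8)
The plan is to combine two incentive requirements that any equilibrium $(C,\omega)$ must satisfy with the lattice structure that supermodularity imposes on the high-type firm's best responses.

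First I would pin down the two conditions. Since every worker in $C$ screens on path, the high-type firm must find it optimal to accept all of their initial offers, so $C \in \argmax_{A\subseteq C}\pi(A\mid C,\omega)$ and hence $\pi(C\mid C,\omega)\ge\pi(A\mid C,\omega)$ for every $A\subseteq C$; in particular $\pi(C\mid C,\omega)\ge\pi(\varnothing\mid C,\omega)$. Second, I would use that each screening worker sets its wage to extract as much as possible: worker $i$'s payoff is strictly increasing in $w_1^i$ as long as the firm keeps accepting, and raising $w_1^i$ by $\epsilon$ lowers $\pi(A\mid C,\omega)$ by exactly $\epsilon$ for every $A\ni i$ while leaving $\pi(A\mid C,\omega)$ unchanged for every $A\not\ni i$. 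Thus if $\pi(C\mid C,\omega)>\pi(\chi^i\mid C,\omega)=\max_{X\subseteq C\setminus i}\pi(X\mid C,\omega)$, worker $i$ could raise $w_1^i$ slightly, $C$ would remain a maximizer (the sets not containing $i$ do not move up), and $i$ would still be screened, contradicting optimality of $w_1^i$. Hence in equilibrium $\pi(C\mid C,\omega)=\pi(\chi^i\mid C,\omega)$ for every $i\in C$. Combined with the first condition, this means that for each $i\in C$ there is a global maximizer $X^i\in\argmax_{A\subseteq C}\pi(A\mid C,\omega)$ with $i\notin X^i$.

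The key step is to exploit Lemma \ref{lem:simple_supermodular} through the sublattice property of maximizers. Because $A\mapsto\pi(A\mid C,\omega)$ is supermodular, its set of maximizers is closed under intersection: if $\pi(X)=\pi(Y)$ equals the maximum, then supermodularity gives $\pi(X\cap Y)+\pi(X\cup Y)\ge\pi(X)+\pi(Y)$, and since neither term on the left can exceed the maximum, both must equal it, so $X\cap Y$ is again a maximizer. Applying this to the family $\{X^i\}_{i\in C}$ from the previous paragraph, $\bigcap_{i\in C}X^i$ is a maximizer of $\pi$. But this intersection is empty: any $j$ lying in it would in particular lie in $X^j$, contradicting $j\notin X^j$. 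Therefore $\varnothing$ is a global maximizer, i.e. $\pi(\varnothing\mid C,\omega)=\pi(C\mid C,\omega)$. Finally, since $\varnothing\subseteq C\setminus i$ and $\pi(\varnothing\mid C,\omega)=\pi(C\mid C,\omega)=\max_{X\subseteq C\setminus i}\pi(X\mid C,\omega)$, the empty set attains the maximum defining $\chi^i$, so $\varnothing\in\chi^i$ for every $i\in C$.

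The part that needs the most care is the worker-optimality argument of the first paragraph, which upgrades the firm's acceptance of $C$ from a weak inequality to the exact indifference $\pi(C\mid C,\omega)=\pi(\chi^i\mid C,\omega)$. One must verify that a marginal increase in $w_1^i$ moves the firm's payoffs exactly as claimed (uniformly down on sets containing $i$, unchanged on sets omitting $i$) and that it disturbs neither the screening incentives of the other workers nor the relevant off-path beliefs, so that $C$ genuinely remains the firm's on-path choice after the perturbation. Once this indifference is established the supermodularity/intersection argument is immediate; note that it uses only the (weak) supermodularity of Lemma \ref{lem:simple_supermodular}, with the strict version mattering only if one wished to conclude the stronger statement $\chi^i=\{\varnothing\}$ rather than merely $\varnothing\in\chi^i$.
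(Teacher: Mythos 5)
Your proposal is correct, and while it shares the paper's two ingredients---firm optimality giving $\pi(C)\ge\pi(A)$ for all $A\subseteq C$, and worker optimality (the ``raise $w_1^i$ slightly'' argument) forcing $\pi(C)=\pi(X^i)$ for some $X^i\in\chi^i$---it closes the argument by a genuinely different route. The paper uses \emph{strict} supermodularity to show that for any $k\in X^i$ one must have $X^k\subsetneq X^i$ (otherwise $k$ could profitably raise its offer), and then a finite-descent argument produces some $k$ with $X^k=\varnothing$; chaining the indifferences $\pi(X^i)=\pi(C)=\pi(X^k)=\pi(\varnothing)$ finishes the proof. You instead observe that each $X^i$ is a \emph{global} maximizer of $\pi(\cdot\mid C,\omega)$ over subsets of $C$, invoke the standard fact that the argmax set of a supermodular set function is closed under intersection, and note that $\bigcap_{i\in C}X^i=\varnothing$ since $i\notin X^i$. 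This is cleaner and, as you point out, needs only weak supermodularity, so the hypothesis $\rho_{ij}\in(0,1)$ is not really used in your version (the paper needs it for the nesting claim); the paper's approach buys the additional structural information that the sets $\{X^i\}$ are strictly nested, which is not needed for the stated conclusion. The one step you rightly flag as delicate---that perturbing $w_1^i$ upward lowers $\pi(A)$ by $\epsilon$ for $A\ni i$ and does not raise $\pi(A)$ for $A\not\ni i$---is handled in the paper by noting that rejection of the deviant offer can only be (weakly) worse for the firm than rejection on path, since on-path rejection already induces the most favorable inference (type $s'$); with that observation your indifference step goes through exactly as in the paper's Claim 3.
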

\begin{proof}
If $\rho_{ij} \in (0,1)$ for all $i,j$ then the conditions for strict supermodularity of $\pi(\cdot|C,\omega)$ are satisfied for all non-nested $A,B$. I will first show that $\varnothing \in \chi^i$ for all $i \in C$.

\noindent\textit{Claim 1}. For any $i \in C$, any $X^i \in \chi^i$, any $k \in X^i$ and any $X^k \in \chi^k$, we have $X^k \subsetneq X^i$. 

Let $X^i \neq \varnothing$, and let $k \in X^i$. By definition of $X^i$, $\pi(X^i|C,\omega) \geq \pi(X^i\cap X^k|C,\omega)$. But then Lemma \ref{lem:simple_supermodular} implies $\pi(X^i \cup X^k|C,\omega) \geq \pi(X^k|C,\omega)$. Suppose $X^i$ and $X^k$ satisfy the conditions for strict supermodularity of $\pi$. Then we have $\pi(X^i \cup X^k|C,\omega) > \pi(X^k|C,\omega)$. Consider what happens if $k$ offers a slightly higher wage. This can only make the payoff to the firm from rejecting $k$'s offer worse than before $k$'s deviation, since by rejecting the original offer of $w_1^k$ the firm would have signaled that it was type $s'$. If the type $s''$ firm accepts $k$'s offer then at worst worker $k$, and any other worker who observes $k$'s wage, will believe that the firm is type $s''$ for sure. But this is the same belief they would hold if the firm accepted $w^k_1$. Thus by raising it's wage offer slightly, worker $k$ makes only a small change to the firm's payoff from accepting the offers of workers in $X^i \cup X^k$. Since $\pi(X^i \cup X^k|C,\omega) > \pi(X^k|C,\omega)$, the firm will still accept $k$'s offer. But then $\omega$ and $C$ cannot characterize an equilibrium, since $k$ has a profitable deviation. The conclusion is that $X^i$ and $X^k$ cannot satisfy the conditions for strict supermodularity of $\pi$. However strict supermodularity holds for any non-nested $X^i,X^j$ when $\rho_{ij} \in (0,1)$ for all $i,j$. This proves Claim 1. 

\noindent\textit{Claim 2}. There exists $k \in C$ such that $X^k = \varnothing$. 

Let $i$ be an arbitrary worker with $X^i \in \chi^i$ such that $X^i \neq \varnothing$, and let $j \in X^i$. Then by Claim 1, $X^j \subsetneq X^i$ for all $X^j \in \chi^j$ (the inclusion must be strict since $j \in X^i$, and $j \not\in X^j$ by definition). Then since there are finitely many workers there must be $k \in X^i$ such that $X^k = \varnothing$. 

\textit{Claim 3}. Even without conditions on $\rho$, for any equilibrium $C,\omega$ we will have $\pi(X^i|C,\omega) =\pi(C|C,\omega)$ for all $i$.

To see this, first note that optimality of $w^i_1$ for worker $i$ implies $\pi(C|C,\omega) \leq \pi(X^i|C,\omega)$. Moreover, firm optimality implies $\pi(C|C,\omega) \geq \pi(A|C,\omega)$ for all $A\subseteq C$. Thus $\pi(C|C,\omega) = \pi(X^i|C,\omega)$. This proves Claim 3. 

By Claims 2 and 3, for any $i$ we have $\pi(X^i|C,\omega) = \pi(C|C,\omega) = \pi(X^k|C,\omega) = \pi(\varnothing|C,\omega)$, so $\varnothing \in \chi^i$, as desired. 
\end{proof}

One implication of Proposition \ref{prop:simple_deviation} is to pin down the sum of screening initial offers. Let $\bar{W}(C)$ be the sum of the equilibrium initial offers of workers in $C$, which is defined by the high-type firm's indifference condition
\begin{equation*}
\begin{split}
    |C|&\left(s'' - \frac{\bar{W}(C)}{|C|} \right) + (|W|-|C|)(s'' - s') +  \beta \left(1- \bar{P}(C) \right)U_2(s'') \\
    &= |C|\left( \delta(s'' - s') + \beta(s'' - s') \right) + (|W| - |C|)(s'' - s') + \beta\bar{P}(C)(s'' - s') + \beta(1-\bar{P}(C))U_2(s'').
\end{split}
\end{equation*}
Thus $\bar{W}(C)$ is given by
\begin{equation}\label{eq:bar_W}
    \bar{W}(C) = |C|\left( s'' - (\delta + \beta)(s'' - s') \right) - \beta \bar{P}(C)(s''-s'). 
\end{equation}

When information sharing is symmetric ($\rho_{ij} = \rho \in (0,1)$ for all $i,j$), the average screening worker initial offer is increasing in $|C|$. 

\begin{lemma}\label{lem:simple_avg_wage}
Assume $\rho_{ij} = \rho \in (0,1)$ for all $i,j$. Then $C \mapsto \bar{W}(C)/|C|$ is increasing in $|C|$. 
\end{lemma}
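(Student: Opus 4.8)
The plan is to use the symmetry to collapse the claim to a one-variable monotonicity statement and then verify that statement by a short averaging argument. Write $N = |W|$, $n = |C|$, and $q = 1-\rho \in (0,1)$. Under symmetric sharing $P^j(C) = 1 - q^{\,n}$ for every $j \notin C$, so $\bar P(C) = (N-n)(1-q^{\,n})$ depends on $C$ only through its cardinality; hence so does the average offer, and by \eqref{eq:bar_W}
\[
    \frac{\bar W(C)}{|C|} \;=\; \Bigl(s'' - (\delta+\beta)(s''-s')\Bigr) \;-\; \beta(s''-s')\,\frac{\bar P(C)}{n}.
\]
The bracketed quantity is a constant and $\beta(s''-s') > 0$, so $\bar W(C)/|C|$ is increasing in $n$ precisely when $g(n) := \bar P(C)/n = (N-n)\tfrac{1-q^{\,n}}{n}$ is decreasing in $n$. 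This reduction is essentially the whole lemma.

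I would then exhibit $g$ as a product of two positive, decreasing factors on $n \in \{1,\dots,N\}$. The first factor, $N-n$, is positive for $n < N$ and strictly decreasing. The second factor, $h(n) := \tfrac{1-q^{\,n}}{n}$, is the crux, and I would handle it by writing $h(n) = (1-q)\cdot \tfrac1n\sum_{k=0}^{n-1} q^{k}$, i.e. $(1-q)$ times the running mean of the strictly decreasing sequence $(q^{k})_{k\ge 0}$. Appending a term $q^{\,n}$ that lies strictly below the current mean lowers the mean, so this running average is strictly decreasing in $n$, and therefore so is $h$. (If an algebraic check is preferred, $h(n) > h(n+1)$ is equivalent to $(n+1)q^{\,n} - n q^{\,n+1} < 1$, whose left-hand side has derivative $n(n+1)q^{\,n-1}(1-q) > 0$ in $q$ and equals $1$ at $q=1$, hence is below $1$ on $(0,1)$.)

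Finally I would combine the factors: for $1 \le n \le N-1$, positivity and monotonicity give $g(n+1) = (N-n-1)h(n+1) \le (N-n)h(n+1) \le (N-n)h(n) = g(n)$, strictly for $n \le N-2$ since both factors strictly decrease, and at the endpoint $n=N-1\to N$ because $g(N)=0 < g(N-1)$. Thus $g$ is strictly decreasing, and by the reduction $\bar W(C)/|C|$ is (strictly) increasing in $|C|$. The only genuine obstacle is the monotonicity of $h$; the rest is bookkeeping, so I would take care to present the averaging argument (or its one-line algebraic equivalent) cleanly and to record the strictness that $\rho \in (0,1)$ delivers.
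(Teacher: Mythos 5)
Your proof is correct, and it reaches the key inequality by a genuinely different route than the paper. Both arguments make the same initial reduction — by symmetry $\bar P(C)=(|W|-|C|)\bigl(1-(1-\rho)^{|C|}\bigr)$, so the lemma is equivalent to $n\mapsto \frac{(N-n)(1-q^{\,n})}{n}$ being decreasing — but they diverge in how they verify that monotonicity. The paper treats $n$ as continuous, differentiates, and reduces to the inequality $\frac{N}{n(N-n)} > -\ln(1-\rho)\,\frac{(1-\rho)^n}{1-(1-\rho)^n}$, which it settles by bounding the right-hand side above by $1/n$ (its supremum as $\rho\to 0$). You instead stay in the discrete domain and factor the expression as $(N-n)\cdot h(n)$ with $h(n)=\frac{1-q^{\,n}}{n}=(1-q)\cdot\frac1n\sum_{k=0}^{n-1}q^k$, so that $h$ is a running mean of a strictly decreasing sequence and hence strictly decreasing; the product of two positive decreasing factors then does the rest. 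Your version is more elementary (no calculus, no asymptotic bound on the derivative condition) and makes the strictness delivered by $\rho\in(0,1)$ completely transparent; the paper's version generalizes more readily to settings where one wants to treat $|C|$ or $\rho$ as continuous parameters, as it does elsewhere. Both are complete; the endpoint case $n=N-1\to N$ that you handle separately via $g(N)=0$ is in fact already covered by the strict decrease of $h$ together with $N-n>0$, so that remark is harmless but unnecessary.
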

\begin{proof}
Recall that 
\begin{equation*}
    \frac{\bar{W}(C)}{|C|} = \left( s'' - (\delta + \beta)(s'' - s') \right) - \beta \frac{\bar{P}(C)}{C}(s''-s').
\end{equation*}
The claim follows if $C \mapsto \bar{P}(C)/|C|$ is decreasing in $|C|$. Under symmetry, $\bar{P}(C) = (|W| - |C|) \left(1-(1-\rho)^{|C|} \right) $. Then the claim is that the function $|C| \mapsto \frac{|W|-|C|}{|C|}(1- (1-\rho)^{|C|})$ is decreasing. The derivative of this function is negative iff
\begin{equation*}
    \dfrac{|W|}{|C|(|W| - |C|)} > -\ln(1-\rho)\dfrac{(1-\rho)^{|C|}}{1 - (1-\rho)^{|C|}}.
\end{equation*}
The right hand side of this expression is bounded above by $1/|C|$ (the limit as $\rho \rightarrow 0$). The inequality follows. 
\end{proof}

A necessary condition for screening to be possible with set $C$ is that $\bar{W}(C)/|C| \geq s'$, otherwise the low type firm will want to accept initial screening offers as well. This holds iff
\begin{equation}\label{eq:simple_Wbound}
    (1 - \delta - \beta) \geq \beta\frac{\bar{P}(C)}{|C|}.
\end{equation}
The right hand side of (\ref{eq:simple_Wbound}) non-negative, so it must be that $\beta + \delta \leq 1$. Under symmetric information sharing $C \mapsto \bar{P}(C)/|C|$ is decreasing (as shown in the proof of Lemma \ref{lem:simple_avg_wage}), so (\ref{eq:simple_Wbound}) holds for $C$ above some cut-off. 

A worker who is supposed to screen in equilibrium can always deviate to pooling (knowing that no other workers will be able to screen, as we will see below). 
\begin{lemma}\label{lem:simple_screendeviate1}
Let $C$ be the equilibrium set of screening workers. If $i\in C$ deviates and makes an initial offer of $s'$ then both firms accept. 
\end{lemma}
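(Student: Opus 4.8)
The plan is to handle the two firm types separately, exploiting the additive separability of firm payoffs across workers together with the fact that a pooling wage of $s'$ carries no information to the workers who might observe $i$'s first-period wage.

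First I would dispose of the low type. Since the low type's per-worker output is $s'$, accepting $i$'s offer yields surplus $s' - s' = 0$ from worker $i$, which is the most the low type can ever obtain, as it rejects any offer above $s'$ and so never earns positive surplus from any worker in any period. Rejecting $s'$ therefore cannot raise the low type's payoff from $i$ above $0$; and because the low type rejects every future offer exceeding $s'$, its continuation payoff is $0$ regardless of the belief that $i$ (or an observer of $i$'s wage) forms. Hence acceptance is weakly optimal for the low type, and I take it to accept.

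The substance is the high type, for which I would show that acceptance strictly dominates any form of rejection. Accepting yields from worker $i$ a first-period surplus of $s'' - s'$ plus, since a pooling wage of $s'$ leaves $i$'s prior belief $p$ intact, a discounted second-period continuation of $\beta U_2(s'')$; the total $(s'' - s') + \beta U_2(s'')$ is strictly positive. Moreover, because both types pay $s'$ to a pooling worker, the wage $s'$ observed by any worker in $W \setminus (C \setminus i)$ reveals nothing, so accepting $i$'s offer creates no information externality and leaves the firm's problem with respect to the remaining workers $C \setminus i$ unchanged. By contrast, consider rejection. Since the low type strictly prefers to accept $s'$, I would assign the off-path belief that a firm rejecting $s'$ is the high type, so that after a rejection worker $i$ (and any observer of $i$'s resulting outcome) infers $s''$ and offers $s''$ thereafter. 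The high type then earns $0$ from $i$ in both periods, and, should agreement eventually be reached at a wage above $s'$, observers of $i$'s wage likewise infer the high type, which can only lower the payoffs extracted from the other workers. Thus rejection delivers a weakly lower direct payoff from $i$ and a weakly worse informational position, so it is strictly dominated by acceptance.

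The main obstacle I anticipate is ruling out an indirect benefit of rejection operating through the continuation with the other screening workers $C \setminus i$. The observation that neutralizes this concern is \emph{informational separability}: acceptance of the pooling wage $s'$ emits no signal about the firm's type, so it does not perturb the privately negotiated, hence unobserved, interactions with $C \setminus i$, whereas any deviation to rejection can only inject additional type-revealing information about the firm into the pool observed by the remaining workers. Combined with the strictly positive direct gain from accepting, this establishes that both types accept, as claimed.
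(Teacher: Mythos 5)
There is a genuine gap in the high-type/rejection branch. You establish that rejection is unprofitable only after \emph{assigning} the off-path belief that a firm rejecting $s'$ is type $s''$. But the lemma must hold in \emph{every} PBE with screening set $C$, i.e.\ for every admissible specification of off-path beliefs; it is later used to pin down the deviation payoff of a screening worker, so one cannot simply select the beliefs that make the conclusion immediate. (Your justification for that selection is also shaky: under your own proposed continuation the low type is indifferent, not strict, so a "who benefits from rejecting" argument does not deliver the belief you want.) The paper instead bounds the firm's \emph{best-case} gain from rejection: the most the high type could hope for is to reject $i$'s first-round offer, accept a second-round offer of $s'$, and thereby induce $i$ to pool rather than screen in period 2. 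That yields at most $\delta(s''-s') + \beta(s''-s')$ versus $(s''-s') + \beta\delta(s''-s')$ from accepting, and $\delta+\beta < 1+\beta\delta$ for $\beta,\delta<1$, so rejection fails under \emph{any} off-path beliefs. This quantitative comparison is the substance of the lemma and is absent from your argument.

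A secondary error: you assert that a wage of $s'$ for worker $i$ "reveals nothing" to observers because both types pay $s'$ to a pooling worker. But $i\in C$ is expected to \emph{screen} in equilibrium, and other workers do not observe $i$'s deviation --- only the realized wage. Under the equilibrium inference, a wage of $s'$ for $i$ means the firm rejected $i$'s screening offer, so observers conclude the firm is type $s'$. This happens to be even \emph{better} for the high-type firm than "no information," so your acceptance-side conclusion survives, but the informational premise is wrong and would matter in nearby arguments (it is exactly this inference that drives Lemma \ref{lem:simple_screendeviate2}).
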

\begin{proof}
This is obvious for the type $s'$ firm. Consider the type $s''$ firm. If the firm accepts $s'$, any other workers for whom $i$'s wage is pivotal information will infer that the firm is type $s'$, which is the best outcome for the firm. Thus the only reason to reject $s'$ is to convince $i$ to pool rather than screen in the second period. But, as shown in the proof of Lemma \ref{lem:simple_pool}, this is not sufficient to justify delay in the first period, so the high-type firm accepts the initial offer of $s'$. 
\end{proof}

If a screening worker deviates to pooling, they know that the firm will reject the initial offers of all other workers in $C$. This is important because it implies that the incentive of workers to screen depends on $C$ and $\rho$ only through $\bar{W}(C)$
\begin{lemma}\label{lem:simple_screendeviate2}
Assume $\rho_{ij} = \rho \in (0,1)$ for all $i,j$. In a symmetric equilibrium, if a worker in $C$ makes an initial offer of $s'$, the high-type firm will reject the initial offers of all other workers in $C$.
\end{lemma}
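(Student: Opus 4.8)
The plan is to show that, once $i$ switches to the pooling offer $s'$, the high-type firm's optimization over which of the remaining screening offers to accept is essentially the same optimization that Proposition \ref{prop:simple_deviation} already resolved in favor of rejecting everything. First I would invoke Lemma \ref{lem:simple_screendeviate1}: after $i$ offers $s'$, both firm types accept, so $i$ is paid $s'$ in the first period and becomes a pooling worker. It then remains to analyze the high-type firm's choice of which subset $A$ of the offers $\{w_1^k\}_{k \in C'}$, with $C' := C \setminus \{i\}$, to accept; write $\hat{\pi}(A)$ for the resulting payoff. The goal is to prove that $\varnothing$ maximizes $\hat{\pi}$ over $A \subseteq C'$.

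The key step is a reduction of $\hat{\pi}$ to the equilibrium payoff $\pi(\cdot|C,\omega)$. I would argue that $\hat{\pi}(A)$ differs from $\pi(A|C,\omega)$ (with $i \in C \setminus A$, i.e. $i$ treated as a rejected screener) only in the contribution of worker $i$ itself. The point is that any other worker $j$ who observes $i$'s first-period wage sees the wage $s'$ and, reading it as in the $C\setminus A$ terms of $\pi(A|C,\omega)$ as the wage of a screening worker whose offer was rejected, infers that the firm is the low type; hence the contributions of all workers other than $i$ are identical to those in $\pi(A|C,\omega)$. Worker $i$'s own contribution, by contrast, changes from the rejected-screener term $\delta(s''-s') + \beta(1-P^i(A))(s''-s')$ to the pooling-worker term $(s''-s') + \beta(1-P^i(A))[P^i(C'\setminus A)(s''-s') + (1-P^i(C'\setminus A))U_2(s'')]$, since $i$, having pooled, now retains the prior unless it observes an accepted (high) or a rejected screening wage. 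Writing $\Delta_i(A)$ for this difference, I would use the multiplicativity identity $(1-P^i(A))(1-P^i(C'\setminus A)) = \prod_{k\in C'}(1-\rho_{ik}) = 1 - P^i(C')$, valid because $A$ and $C'\setminus A$ partition $C'$, to conclude that $\Delta_i(A)$ does not depend on $A$. Consequently $\hat{\pi}(A) - \hat{\pi}(\varnothing) = \pi(A|C,\omega) - \pi(\varnothing|C,\omega)$ for every $A \subseteq C'$.

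With this identity in hand the conclusion is immediate: Proposition \ref{prop:simple_deviation} gives $\pi(\varnothing|C,\omega) = \pi(C|C,\omega) = \max_{X\subseteq C}\pi(X|C,\omega)$, so the right-hand side is nonpositive, and therefore $\hat{\pi}(\varnothing) \geq \hat{\pi}(A)$ for all $A\subseteq C'$; rejecting every remaining screening offer is optimal. To upgrade ``optimal'' to ``will reject,'' I would note that under $\rho_{ij} = \rho \in (0,1)$ the strict supermodularity of $\pi(\cdot|C,\omega)$ (Lemma \ref{lem:simple_supermodular}), together with the symmetry of the equilibrium offers, makes $\pi(A|C,\omega) < \pi(\varnothing|C,\omega)$ strict for every nonempty $A \subsetneq C$, so that $\varnothing$ is the unique maximizer of $\hat{\pi}$.

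I expect the main obstacle to be the careful bookkeeping of information flows behind the reduction, in particular pinning down the off-path belief of a worker who observes $i$'s out-of-equilibrium wage $s'$ and verifying that the $A$-dependence of worker $i$'s contribution cancels via the partition identity above. It is worth checking that the conclusion is robust to the belief specification: if instead an observer treats $i$'s pooling wage as uninformative (retaining the prior), the other workers' contributions change, but, by the same partition identity, only by a constant in $A$, so the marginal comparison $\hat{\pi}(A) - \hat{\pi}(\varnothing)$, and hence the conclusion, is unchanged.
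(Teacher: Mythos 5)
Your proof is correct, but it takes a genuinely different route from the paper's. The paper compares the firm's post-deviation choice to the counterfactual symmetric equilibrium with screening set $C\setminus i$: it invokes \Cref{prop:simple_deviation} in that smaller equilibrium, then argues that the deviation scenario makes acceptance of any $A$ even less attractive, both because the offers in $A$ are higher (via the monotonicity of $X\mapsto \bar{W}(X)/|X|$ from \Cref{lem:simple_avg_wage}) and because observers who believe $i$ is screening read $i$'s wage of $s'$ as proof of the low type, which acceptance of $A$ would squander. You instead compare directly to the on-path payoff function $\pi(\cdot|C,\omega)$ of the \emph{original} equilibrium, treating $i$ as a rejected screener, and show via the partition identity $(1-P^i(A))(1-P^i(C'\setminus A))=1-P^i(C')$ that the two payoffs differ only by a constant in $A$; \Cref{prop:simple_deviation} then finishes the argument. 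Your reduction is an exact identity rather than a pair of directional comparisons, so it is cleaner, it dispenses with \Cref{lem:simple_avg_wage} and with any appeal to the hypothetical $C\setminus i$ equilibrium, and (for the weak-preference conclusion) it does not actually use $\rho_{ij}=\rho$; the paper's route, by contrast, delivers strict preference immediately from the strict wage comparison $\frac{|C|-1}{|C|}\bar{W}(C)>\bar{W}(C\setminus i)$. Two small points to tighten: your bookkeeping correctly requires that observers treat $i$'s off-path wage $s'$ exactly as they would a rejected screener's wage (so that $P^j((C'\setminus A)\cup\{i\})=P^j(C\setminus A)$), which you flag and whose robustness you check; and your final strictness claim, that symmetry plus strict supermodularity forces $\pi(A|C,\omega)<\pi(\varnothing|C,\omega)$ for all nonempty $A\subsetneq C$, is true but deserves a line --- e.g., with symmetric offers $\pi$ depends only on $|A|$, strict supermodularity makes this function of $|A|$ strictly convex, and a strictly convex function equal at $0$ and $|C|$ is strictly smaller in between.
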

\begin{proof}
Let $C$ be the equilibrium screening set. Since the equilibrium is symmetric by assumption, each worker in $C$ makes an initial offer of $\bar{W}(C)/|C|$. Suppose $i$ deviates and offers $s'$ initially. Compare the payoff to the high-type firm from continuing to accept the initial offers from some set $A \subseteq C\setminus i$ to $\pi\left(A|C\setminus i, \bar{W}(C\setminus i)/(|C| - 1)\right)$ (i.e. the payoff of accepting only initial offers from $A$ in a symmetric \textit{equilibrium} in which the screening set is $C \setminus i$). The only difference between these two scenarios, aside from the different initial offers in $A$, is the payoff obtained from workers in $W \setminus C$ who observe the wage of $i$ and no wages from $A$. When such workers believe that $C$ is pooling, they will infer nothing after observing that $i$'s wage was $s'$. On the other hand, when these workers believe that $s'$ is screening, they will infer that the firm is type $s'$. Thus the firm has more to lose by accepting the initial offers in $A$ in the latter case than in the former (strictly more iff uninformed workers screen in the second period). Moreover, since $X \mapsto \bar{W}(X)/|X|$ is decreasing, $\frac{|C|-1}{|C|} \bar{W}(C) > \bar{W}(C\setminus i)$. When the equilibrium screening set is $C\setminus i$, the high-type firm prefers rejecting all initial offers to accepting only those in $A$. Therefore the high-type firm will also prefer to reject all offers in $A$ when firm $i$ deviates to pooling. 
\end{proof}

In summary, the equilibrium incentive constraints in a symmetric equilibrium with symmetric information sharing are as follows:
\begin{itemize}
    \item The high type firm must be indifferent between accepting all initial offers from workers in $C$ and rejecting all offers from workers in $C$. Thus the sum of these offers must be $\bar{W}(C)$.
    \item The low type firm will accept an offer if and only if it is $s'$. This implies that we must have $\bar{W}(C)/|C| > s'$, the condition for which is (\ref{eq:simple_Wbound}). 
    \item Workers in $C$ must be indifferent between screening and deviating to pooling. If they deviate they know that the firm will reject all other screening initial offers, so they will receive no information from other workers. 
\end{itemize}

I turn now characterizing the relevant deviation for pooling workers. In particular, I identify the highest period 1 initial wage offer that the high-type firm will accept from a pooling worker. Assume that there is an equilibrium with screening set $C$, and suppose $i\in W\setminus C$ deviates by making a wage offer greater than $s'$. Let $\Tilde{w}(C,i)$ be the highest wage offer that $i$ can make such that the high-type firm is indifferent between accepting all initial offers in $C\cup i$ and rejecting all such offers. $\bar{W}(C) + \Tilde{w}(C,i)$ may differ from $\bar{W}(C\cup i)$ because of the inferences made by workers when the firm rejects $i$'s initial offer; when $i$ is expected to screen in equilibrium, other workers who observe a wage of $s'$ for $i$ will infer that the firm is type $s'$, while when $i$ deviates from pooling to screening and is rejected, other workers make no inference upon observing a wage of $s'$ for $i$. As a result $\Tilde{w}(C,i)$ is given by
\begin{equation*}
    \Tilde{w}(C,i) = s'' - (\delta + \beta)(s''-s') - \beta\left( \bar{P}(C\cup i) - \bar{P}(C) \right)U_2(s'')
\end{equation*}

\begin{lemma}\label{lem:simple_pooldeviate}
Assume $\rho_{i,j} \in (0,1)$ for all $i,j$. In an equilibrium with screening set $C$, if a pooling worker $i$ makes an initial offer of $\hat{w}> s'$, the type $s''$ firm will accept iff $\hat{w} \leq \Tilde{w}(C,i)$. 
\end{lemma}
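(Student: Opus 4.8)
The plan is to treat $i$'s deviation as creating the effective screening set $C\cup i$ and to mirror the argument of Proposition~\ref{prop:simple_deviation}. First I would note that, by Lemma~\ref{lem:simple_singlecross}, an offer $\hat w>s'$ can only be accepted by the type $s''$ firm, so $i$'s deviation is a screening offer. The one new feature is the off-path belief: if the firm rejects $\hat w$, worker $i$ ends up with the wage $s'$, and since $i$ was expected to pool at $s'$, any worker observing this wage makes no inference about the firm's type. Let $\hat\pi(A)$ be the high type's payoff from accepting exactly the initial offers in $A\subseteq C\cup i$ (accepting $s''$ in the second round from the rest of $C$) under this belief. The goal is to reduce the firm's decision on $i$ to the single comparison between $\hat\pi(C\cup i)$ (accept everything) and $\hat\pi(\varnothing)$ (reject everything, i.e. mimic the low type with all workers).

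To justify this reduction I would first check that $\hat\pi$ is supermodular on $2^{C\cup i}$, strictly so under $\rho_{ij}\in(0,1)$. This follows from the proof of Lemma~\ref{lem:simple_supermodular}: the modified belief changes only worker $i$'s own continuation payoff and the information that $i$'s wage conveys to workers in $W\setminus(C\cup i)$, and both of these enter through the submodular maps $P^j$, so supermodularity is preserved. Given strict supermodularity, the marginal gain $\hat\pi(A\cup i)-\hat\pi(A)$ from accepting $i$ is increasing in $A$, so the firm's incentive to accept $i$ is strongest when it also accepts all of $C$; conversely, accepting all of $C$ is weakly optimal on-path by Proposition~\ref{prop:simple_deviation}. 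Rerunning the nesting argument of that proposition then shows that the firm's best response is all-or-nothing, so the type $s''$ firm accepts $i$'s offer if and only if $\hat\pi(C\cup i)\ge\hat\pi(\varnothing)$.

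It then remains to compute this indifference. This is the analogue of the accept-all/reject-all condition that defined $\bar W(C)$ in (\ref{eq:bar_W}), now for the set $C\cup i$ and with the modified belief. The only change relative to a genuine screening set $C\cup i$ is that, when the firm rejects everything, the $\bar P(C\cup i)-\bar P(C)$ workers in $W\setminus(C\cup i)$ who would have observed $i$'s wage learn nothing and retain their prior, contributing $U_2(s'')$ in the second period rather than the $(s''-s')$ that a rejected genuine screener's wage would have generated. Collecting terms, $\hat\pi(C\cup i)-\hat\pi(\varnothing)$ is affine and strictly decreasing in $\hat w$ and vanishes exactly at $\hat w=\tilde w(C,i)$; hence the firm accepts iff $\hat w\le \tilde w(C,i)$.

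The main obstacle is the reduction in the second step. Because a rejected $i$ reveals nothing, the restriction of $\hat\pi$ to subsets of $C$ is not literally the equilibrium payoff $\pi(\cdot\mid C,\omega)$, so I cannot simply quote Proposition~\ref{prop:simple_deviation}; I must verify that the belief change both preserves strict supermodularity and leaves the extreme responses optimal, so that the all-or-nothing conclusion survives. Once that is in hand, the threshold computation is routine.
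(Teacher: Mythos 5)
Your proposal is correct and follows essentially the same route as the paper: reduce the firm's problem to the accept-all versus reject-all comparison (the paper does this via \Cref{prop:simple_deviation} for the rejection branch and submodularity of $P^j$ for the acceptance branch, which is exactly the content of your supermodularity of $\hat\pi$ and the ``extreme responses remain optimal'' check you flag), and then read off the threshold from the indifference condition, with the key observation---which you identify correctly---that a rejected deviator's wage of $s'$ conveys no information, so the relevant cutoff is $\Tilde{w}(C,i)$ rather than $\bar{W}(C\cup i)-\bar{W}(C)$.
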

\begin{proof}
Suppose that the type $s''$ firm rejects $i$'s offer. Then $i$ will believe that the firm is type $s'$ as long as it does not observe evidence to the contrary. Since the second period payoff to the firm from a worker who believes it is type $s'$ is weakly higher when workers maintain their prior, the incentive of the firm to screen with the workers in $C$ is lower than in equilibrium. By Lemma \ref{prop:simple_deviation}, for all $A \subseteq C$, on path the firm weakly prefers rejecting all initial offers in $A$ to accepting all initial offers in $A$, and only these initial offers. Thus following its rejection of $i$'s initial offer, the firm will also weakly prefer rejecting all initial offers in $A$ to accepting all initial offers in $A$, for all $A \subseteq C$. 

By the definition of $\Tilde{w}(C,i)$, the firm is indifferent between accepting all initial offers in $C \cup i$ and rejecting all such offers. Combined with the preceding paragraph, this implies that if $i$ makes an initial offer of $\Tilde{w}(C,i)$, the type $s''$ firm will accept. In particular, the firm will prefer accepting all initial offers in $C\cup i$ to rejecting all initial offers in $C\cup i$. (Moreover, given acceptance of $i$'s offer, the high-type firm is more willing to be screened by other workers than it was with no deviation by $i$.)

It remains to show that no higher offer by worker $i$ will be accepted. This would only be the case if and only if there was some set $A \subset C$ such that the firm strictly preferred accepting only the initial offers in $A\cup i$ to accepting all initial offers in $C \cup i$. The existence of such a set $A$ is ruled out however by submodularity of $P^j$ (Lemma \ref{lem:submodular}), using a similar argument to that given for Lemma \ref{lem:simple_supermodular}. (It does not follow directly from Lemma \ref{lem:simple_supermodular} because we are comparing a case in which $i$ pools to one in which it screens. However the difference in the argument is minor.)
\end{proof}

A full characterization of (pure-strategy) equilibrium requires characterizing the set of screening sets $C$ for which the pooling worker does not want to deviate as identified in Lemma \ref{lem:simple_pooldeviate}. Unfortunately this set does not have a convenient characterization in general. Rather than focus on this question here, I will return to the question of pooling worker incentives in Section \ref{sec:worker_heterogeneity}.

\subsection{Mixed strategies}\label{sec:mixed_strategy}
It is well known from the literature on dynamic mechanism design with limited commitment that the uninformed player, in this case the worker, may wish to induce the informed player to randomize in order to spread information revelation over time and reduce information rents.\footnote{See for example \cite{doval2019optimal}.} In the current setting such randomization would take the form of partial screening in the first period; the high type firm would reject an initial offer of $w_1 > s'$ with some interior probability. This will be beneficial to workers when $p$, the probability of a high-type firm, is sufficiently high. 

Let $C$ be the set of screening workers, and suppose that the high-type firm rejects the screening offers of all initial workers with probability $\sigma$. Let $w_1 > s'$ be the initial first-period offer made by workers in $C$. Since the low-type firm always strictly prefers to reject such an offer, acceptance of $w_1$ continues to reveal the firm as high-type. Randomization may however affect the behavior of workers who observe a worker in $C$ with a wage of $s'$. If, following such an observation, workers continue to make make initial second-period offers of $s'$ then information rents are unaffected and the high-type firm's randomization makes workers worse off. If, however, these workers screen in the second period instead of offering $s'$ then the payoff to the high-type firm of rejecting $w_1$ is lower. This reduces the information rents that must be paid to the high-type firm. 

Recall that the worker will prefer screening to pooling in the second period if their belief $q$ is such that $qs'' \geq s'$. Let $p^* = s'/s''$. Suppose that the high-type firm is indifferent between accepting all initial offers in $C$ and rejecting all initial offers in $C$. For randomization by the high-type firm to induce screening in the second period it must be that $p > p^*$. In particular, if with probability $\sigma(p) = \frac{p^*}{1-p^*}\frac{1-p}{p}$ the firm rejects all first-period initial screening offers then the belief of workers who observe a wage of $s'$ for a worker in $C$ will be exactly $p^*$. Assume that workers screen in this case. 

Let $\Tilde{W}(C)$ be the sum of first period initial offers of workers in $C$ such that the high-type firm is indifferent between accepting all initial offers in $C$ and rejecting all initial offers in $C$. Since rejection is now less profitable for the firm, relative to the equilibrium with deterministic acceptance, we will have $\Tilde{W}(C) > \bar{W}(C)$. In particular
\begin{equation*}
    \Tilde{W}(C) = |C|\left( s'' - \delta(s''-s') \right) - \beta \bar{P}(C)\delta (s''-s'),
\end{equation*}
so $\frac{\Tilde{W}(C)}{|C|} - \frac{\bar{W}(C)}{|C|} = \beta(s''-s') + \beta \frac{\bar{P}(C)}{|C|}(1-\delta)(s''-s')$.

Randomization allows screening workers to make higher initial offers. There are, however, two losses to worker welfare that arise as a result of firm randomization. First, the screening workers initial offer is sometimes rejected by the high-type firm. Second, all workers receive less precise information about the firm's type. Notice that $\sigma(p)$ is decreasing in $p$, and $\sigma(1) = 0$. When $p$ is high the high-type firm need only reject initial offers with a low probability to induce screening in the second period. On the other hand, $\Tilde{W}(C)$ is independent of $p$, so the gains from reducing information rents are constant. Therefore when $p$ is high enough even a small degree of randomization can induce a discrete jump in the payoff of screening workers. In particular, the payoff of a screening worker is given by
\begin{equation*}
    p(1-\sigma)\left( \frac{\Tilde{W}(C)}{|C|} + \beta s'' \right) + \left((1-p) + \sigma p \right)\left(\sigma s' + \beta s' \right).
\end{equation*}

Fixing $C$, pooling workers always suffer from randomization, but their losses will be more than offset by the gains to screening workers when $p$ is high. More importantly, it may be that without randomization screening is not possible, as information rents are too high. In this case the unique pure-strategy equilibrium involves no screening in the first period. If this is the case then randomization which allows some screening to occur can improve payoffs for \textit{all} workers.

\section{Worker heterogeneity}\label{sec:worker_heterogeneity}

Thus far I have assumed that workers are identical. This poses a problem for characterizing equilibrium. As Section \ref{sec:simple} explains, the set $C$ of screening workers fully characterizes the incentive constraints of high and low-type firms, as well as those of the screening workers (assuming symmetric screening offers). It is important for each worker to know the identity of every worker in $C$, as this determines their inferences both on and off path. However it is not obvious how asymmetries between identical workers, with some screening and others pooling, as well as the knowledge of which workers are doing which activity, would arise in a decentralized environment. 

In many settings workers may not in fact be identical. They may differ in their patience, degree of risk aversion, disutility of labor, outside option, etc. Heterogeneity among workers, particularly in observable characteristics, will facilitate coordination on the set of screening workers. Moreover, many sources of heterogeneity will make some workers more willing to engage in screening than others. In this section I characterize equilibrium in such settings, and study welfare comparative statics and implications for policy interventions. As might be expected, increasing information sharing between workers (i.e. increasing $\rho$) will adversely affect the welfare of screening workers by increasing in the information rents that must be paid to screen high-type firms. Moreover, when information sharing is too high no screening will be possible in equilibrium, reducing the welfare of pooling workers as well. The main insight however is that the positive externality imposed by one screening worker on another, embodied by the fact that $C \mapsto \bar{W}(C)/|C|$ is increasing in $|C|$, can be leveraged to offset the negative effects of increasing $\rho$. Thus policies aimed at increasing worker welfare should not only promote transparency, but also provide additional incentives for screening. 

Assume $W = P \cup S$, where $P$ is the set of workers who do not screen when their belief is $p$, their prior, even if all other workers are screening. These workers make a wage offer of $s'$ in both rounds of the first period. In the second period they will screen iff their belief that the firm is type $s''$ is greater than some cut-off $q' > p$. If they believe the firm to be type $s''$ for sure then they demand $s''$ in the second period. I am not assuming that these workers are ``behavioral'', in the sense of mechanically playing a given strategy, only that the threshold belief above which they find it optimal to screen is higher than that of workers in $S$. There are a number of reasons that this might be the case:  higher discounting between rounds, worse outside option (if within-round discounting is reinterpreted as a separation probability), greater risk aversion, cultural norms. I will not explicitly model any of these reasons here. Rather, I will explore the implications of such preferences on equilibrium outcomes, and discuss welfare comparative statics that apply to a broad class of preferences for workers in $P$. 

Workers in $S$ have payoffs as described in the previous section. I will focus on symmetric equilibria in which all workers in $S$ follow the same strategy. Note that if there is no equilibrium in which all workers in $S$ screen then there is no equilibrium in which any workers screen. Thus restricting attention to symmetric equilibria does not artificially disadvantage the workers. 

\begin{proposition}\label{prop:simple_eq}
Assume symmetric information sharing. There exists an integer $n(\rho,P)$ such that in the unique symmetric pure-strategy equilibrium (when it exists)
\begin{enumerate}
    \item if $|S| \geq n(\rho,P)$ all workers in $S$ screen in first period,
    \item if $|S| < n(\rho,P)$ no workers screen in the first period, 
\end{enumerate}
where $n(\rho,P)$ is increasing in both arguments. Moreover, a symmetric pure strategy equilibrium always exists if  $|S| \geq n(\rho,P)$, and exists when $|S| < n(\rho,P)$ if \textit{i}) uninformed workers do not screen in the second period (so $ps'' \leq ps'$), and/or \textit{ii}) $\bar{P}(S)/|S| \leq \delta\bar{P}(1) \equiv (|W|-1)\rho$.\footnote{Condition $ii.$ is a weak requirement; since $S \mapsto \bar{P}(S)/|S|$ is decreasing, it holds as long as $\delta$ and/or $|S|$ is not too low. As explained in the proof, this condition guarantees that if there is no equilibrium with screening set $S$ then there is an equilibrium with no screening. It is unnecessary if workers observe the entire negotiation of other workers, rather than just their realized wage. In that case a symmetric pure-strategy equilibrium always exists.}
\end{proposition}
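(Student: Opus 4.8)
The plan is to exploit the fact that, in a symmetric equilibrium, the only candidate screening sets are $C = S$ (all of $S$ screens) and $C = \varnothing$ (no one screens), since workers in $P$ pool in the first period by definition and all workers in $S$ play alike. First I would collect the constraints that make $C = S$ an equilibrium: (a) the high-type firm's indifference between accepting and rejecting all of $C$, which by Proposition \ref{prop:simple_deviation} pins down the offer sum $\bar{W}(S)$ via \eqref{eq:bar_W}; (b) the low-type IC that it accept only $s'$, i.e. $\bar{W}(S)/|S| \geq s'$, which is \eqref{eq:simple_Wbound} evaluated at $C=S$; (c) the $S$-worker IC that screening beat deviating to pooling, where by Lemma \ref{lem:simple_screendeviate2} the deviation destroys all screening and hence yields the constant, information-free payoff $s'(1+\beta)$; and (d) that no worker in $P$ gain by deviating to a first-period screening offer, which follows from Lemma \ref{lem:simple_pooldeviate} together with the defining property of $P$ (higher screening threshold $q' > p$), so that a $P$-worker strictly prefers pooling whenever the marginal $S$-worker is indifferent.

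\textbf{Monotone thresholds and the definition of $n(\rho,P)$.} Each binding constraint is monotone in $|S|$. By Lemma \ref{lem:simple_avg_wage}, $\bar{W}(S)/|S|$ is increasing in $|S|$, so the screening payoff in (c) is increasing in $|S|$ while the pooling-deviation payoff is constant; thus (c) holds for $|S|$ above a cutoff $n_{IC}$. Likewise \eqref{eq:simple_Wbound} holds for $|S|$ above a cutoff $m$, since $\bar P(S)/|S|$ is decreasing. I would set $n(\rho,P) = \max\{m, n_{IC}\}$, giving parts 1 and 2. For the comparative statics, increasing $\rho$ raises each $P^j(S) = 1-(1-\rho)^{|S|}$, and enlarging $P$ adds observers to $\bar{P}(S) = \sum_{j \in W\setminus S} P^j(S)$; either change raises $\bar{P}(S)$, which by \eqref{eq:bar_W} lowers $\bar{W}(S)/|S|$ and tightens \eqref{eq:simple_Wbound}, so both cutoffs and hence $n$ rise. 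This delivers monotonicity in both arguments.

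\textbf{Existence for $|S| \geq n$ and the key wage comparison.} When $|S| \geq n$ all four constraints hold, so $C=S$ is an equilibrium. The crux of the remaining analysis is to compare, from the no-screening profile, a lone worker's deviation to screening against the all-screen payoff. By Lemma \ref{lem:simple_pooldeviate} with $C=\varnothing$, the most such a deviator can extract is $\Tilde{w}(\varnothing,i)$, and since the deviator learns the firm's type from its own offer just as in the all-screen profile (the wages of others are either absent, under pooling, or redundant, under screening), its payoff $D$ differs from the all-screen screening payoff $V_S$ only through the period-$1$ wage: $D = V_S - p\big(\bar{W}(S)/|S| - \Tilde{w}(\varnothing,i)\big)$. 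Using $\bar P(\{i\}) = (|W|-1)\rho$ and $\bar P(\varnothing)=0$, the central identity is
\[
\Tilde{w}(\varnothing,i) - \frac{\bar{W}(S)}{|S|} = \beta\left[\frac{\bar{P}(S)}{|S|}(s''-s') - (|W|-1)\rho\, U_2(s'')\right].
\]

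\textbf{Existence for $|S| < n$ and the main obstacle.} The hard part is showing that when $|S| < n$ there is nonetheless an equilibrium, namely $C=\varnothing$, which amounts to ruling out the lone screening deviation, i.e. showing $D \leq s'(1+\beta)$. Here conditions (i) and (ii) do the work through the identity above. Under (i), $U_2(s'') = s''-s'$ and the bracket equals $(s''-s')\big[\bar{P}(S)/|S| - (|W|-1)\rho\big] \leq 0$ because $\bar P(\cdot)/|\cdot|$ is decreasing; under (ii), $U_2(s'')=\delta(s''-s')$ and the bracket equals $(s''-s')\big[\bar{P}(S)/|S| - \delta(|W|-1)\rho\big] \leq 0$ directly. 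Either way $\Tilde{w}(\varnothing,i) \leq \bar{W}(S)/|S|$, so $D \leq V_S$. I would then split on why $|S| < n$: if \eqref{eq:simple_Wbound} fails ($|S| < m$) then $\Tilde{w}(\varnothing,i) \leq \bar{W}(S)/|S| < s'$, so the high type rejects any screening offer and the deviation is strictly worse than pooling; if instead the worker IC fails ($|S| < n_{IC}$) then $V_S < s'(1+\beta)$, whence $D \leq V_S < s'(1+\beta)$. This establishes the no-screening equilibrium, and the reverse comparison (when neither condition holds, so the bracket is positive, $D > V_S \geq s'(1+\beta)$ breaks no-screening while all-screen holds) gives uniqueness in the complementary regime. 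The main obstacle is precisely this last step: correctly identifying that the relevant off-path object is $\Tilde{w}(\varnothing,i)$ rather than the equilibrium singleton wage $\bar{W}(\{i\})$, and showing that conditions (i)/(ii) are exactly what force $\Tilde{w}(\varnothing,i) \leq \bar{W}(S)/|S|$ so that failure of the all-screen equilibrium implies survival of the no-screening one.
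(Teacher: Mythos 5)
Your proposal is correct and follows essentially the same route as the paper's (much terser) proof: define $n$ from the monotone incentive constraints via Lemmas \ref{lem:simple_avg_wage}, \ref{lem:simple_screendeviate1}, and \ref{lem:simple_screendeviate2}, and for $|S|<n$ show the no-screening profile survives by noting that the relevant deviation object is $\Tilde{w}(\varnothing,i)$ rather than the equilibrium singleton wage, with conditions (i)/(ii) forcing $\Tilde{w}(\varnothing,i)\leq \bar{W}(S)/|S|$ — your explicit identity for $\Tilde{w}(\varnothing,i)-\bar{W}(S)/|S|$ makes this step more transparent than the paper does. The only small imprecision is writing the screening worker's pooling-deviation payoff as $s'(1+\beta)$, which presumes no second-period screening after the deviation; since that payoff is in any case constant in $|S|$ and $\rho$, the cutoff argument is unaffected.
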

\begin{proof}
The existence of $n$, and its monotonicity properties, are immediate consequences of by Lemmas \ref{lem:simple_avg_wage}, \ref{lem:simple_screendeviate1}, and \ref{lem:simple_screendeviate2}. Existence follows from the results of Section \ref{sec:simple}. If workers in $S$ are not willing to screen when the screening set is $|S|$ then they will be unwilling to do so for any smaller screening set, by Lemmas \ref{lem:simple_avg_wage}, \ref{lem:simple_screendeviate1}, and \ref{lem:simple_screendeviate2}. Then the only equilibrium is for no workers to screen. It remains to show that this is in fact an equilibrium, i.e. that no workers want to deviate to screening. The reason this does not follow immediately from the fact that $C \mapsto \bar{W}(C)/|C|$ is decreasing is that only a worker's wage, not the full negotiation, is observed by other workers. To see this, suppose that in equilibrium no workers screen in the first period. If worker $i \in S$ deviates and makes an initial offer of $w > s'$ the high-type firm must decide whether or not to accept. The difference between this decision and the decision of whether or not to accept an \textit{equilibrium} screening offer from a single worker $i$ is the inference made by other workers if the firm rejects (i.e. the inference when workers see a wage of $s'$). If $i$ is supposed to be screening in equilibrium, other workers infer that the firm is type $s'$, and offer $s'$ in the second period. However if other workers suppose that $i$ is pooling then they will infer nothing from a wage of $s'$. If they then screen in the second period then the firm is worse off than it would be in the equilibrium screening case. Condition $ii.$ is exactly the condition which guarantees that the highest $w$ that such a pooling worker could offer when deviating to screening is less than $\bar{W}(S)/|S|$. Thus if screening cannot be supported when all workers in $S$ screen then it cannot be a profitable deviation for a pooling worker when all other workers pool.\footnote{Condition $ii.$ is not necessary for this to be the case: it says nothing about the relative probability of the high type, which also matters for the decision to screen.}
\end{proof}

This characterization has important welfare and policy implications. The following lemma follows immediately Proposition \ref{prop:simple_eq}.  

\begin{lemma}\label{lem:simple_rhowelfare}
Pooling-worker welfare is non-monotone in $\rho$: it increases in $\rho$ as long as $|S| \geq n(\rho,P)$ and is minimized when $n(\rho,P) > |S|$. Screening-worker welfare is decreasing in $\rho$.
\end{lemma}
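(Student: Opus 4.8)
The plan is to treat this as a direct welfare corollary of Proposition \ref{prop:simple_eq}: I would write down the equilibrium payoff of a representative pooling worker $j \in P$ and of a representative screening worker $i \in S$ as explicit functions of $\rho$, isolate the dependence on $\rho$, and then invoke the fact that the threshold $n(\rho,P)$ is increasing in $\rho$ to locate the breakdown of screening. Throughout I would work in the symmetric pure-strategy equilibrium, so that $\rho_{jk} = \rho$ and $P^j(S) = 1 - (1-\rho)^{|S|}$ for every $j \notin S$, and I would separate the two regimes: the \emph{screening regime}, where $|S| \ge n(\rho,P)$ and all of $S$ screen, and the \emph{no-screening regime}, where $|S| < n(\rho,P)$ and, by Proposition \ref{prop:simple_eq}, the unique symmetric equilibrium has no first-period screening.

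For a pooling worker $j$, I would first record that their first-period wage is always $s'$ and is independent of $\rho$, so all $\rho$-dependence enters through their second-period belief. In the screening regime $j$ learns the firm's type exactly when it observes some wage in $S$, which happens with probability $P^j(S)$; upon learning the high type it demands $s''$, upon learning the low type it obtains $s'$, and when it observes nothing it keeps its prior $p < q'$ and pools for $s'$. Collecting terms, the pooling worker's payoff is $(1+\beta)s' + \beta\, P^j(S)\, p\,(s''-s')$, which is strictly increasing in $P^j(S)$ and hence in $\rho$ for fixed $|S|$. In the no-screening regime $j$ receives no information and obtains exactly $(1+\beta)s'$, the value attained at $P^j(S)=0$, which is the minimum of the expression above. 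Since $n(\rho,P)$ is increasing in $\rho$, there is a value of $\rho$ at which $n(\rho,P)$ crosses $|S|$: below it welfare rises continuously with $\rho$, and at the crossing screening collapses and welfare drops to $(1+\beta)s'$. This is exactly the asserted non-monotonicity, with the minimum attained precisely when $n(\rho,P) > |S|$.

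For a screening worker $i$, the key observation is that $i$ learns the firm's type in the first period regardless of $\rho$, so transparency conveys no informational benefit to $i$; the only channel through which $\rho$ enters is the equilibrium per-worker screening offer $\bar{W}(S)/|S|$, which $i$ receives (with probability $p$) from the high type. By (\ref{eq:bar_W}), $\bar{W}(S)/|S| = \left(s'' - (\delta+\beta)(s''-s')\right) - \beta\,\frac{\bar{P}(S)}{|S|}(s''-s')$, and under symmetric information sharing $\bar{P}(S)/|S| = \frac{|W|-|S|}{|S|}\bigl(1-(1-\rho)^{|S|}\bigr)$ is increasing in $\rho$ for fixed $|S|$ (as in the proof of Lemma \ref{lem:simple_avg_wage}). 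Hence the per-worker offer, and with it the screening worker's payoff, is decreasing in $\rho$ throughout the screening regime, as claimed.

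The computations here are routine; the only points requiring care are substantive rather than technical. First, I would make sure to use the correct second-period continuation for each informational event — in particular that a worker who has learned the high type demands $s''$, not the interior screening wage $w^s$, so that the informational gain to a pooling worker is $p(s''-s')$ per unit of observability. Second, and most importantly, the non-monotonicity is \emph{not} a smooth reversal: within the screening regime pooling-worker welfare rises continuously in $\rho$, and the fall occurs as a discrete jump at the $\rho$ for which $n(\rho,P)$ exceeds $|S|$. The main thing to get right is therefore the interaction between the two opposing effects of $\rho$ — more observability, which benefits pooling workers, versus a higher screening threshold $n(\rho,P)$, which eventually destroys screening — and to note that the entire argument rests on the monotonicity of $n(\rho,P)$ established in Proposition \ref{prop:simple_eq}.
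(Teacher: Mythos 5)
Your proof is correct and is essentially the argument the paper intends: the paper states this lemma as an immediate consequence of Proposition \ref{prop:simple_eq} and gives no explicit proof, and your computation — pooling payoff $(1+\beta)s' + \beta P^j(S)\,p\,(s''-s')$ increasing in $\rho$ within the screening regime, a discrete drop to $(1+\beta)s'$ once $n(\rho,P)>|S|$, and screening payoffs decreasing via $\bar{P}(S)/|S|$ in (\ref{eq:bar_W}) — is exactly the intended filling-in of that omitted step.
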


Worker welfare is increasing in $|S|$, fixing $|P|$. 
\begin{lemma}\label{lem:simple_Swelfare}
Fix $|P|$. The welfare of all workers is increasing in $|S|$. In particular, each screening worker receives higher wages from the high-type firm, and pooling workers receive more information, the higher is $|S|$.
\end{lemma}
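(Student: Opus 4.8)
The plan is to split each worker's equilibrium payoff into a component that does not depend on $|S|$ and a component that does, and then to show the latter is monotone. I work throughout in the regime $|S| \ge n(\rho,P)$ in which, by Proposition \ref{prop:simple_eq}, every worker in $S$ screens; increasing $|S|$ while holding $|P|$ fixed keeps the equilibrium in this regime, since $n(\rho,P)$ does not depend on $|S|$. (Below the threshold no one screens and all workers collect $s'$ in each period, so the content of the claim is the within-regime monotonicity, together with the upward jump at $n(\rho,P)$.)

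For a screening worker $i \in S$, once $i$ screens it learns the firm's type in the first period, so its second-period continuation value and its first-period payoff against the low type (reject the initial offer, then collect $s'$ in the second round) are pinned down by $p,\beta,\delta,s',s''$ alone and are independent of $|S|$. The only $|S|$-dependent term in $i$'s payoff is the first-period wage $\bar{W}(S)/|S|$ extracted from the high type. I would invoke Lemma \ref{lem:simple_avg_wage} to conclude this is increasing, but here is the one subtlety worth flagging: Lemma \ref{lem:simple_avg_wage} holds the workforce $|W|$ fixed, whereas raising $|S|$ raises $|W| = |P| + |S|$, so the numerator factor no longer shrinks. I would therefore re-derive the needed monotonicity directly. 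Under symmetric sharing $\bar{P}(S) = |P|\,(1 - (1-\rho)^{|S|})$, and $\bar{W}(S)/|S|$ depends on $|S|$ only through the term $-\beta(s''-s')\,\bar{P}(S)/|S|$, so it suffices that $n \mapsto (1-(1-\rho)^n)/n$ is decreasing. Writing $x = 1-\rho \in (0,1)$ and $\lambda = -\ln x > 0$, this follows from $e^{\lambda n} > 1 + \lambda n$, which gives $x^n(1 + \lambda n) < 1$, exactly the statement that the derivative of $(1-x^n)/n$ is negative.

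For a pooling worker $j \in P$, its first-period payoff is $s'$ (it offers $s'$, both types accept) and is independent of $|S|$. In the second period $j$ observes a screening wage with probability $P^j(S) = 1 - (1-\rho)^{|S|}$; since a screening worker's realized first-period wage is $\bar{W}(S)/|S| > s'$ under the high type and $s'$ under the low type, any such observation is fully revealing. Because $q' > p$, a pooling worker who observes nothing, or who learns the firm is low type, collects $s'$, while one who learns the firm is high type collects $s''$. Hence $j$'s expected second-period value is $s' + p\,P^j(S)\,(s'' - s')$, which is increasing in $P^j(S)$, and $P^j(S)$ in turn increases in $|S|$; so pooling-worker welfare is increasing in $|S|$.

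The routine parts are the two payoff decompositions and the observation that observing any screening wage is fully separating, which makes the value of information to a pooling worker linear in $P^j(S)$. I expect the main obstacle to be the screening-worker step: carefully isolating $\bar{W}(S)/|S|$ as the sole $|S|$-dependent term, and recognizing that Lemma \ref{lem:simple_avg_wage} cannot be quoted verbatim because it fixes $|W|$ rather than $|P|$, so the monotonicity of $n \mapsto (1-(1-\rho)^n)/n$ must be checked afresh.
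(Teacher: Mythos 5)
Your proof is correct and follows essentially the same route as the paper's: both arguments reduce the claim to showing that $\bar{P}(S)/|S| = \frac{|P|}{|S|}\bigl(1-(1-\rho)^{|S|}\bigr)$ is decreasing in $|S|$ with $|P|$ held fixed, which is exactly the computation the paper performs. Your verification of that monotonicity via $e^{\lambda n} > 1+\lambda n$ is more direct than the paper's (which argues through limits in $\rho$ and a $\rho$-derivative), and your explicit payoff decompositions and the caveat that Lemma \ref{lem:simple_avg_wage} fixes $|W|$ rather than $|P|$ are both apt --- the paper likewise re-derives the monotonicity rather than citing that lemma.
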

\begin{proof}
This follows from showing that $S\mapsto \frac{|P|}{|S|}(1-(1-\rho)^{|S|})$ is decreasing. The derivative with respect to $|S|$ is negative iff 
\begin{equation*}
 -\frac{1}{|S|^2}(1-(1-\rho)^{|S|}) - |S|\ln(1-\rho) (1-\rho)^{|S|} < 0.
\end{equation*}
The limit of the LHS as $\rho \rightarrow 1$ is $-1/|S|^2 < 0$ and the limit as $\rho \rightarrow 0$ is $0$. The derivative of the LHS with respect to $\rho$ is 
\begin{equation*}
   - \frac{1}{|S|} (1-\rho)^{(|S|-1)} + |S|(1-\rho)^{(|S|-1)} + |S|^2 \ln(1-\rho)(1-\rho)^{(|S|-1)}. 
\end{equation*}
The sign of this is the same as that of 
\begin{equation*}
    -\frac{1}{|S|} + |S| + |S|^2\ln(1-\rho)
\end{equation*}
which is (weakly) negative for any $|S| \geq 1$. Thus the LHS term is decreasing in $\rho$. Since the limit of the LHS term as $\rho \rightarrow 0$ is 0, this implies that the derivative of $\frac{|P|}{|S|}(1-(1-\rho)^{|S|})$ with respect to $S$ is negative. 
\end{proof}

Worker welfare is decreasing in $|P|$, fixing $|S|$.

\begin{lemma}
Fix $|S|$. The welfare of all workers is decreasing in $|P|$. In particular, each pooling worker receives lower wages from the high-type firm. Pooling worker's information is unchanged as long as $n(p) \geq |S|$, and they receive no information otherwise. 
\end{lemma}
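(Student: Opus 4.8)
The plan is to decompose each worker's expected payoff (over the firm's type, with prior $p$) into its components and track their dependence on $|P|$ with $|S|$ and $\rho$ fixed; under symmetric information sharing every relevant quantity has an explicit closed form, so the comparative static reduces to elementary monotonicity checks. A screening worker $i\in S$ earns, in the high-type state, the first-period screening wage $\bar{W}(S)/|S|$ plus the second-period wage $s''$ it collects once it has learned the type, and in the low-type state a term that does not involve $|P|$. A pooling worker $j\in P$ earns $s'$ in the first period and, in the high-type state, a discounted second-period wage equal to $s''$ when it observes a screening wage (probability $P^j(S)$) and $s'$ otherwise.

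First I would treat the screening workers. Combining \eqref{eq:bar_W} with the symmetric identity $\bar{P}(S)=|P|\bigl(1-(1-\rho)^{|S|}\bigr)$ from the proof of Lemma \ref{lem:simple_avg_wage} gives
\[
\frac{\bar{W}(S)}{|S|} = \bigl(s''-(\delta+\beta)(s''-s')\bigr) - \beta\,\frac{|P|}{|S|}\bigl(1-(1-\rho)^{|S|}\bigr)(s''-s').
\]
With $|S|$ and $\rho$ fixed the subtracted term is linear and strictly increasing in $|P|$, so $\bar{W}(S)/|S|$ is strictly decreasing in $|P|$. Because the low-type and second-period terms are independent of $|P|$, it follows that so long as all of $S$ screens, each screening worker's wage from the high-type firm, and hence its welfare, is strictly decreasing in $|P|$.

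Next I would treat the pooling workers. Under symmetric sharing $P^j(S)=1-(1-\rho)^{|S|}$ depends only on $|S|$ and $\rho$; this is precisely the statement that a pooling worker's information is unchanged as $|P|$ varies, so long as screening persists. Consequently its expected wage from the high-type firm, $s'+\beta\bigl[P^j(S)s''+(1-P^j(S))s'\bigr]$, is constant over the range of $|P|$ for which screening survives. The remaining ingredient is Proposition \ref{prop:simple_eq}: since $n(\rho,P)$ is increasing in $|P|$ and screening occurs precisely when $|S|\ge n(\rho,P)$, there is a threshold in $|P|$ beyond which no worker screens. Past it the pooling workers obtain no information ($P^j(S)$ effectively drops to $0$), their second-period offer reverts to $s'$ against both types, and their high-type wage falls to $s'(1+\beta)$, while the workers of $S$ also lose the screening premium. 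Assembling the two regimes, the welfare of every worker is weakly decreasing in $|P|$, strictly so for screening workers while screening persists.

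The only delicate point is the direction and nature of the change at the breakdown threshold. The comparative static there is not a derivative but a comparison across regimes, and the decrease in pooling-worker wages is driven entirely by the collapse of information at the threshold rather than by anything happening within the screening regime. That the jump is downward follows from equilibrium incentives: a screening worker weakly prefers screening to deviating to pooling (by Lemma \ref{lem:simple_screendeviate2} a single deviation already induces the firm to reject every other screening offer, so the deviation payoff coincides with the post-collapse pooling payoff $s'(1+\beta)$), and a pooling worker weakly benefits from the positive probability $P^j(S)$ of learning the type. Everything else is immediate from the linearity of $\bar{P}(S)/|S|$ in $|P|$ and the independence of $P^j(S)$ from $|P|$.
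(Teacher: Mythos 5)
Your argument is correct, and in fact the paper states this lemma without any proof at all, so there is nothing to compare it against line by line; it is evidently intended as the mirror image of \Cref{lem:simple_Swelfare}, to be read off the same closed forms. Your route is the natural one: under symmetry $\bar{P}(S)=|P|\bigl(1-(1-\rho)^{|S|}\bigr)$, so the information-rent term in $\bar{W}(S)/|S|$ from \eqref{eq:bar_W} is linear and strictly increasing in $|P|$ (no calculus needed, unlike the $|S|$-comparative static in \Cref{lem:simple_Swelfare}), while $P^j(S)=1-(1-\rho)^{|S|}$ is independent of $|P|$. The only non-trivial step is the regime change when $n(\rho,P)$ crosses $|S|$, and your observation that the post-collapse payoff coincides with a screening worker's deviation-to-pooling payoff (via \Cref{lem:simple_screendeviate2}) and is therefore weakly dominated in equilibrium is exactly the right way to sign the jump; the analogous revealed-preference argument signs it for pooling workers, who weakly value the probability $P^j(S)$ of learning the type. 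Two small remarks concern the statement rather than your proof: your analysis shows that a pooling worker's expected wage is \emph{constant} in $|P|$ within the screening regime and drops only at the threshold, so the clause ``each pooling worker receives lower wages'' must be read as weak (or as a slip for ``screening worker,'' which would parallel \Cref{lem:simple_Swelfare}); and the condition ``$n(p)\geq |S|$'' in the statement is plainly a typo for $n(\rho,P)\leq |S|$, which is how you have correctly read it.
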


The main policy take-away is that increasing $\rho$ alone may have adverse effects. These can be mitigated by measures which encourage workers in $P$ to engage in screening. The specific ways in which this can be done, and the potential for welfare gains, will depend on the reasons for which workers in $P$ are reluctant to screen. Nonetheless, the positive externalities of screening, both for pooling and screening workers, justify interventions along these lines. 

\section{Application: discrimination}\label{sec:paysecrecy}

On of the primary motives for increasing pay transparency is to reduce discrimination. The logic generally given is that greater transparency makes it easier for workers to identify discriminatory compensation patterns. This should in turn prevent firms from discriminating. In order to more fully address the benefits of pay-secrecy policies, it is important therefore to incorporate discrimination into the model. I will be particularly interested in how various anti-discrimination policies affect equilibrium outcomes, and how these policies interact with transparency. 

The legal background of pay discrimination policies will form the basis for incorporating discrimination into the model. Pay discrimination occurs along many lines, including race, gender, and sexual orientation. In this discussion I will focus on laws regarding sex-based pay discrimination, although the model will apply equally well to other forms of discrimination. The primary legal basis for the prohibition of sex-based discrimination is the Equal Pay Act of 1963. This law ``prohibits employers from discriminating among employees on the basis of sex by paying higher wages to employees of the opposite sex for `equal work on jobs the performance of which requires equal skill, effort, and responsibility, and which are performed under similar working conditions.' '' Belfi v. Prendergast, 191 F.3d 129, 135 (2d Cir.1999) (quoting 29 U.S.C. \S  206(d)(1)). The law allows for four defenses for observed pay discrepancies: the employer must show that the discrepancies are the result of a seniority system, a merit system, a system that measures quality or quantity of production, or ``any factor other than sex.'' The interpretation of this latter catch-all defense has naturally been the subject of great controversy. One such factor which has at times been cited in defense of pay differences is the so-called ``salary negotiation defense''. This is the argument that the difference in pay arises from differences in negotiation tactics. Essentially, the employer may assert that a man is paid more than a woman for the same work because he bargained hard, while she did not. The legal precedent for this type of defense is mixed. The argument was rejected in Dreves v. Hudson Group Retail LLC, 2013 WL 2634429, **8-9 (D. Vt. June 12, 2013). However versions of the salary negotiation defense were accepted in Muriel v. SCI Arizona Funeral Services, Inc., 2015 WL 6591778, *3 (D. Ariz. Oct. 30, 2015) and  Horner v. Mary Institute, 613 F.2d 706, 714 (8th Cir. 1980). The use of the salary negotiation defense raises the question of how well an anti-discrimination law which admits such a defense does in achieving it's goals of eliminating pay discrimination. 

\subsection{Verifiable discrimination}

I begin by assuming that the anti-discrimination law admits a salary negotiation defense. This in turn defines the \textit{prima facie} burden of proof for an employee who wishes to bring a suit against their employer. Let $Y\subset W$ be the set of workers which is protected by the anti-discrimination law. I define verifiable discrimination as that which can not be justified via the salary negotiation defense. 

\begin{definition}
There is \textbf{verifiable discrimination} against $i\in Y$ if $i\in Y$ makes an offer of $w \in (s',s'']$ which is rejected, and $i$ observes $j\not\in Y$ with $w^j > s'$.\footnote{As will be shown below, the results will establish the irrelevance of penalties for verifiable discrimination. These negative results continue to hold with the following narrower definition of verifiable discrimination: $i$ makes an offer of $w^i$ which is rejected, and $i$ observes $j\not\in Y$ with a wage $w^j \geq w^i$. This definition corresponds more directly to the intuitive idea of the salary negotiation defense. However I focus on the more permissive notion because to highlight the fact that even if the bar for a successful salary negotiation defense is high, the anti-discrimination policy will not achieve its goal.}
\end{definition}

I will assume that the firm pays a penalty $\ell$ for cases of verifiable discrimination, a fraction $\alpha \in [0,1]$ of which goes directly to the injured worker. A worker will bring a discrimination case against the firm if and only if they have evidence of verifiable discrimination. This rules out cases in which the worker brings the suit without making a \textit{prima facie} case. Such cases are unlikely to survive an initial motion to dismiss, and even if they do, the worker runs the risk that in fact no discrimination has occurred, in which case they will be responsible for the legal costs associated with brining the suit. 

The general intuition for why the penalty for verifiable discrimination may help $Y$ workers obtain higher wages is the following. Suppose $i\in Y$ is receiving a relatively low wage, either because they are pooling or because they are screening, but at a low wage. The larger is $\ell$, the more willing the firm should be to accept should $i$ deviate and make a higher wage offer, provided this offer is still below the wage obtained by some worker $j \in C\setminus Y$. This is because rejection of $i$'s deviation would form the basis for a verifiable case of discrimination. Greater transparency should reinforce this effect, since it means that $i$ will be more likely to observe $j$'s wage. This intuition seems to be behind the arguments in favor of both large penalties for discrimination and greater pay transparency. 

The intuition outlined above however is incomplete, as it implicitly takes as given the probability that $i$ will observe a high wage from $j$. Both $j$'s offer and the firm's decision whether to accept or reject are endogenous. Faced with $i$'s deviation, the firm considers not only whether or not to accept $i$'s offer, but also which offers in $C$ to accept. It turns out that this endogeneity severely limits the efficacy of penalties and transparency for reducing pay discrimination. In many cases, the anti-discrimination penalty will in fact have no effect on equilibrium outcomes. The reason again has to do with supermodularity of the firm's payoffs (\Cref{lem:simple_supermodular}). In the language of this paper, the intuition for why transparency and discrimination penalties help prevent discrimination can be rephrased as follows: if a pooling $Y$ worker tries to deviate to screening, the firm will have a greater incentive to accept the offer when $\rho$ and $\ell$ are high. This makes it harder to sustain equilibria in which $Y$ workers do not screen. The problem with this reasoning is \Cref{lem:simple_pooldeviate}. If the firm rejects the initial offer of a worker $i \in Y$ who deviates to screening then it will reject all initial screening offers when $\ell = 0$, and all the more so when $\ell >0$. Thus there will be no verifiable discrimination, and the penalty $\ell$ is irrelevant. This conclusion is robust across a number of different discrimination settings, a few of which I will discuss below. 

\subsection{Perception of negotiators}

There is significant empirical evidence that racial and gender pay gaps arise in part from the perceptions of workers who engage in negotiations. \cite{bowles2007social} find that women who attempt to negotiate for higher salaries are perceived as demanding ``less nice'', particularly be male evaluators. \cite{hernandez2019bargaining} show that Black workers face similar perception effects when dealing with White evaluators. 

Formally, we can model this perception effect as a utility cost for $Y$ workers who attempt to negotiate. I will assume that this cost takes the form of a fixed cost $c$ paid by a worker who proposes a wage above $s'$ (the same results will hold if instead the cost is proportional to the wage proposal). This cost is also paid if the worker brings a lawsuit against the company. I assume that this cost is not so large that it completely outweighs the benefit of learning the firm's type. To be precise, $s'' - c > s'$, so a $Y$ worker is willing to demand a high wage in the second period if they know that the firm is the high type.

Penalties for verifiable discrimination which are paid to the injured worker could in theory create perverse incentives for a worker $i \in Y$ who pools in the first period. Suppose the firm is high-type, and $i$ and observes the wage of $j\in C\setminus Y$; it could be that $i$ prefers in the second period to hide the fact that they know the firm's type, and instead make an offer that they knows the firm will reject. This would benefit $i$ if the payoff they receive from winning the subsequent lawsuit is greater then the wage that they could demand if they know the firm's type. 

I will assume that cross-period evidence cannot be used in discrimination cases. This means that a worker $i$ who observes $j \not\in Y$ with a period 1 wage greater than $s'$ will simply demand a wage of $s''$ in the second period. Ruling out the extortionary behavior described in the previous paragraph is the only effect of this cross-period evidence assumption.  

Aside from this negotiation cost and the penalty for discrimination, the model is unchanged. I will consider fixed negotiation costs $c$ throughout, and so will suppress dependence on $c$ in the notation. For any $\ell \geq 0$, let $\pi(A|C,\omega,\ell)$ be defined as before, except that the penalty $\ell$ for verifiable discrimination is incorporated. The conclusions of \Cref{prop:simple_deviation} continue to hold, regardless of the penalty for verifiable discrimination.

\begin{lemma}\label{lem:perception_wages}
Assume $\rho_{ij} \in (0,1)$ for all $i,j$ and $\ell \geq 0$; and let $c \geq 0$ be the negotiating cost. In any equilibrium characterized by $C, \omega$, it must be that $\pi(C|C,\omega,\ell) = \pi(\varnothing|C,\omega,\ell)$, and $\varnothing \in \chi^i$ for all $i\in C$.
\end{lemma}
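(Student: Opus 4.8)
The plan is to reduce the statement to Proposition~\ref{prop:simple_deviation}. Inspecting that proof, the only structural property of the firm's payoff it uses is strict supermodularity of $A\mapsto\pi(A|C,\omega)$ (Lemma~\ref{lem:simple_supermodular}): Claim~1 invokes it to force $X^k\subsetneq X^i$, Claim~2 is purely combinatorial (finitely many workers), and Claim~3 uses only equilibrium optimality of $\omega$ and of the firm's acceptance set. Hence it suffices to show that $A\mapsto\pi(A|C,\omega,\ell)$ is again strictly supermodular whenever $\rho_{ij}\in(0,1)$; the three claims then transcribe verbatim with $\pi(\cdot|C,\omega)$ replaced by $\pi(\cdot|C,\omega,\ell)$, yielding $\pi(C|C,\omega,\ell)=\pi(\varnothing|C,\omega,\ell)$ and $\varnothing\in\chi^i$. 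Note first that $c$ is a cost borne by workers, not by the firm. It therefore enters $\pi(\cdot|C,\omega,\ell)$ only through the given offers $\omega$ and through fixed second-period continuation values; it does not touch the observability terms $P^j(\cdot)$ that govern the $A$-dependence of $\pi$, so it cannot disturb supermodularity, and I may treat $c$ as fixed and focus entirely on the penalty $\ell$.

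The crux is to locate the penalty term and show it is \emph{modular} in $A$. In the configuration defining $\pi(A|C,\omega,\ell)$ the firm accepts the offers of $A$ and rejects those of $C\setminus A$. A verifiable discrimination case can arise only for a protected worker who made a high offer that was rejected, i.e.\ only for $i\in(C\setminus A)\cap Y$: protected workers in $A$ have their offers accepted, while pooling protected workers in $Y\setminus C$ offer only $s'$ and so make no rejected high offer. For such an $i$ a case materializes exactly when $i$ observes some non-protected worker with a wage above $s'$. The key observation is that \emph{every} non-protected screening worker receives a wage above $s'$ regardless of $A$: a worker $j\in A\setminus Y$ is paid the accepted offer $w_1^j>s'$, while a worker $j\in(C\setminus A)\setminus Y$ is rejected and then paid $s''>s'$ in the second round; workers outside $C$ pool at $s'$ and never qualify as comparators. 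Consequently the probability that $i$ has a verifiable case equals $P^i(C\setminus Y)$, which does not depend on $A$.

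It follows that the expected-penalty correction to the firm's payoff is proportional to $-\ell\sum_{i\in(C\setminus A)\cap Y}P^i(C\setminus Y)$, which (using $A\subseteq C$) equals $\ell\sum_{i\in A\cap Y}P^i(C\setminus Y)$ plus a term independent of $A$. A sum over the elements of $A$ of weights that do not depend on $A$ is modular, hence simultaneously supermodular and submodular. Therefore $\pi(\cdot|C,\omega,\ell)$ is the sum of the strictly supermodular map $\pi(\cdot|C,\omega)$ (Lemma~\ref{lem:simple_supermodular}, strict since $\rho_{ij}\in(0,1)$) and a modular function, and is thus itself strictly supermodular for all non-nested $A,B$. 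With strict supermodularity re-established, the proof of Proposition~\ref{prop:simple_deviation} applies without change and delivers the claim.

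The main obstacle I anticipate is the middle step: confirming that the per-worker probability of a verifiable case is genuinely independent of $A$. This hinges on the fact that a rejected non-protected screening worker is nonetheless paid $s''>s'$ in the second round, so that moving a non-protected worker between $A$ and $C\setminus A$ never changes whether it can serve as the comparator establishing discrimination against some $i$. I would double-check the discounting attached to the penalty and verify that a rejected worker's second-round wage of $s''$ is indeed the ``first period wage'' shared under the model's information-sharing and cross-period-evidence conventions; as long as that observed wage exceeds $s'$, modularity---and hence the entire reduction to Proposition~\ref{prop:simple_deviation}---goes through.
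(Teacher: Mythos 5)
Your overall strategy --- show that $A\mapsto\pi(A|C,\omega,\ell)$ is still strictly supermodular and then rerun the proof of Proposition~\ref{prop:simple_deviation} verbatim --- is exactly the paper's (its proof is a one-line assertion that the penalty ``only reinforces supermodularity''). But the justification of your key step fails at precisely the point you flagged for double-checking. You claim that a rejected non-protected screening worker $j\in(C\setminus A)\setminus Y$ is ``paid $s''>s'$ in the second round'' and hence always serves as a comparator, so that the probability of a verifiable case against $i\in(C\setminus A)\cap Y$ is the $A$-independent quantity $P^i(C\setminus Y)$ and the penalty term is modular. This is backwards: by Lemma~\ref{lem:simple_singlecross} only the high type accepts a screening initial offer, so rejection signals the \emph{low} type, the worker then offers $s'$ in the second round, and the firm accepts --- which is why the $C\setminus A$ term in the displayed formula for $\pi(A|C,\omega)$ reads $\delta(s''-s')$. (The parenthetical ``accepts $s''$ in the second round'' in the text defining $\pi$ appears to be a typo for $s'$ and may be what misled you.) A rejected screener's realized first-period wage is therefore $s'$, so only \emph{accepted} non-protected screeners $j\in A\setminus Y$ have $w^j>s'$ and can trigger a case. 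The expected penalty is $-\ell\sum_{i\in(C\setminus A)\cap Y}P^i(A\setminus Y)$, which genuinely depends on $A$ and is not modular.

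The conclusion nevertheless survives, because this term is itself supermodular --- this is the content of the paper's remark that rejecting $i\in Y$ becomes costlier the more workers in $C\setminus Y$ are accepted. Writing $g_i(A)=-\ind\left[i\notin A\right]P^i(A\setminus Y)$ for $i\in C\cap Y$, one checks $g_i(A\cup B)+g_i(A\cap B)\ge g_i(A)+g_i(B)$ case by case: if $i\in A\cap B$ all four terms vanish; if $i\notin A\cup B$ the inequality reduces to submodularity of $P^i$ (Lemma~\ref{lem:submodular}), since $(A\cup B)\setminus Y$ and $(A\cap B)\setminus Y$ are the union and intersection of $A\setminus Y$ and $B\setminus Y$; if $i\in A\setminus B$ it reduces to monotonicity $P^i(B\setminus Y)\ge P^i((A\cap B)\setminus Y)$, and the case $i\in B\setminus A$ is symmetric. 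Replacing your modularity argument with this check restores strict supermodularity of $\pi(\cdot|C,\omega,\ell)$, after which the reduction to Proposition~\ref{prop:simple_deviation} goes through exactly as you describe.
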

\begin{proof}
The proof proceeds by the same steps as that of \Cref{prop:simple_deviation}. The additional penalty for verifiable discrimination only reinforces supermodularity, as it becomes more costly to reject a screening offer from $i\in Y$ when more workers $i \in C\setminus Y$ are being accepted.  
\end{proof}

As a result, it is immediate that \Cref{lem:simple_avg_wage}, \Cref{lem:simple_screendeviate1}, \Cref{lem:simple_screendeviate2} continue to hold as well. On direction of \Cref{lem:simple_pooldeviate} is also immediate: if a pooling worker deviates to screening then the high-type firm will accept their offer if it is below $\Tilde{w}(C,i)$. The conclusion that higher offers will be rejected will turn out to hold, but this is not immediate since it could be that the firm accepts a higher offer from $i\in Y$ in order to avoid a lawsuit. 

The fact that $Y$ workers pay a perception cost for negotiation will make them less willing to attempt to screen, in the absence of penalties for verifiable discrimination. The intuition outlined in the previous section suggests that penalties for verifiable discrimination may help $Y$ workers by allowing them to obtain a higher wage, and thus offset the perception cost incurred by bargaining. The endogeneity of observations, however, completely eliminates this potential benefit.  

\begin{proposition}\label{prop:perception_discrimination}
Let $\ell \geq 0$, $c \geq 0$ and $\alpha \in [0,1]$ be arbitrary. Assume $\rho_{ij} = \rho \in (0,1)$ for all $i,j$ and symmetric screening offers. Then the set of equilibrium outcomes $C,\omega$ is independent of $\ell$.
\end{proposition}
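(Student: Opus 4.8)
The plan is to show that the penalty $\ell$ is never actually paid: verifiable discrimination occurs with probability zero both on path and after any single deviation, so the penalty term drops out of every incentive comparison and the equilibrium conditions collapse exactly to those already analyzed for $\ell = 0$. The main obstacle to anticipate is the off-path deviation of a pooling $Y$ worker to screening, where a priori a larger $\ell$ could make the high-type firm more willing to accept such an offer (to avoid a suit), thereby raising the acceptance threshold and enlarging the equilibrium set. Ruling this out is the crux.

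First I would extract from \Cref{lem:perception_wages} the fact that $\pi(\cdot|C,\omega,\ell)$ remains supermodular and satisfies $\varnothing \in \chi^i$ for every $\ell \ge 0$; equivalently, the moment the high-type firm rejects any worker's first-period screening offer, its optimal continuation is to reject all of them. Since a rejected screening worker then offers $s'$ in the second round (accepted by the firm), the consequence I would record is: along any history in which some screening offer is rejected, every worker's realized first-period wage equals $s'$. Note that \Cref{lem:simple_screendeviate1} and \Cref{lem:simple_screendeviate2} carry over unchanged, as observed immediately after \Cref{lem:perception_wages}.

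I would then confront this with the definition of verifiable discrimination, which requires both a rejected offer $w \in (s',s'']$ by some $i \in Y$ and an observation by $i$ of some $j \notin Y$ with $w^j > s'$. These two events are mutually exclusive: the instant the firm rejects $i$'s screening offer it rejects all screening offers, so every wage $i$ could observe equals $s'$ and the second condition fails. Hence the expected penalty is identically zero at every node the firm reaches, giving $\pi(C|C,\omega,\ell) = \pi(C|C,\omega,0)$ (accepting all of $C$ rejects no $Y$ worker) and $\pi(\varnothing|C,\omega,\ell) = \pi(\varnothing|C,\omega,0)$ (rejecting all leaves only wages $s'$). The firm's indifference condition pinning down $\bar{W}(C)$ and the offers $\omega$ is therefore literally the $\ell = 0$ equation.

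It remains to check that the deviation constraints are $\ell$-free. A screening worker deviating to pooling offers $s' \notin (s',s'']$, so no claim can arise, and by the carried-over \Cref{lem:simple_screendeviate2} all other realized wages are $s'$ as well. For the delicate pooling-to-screening deviation, I would mirror \Cref{lem:simple_pooldeviate}: if the firm rejects $i$'s offer $\hat{w}$ then $\varnothing \in \chi^i$ forces it to reject all of $C$, so no penalty attaches to rejection and the firm's rejection payoff equals its $\ell = 0$ value; if instead it accepts, no $Y$ worker is rejected, so again no penalty. Thus the acceptance threshold remains $\Tilde{w}(C,i)$, whose formula contains no $\ell$, and the feared upward shift cannot occur. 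Because the penalty is never paid, the transfer share $\alpha$ is likewise irrelevant to every payoff. Collecting these, each constraint defining an equilibrium $(C,\omega)$ coincides with its $\ell = 0$ counterpart, so the set of equilibrium outcomes does not depend on $\ell$.
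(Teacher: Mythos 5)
Your core idea---that the penalty is never actually paid because the firm's supermodular payoffs make it reject \emph{all} screening offers the moment it rejects any one, so a rejected $Y$ worker can never observe a coworker wage above $s'$---is exactly the mechanism the paper's proof runs on, and your treatment of the four worker-deviation cases and of the acceptance threshold $\Tilde{w}(C,i)$ matches the paper's. (One small imprecision: $\varnothing \in \chi^i$ gives only a \emph{weak} preference for rejecting all of $C$ after rejecting $i$; you need the observation, which the paper makes explicit, that any $\ell' > 0$ turns this into a strict preference because accepting offers in $C\setminus Y$ while rejecting $i \in Y$ exposes the firm to a suit.)

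There is, however, a genuine gap in the converse direction. To show that every equilibrium outcome under $\ell' > 0$ is also one under $\ell = 0$, you must first establish that on-path play under $\ell'$ still consists only of screening and pooling. Your argument implicitly assumes this, but the penalty creates a third candidate strategy that does not exist at $\ell = 0$: a protected worker can make a first-round offer \emph{designed to be rejected}, then offer $s'$ in the second round, hoping to observe a wage above $s'$ from some $j \in C\setminus Y$ and collect $\alpha\ell$ in a lawsuit. Such an ``entrapment'' worker is, for the firm's inference, just a pooling worker, so the firm's rejection of their offer does \emph{not} trigger rejection of all of $C$; the screening workers in $C\setminus Y$ are still paid above $s'$, verifiable discrimination occurs with positive probability on path, and the firm's indifference condition (hence $\bar{W}(C)$ and the equilibrium set) would then depend on $\ell$. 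Your blanket claim that the penalty is paid with probability zero fails precisely for these candidate profiles. The paper closes this by showing any entrapment worker has a strictly profitable deviation: offer $\Tilde{w}(C) + P(C\setminus Y)\ell$, which the firm accepts because acceptance saves it exactly the expected lawsuit cost $P(C\setminus Y)\ell$ while leaving its operating surplus unchanged, and which pays the worker more than $s'$ plus the expected award. Without some version of this step your proof does not rule out $\ell$-dependent equilibria.
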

\begin{proof}
Assume for simplicity of notation that $\alpha = 1$; the proof is identical for any $\alpha \in [0,1]$. First, let $C,\omega$ characterize an equilibrium under $\ell = 0$. I wish to show that they also characterize an equilibrium under $\ell' > 0$. To do this, I verify that there will be no profitable deviations for any workers or for the firm. Assume that $C\setminus Y \neq \varnothing$, otherwise it is obvious that the penalty has no effect. 

Every worker will either screen or pool, so there are potentially four types of workers whose deviations we must consider when $\ell >0$.

\textit{Worker} $i \in C \cap Y$. Suppose first that such a worker deviates to pooling. Under both $\ell=0$ and $\ell'$, the firm will then reject all initial offers in $C$ (by \Cref{lem:simple_screendeviate2}). Thus the payoff to $i$ from this deviation is the same in both cases, and therefore not profitable under $\ell'$. The same is true if $i$ deviates by making a higher offer. 

\textit{Worker} $i \in C\setminus Y$. Again, if $i$ deviates to pooling or makes a higher offer all other offers in $C$ will be rejected, so the payoff to $i$ from deviating is independent of $\ell$.

\textit{Worker} $i \in Y\setminus C$. This is the most interesting direction. Let $\Tilde{w}$ be the highest wage that $i$ could offer under $\ell = 0$ which would be accepted by the high-type firm. By \Cref{lem:simple_screendeviate2}, under $\ell =0$ the firm will then be indifferent between accepting all initial offers in $C\cup i$ and rejecting all such offers, should $i$ offer $\Tilde{w}$. Suppose that under $\ell'$, $i$ deviates by making an offer $w \leq \Tilde{w}$. The payoff to the firm of accepting all offers in $C\cup i$ is the same as under $\ell = 0$. The payoff of rejecting $i$'s offer is weakly worse, since this opens the possibility for a lawsuit. Thus the firm will accept any such offer, and so the payoff to $i$ from deviating is the same as under $\ell = 0$. Suppose instead that $i$ offers $w > \Tilde{w}$. If the firm accepts $i$'s offer it will also accept all offers in $C$ (by supermodularity of $\pi$). If the firm rejects $i$'s offer then it will also reject all offers in $C$. This is because under $\ell = 0$ the firm at least weakly prefers rejecting all $C$ in this case (\Cref{prop:simple_deviation}), and this preference is strict for any $\ell' >0$, since otherwise there is the possibility of a lawsuit (recall the assumption that $C\setminus Y \neq \varnothing$). The payoffs to the firm of accepting all $C\cup i$ or rejecting all $C\cup i$ are independent of $\ell$, and so it will reject under $\ell'$, given that this was weakly optimal under $\ell = 0$. Thus the payoff to the worker form this deviation is weakly worse under $\ell'$ than under $\ell=0$.

\textit{Worker} $i \in (W\setminus C)\setminus Y$. The previous argument also applies to this type of worker. 

Consider now the potential deviations by the firm. The low type firm of course has no deviations, so consider a high-type firm. The only relevant deviation is rejecting some initial offers in $C$. Doing so is weakly worse under $\ell'$, since it creates the possibility of a lawsuit. Thus the firm will accept all $C$. 

Now consider an equilibrium under $\ell'$. We first need to show that the equilibrium is also characterized by $C,\omega$. In other words, we need to show that there are no strategies other than screening or pooling that could occur on path. The only other strategy that could arise is for a $Y$ worker to make high offers in the first period first round, in the hope of then encountering verifiable discrimination and winning a lawsuit. In the first period second round such a worker would then make an offer of $s'$. Call this the entrapment strategy.

Suppose there is an equilibrium in which a set $B$ of $Y$ workers plays the entrapment strategy. Let $i \in Y$ be a worker pursuing the entrapment strategy, and let $C$ be, as before, the set of screening workers. The entrapment strategy can only be profitable if $C\setminus Y \neq \varnothing$. I will show that $i$ would be better off making an offer that the high type firm would accept. For any $\ell'>0$, the firm's payoffs can be written as the sum of its operating surplus, i.e. output minus wages, and the costs arising from lawsuits. For the purposes of inference, a worker playing the entrapment strategy is the same as a pooling worker. In other words, the informational effect on operating surplus is the same whether a worker pools or entraps. For any screening set $C$ it must still be the case that $\pi(C|C,\omega,\ell) = \pi(\varnothing|C,\omega,\ell)$; the presence of entrapment workers does not affect the supermodularity of $\pi$. However $\pi(C|C,\omega,\ell)$ will now depend on $B$, since $C\setminus Y \neq \varnothing$ implies that lawsuits occur with positive probability on-path. Let $\Tilde{w}(C)$ be the highest initial wage offer from $i\in B$ that would be accepted if $\ell = 0$. In other words, $\Tilde{w}(C)$ is the highest wage that the firm would be willing to accept from $i$ if it only cared about its operating surplus. By \Cref{lem:simple_pooldeviate} and the fact that workers in $C$ are willing to screen, it must be that $\Tilde{w}(C) > s'$. 

Let $P(C\setminus Y)$ be the probability that $i$ observes a wage in $C\setminus Y$, which is exactly the probability that $i$ brings a lawsuit. Suppose that instead of making an unacceptable initial offer, $i\in B$ instead proposes $ \Tilde{w}(C) + P(C\setminus Y)\ell$. If the firm accepts, $i$ strictly prefers this deviation to the entrapment strategy, since $\Tilde{w}(C) > s'$. I claim that the firm accepts $i$'s offer, and continues to accept all initial offers in $C$. This is because $i$ screening has no effect on expected lawsuit costs arising from workers in $B\setminus i$. The firm's expected lawsuit costs will decrease by exactly $P(C\setminus Y)\ell$ if it accepts $i$'s offer. Its expected operating surplus is the same, by definition of $\Tilde{w}(C)$. Thus the firm obtains it's equilibrium payoff whether it accepts or rejects $i$'s deviation offer (to break the firm's indifference, $i$ could offer a slightly lower wage and still be better off). Thus there can be no equilibrium in which any workers entrap.

It remains to show that any equilibrium under $\ell'$ characterized by $C,\omega$ will also be an equilibrium under $\ell = 0$. By \Cref{lem:perception_wages}, it must be that $\pi(C|C,\omega,\ell') = \pi(\varnothing|C,\omega,\ell')$. Notice that $\pi(C|C,\omega,\ell') = \pi(C|C,\omega,0)$ and $\pi(\varnothing|C,\omega,\ell') = \pi(\varnothing|C,\omega,0)$, since no verifiable discrimination occurs in either case. If a screening worker deviates, either to pooling or by offering a higher wage, then all offers in $C$ are rejected, by \Cref{lem:simple_screendeviate1} and \Cref{lem:simple_screendeviate2}. If this is at least weakly worse for the deviator than what would have occurred under $\ell'$, and so cannot be profitable. We have already shown that the maximum possible wage that a pooling worker could obtain by deviating to screening is the same under $\ell$ and $\ell'$, so there is no profitable deviation of this type as well. Since $\pi(\varnothing|C,\omega,0) = \pi(C|C,\omega,0)$, there is no profitable deviation for the firm. 
\end{proof}

\subsection{Other forms of discrimination}

Discrimination may occur for a number of reasons. Two prominent forms of discrimination are statistical and taste-based. Statistical discrimination occurs if worker's have different abilities, and the employer believes that $Y$ workers are more likely to have low ability. This belief may well be unfounded, yet it will still influence hiring and pay decisions. Similarly, taste-based discrimination occurs when the employer has an intrinsic dislike for some type of employee. 

Suppose that there is tasted based discrimination, which takes the form of a utility cost $C$ payed by the firm when it employs a type $Y$ worker. This can be interpreted with as the psychological cost associated which captures the firm's bias. For simplicity, assume that this cost is not discounted; the firm pays the cost if it employs a $Y$ worker in a given period, whether they reach agreement in the first or second round of negotiation. As long as $s'' - c > s'$ the conclusion of \Cref{prop:perception_discrimination} continues to hold; penalties for discrimination will not change the set of equilibrium outcomes. The proof is unchanged. 

\begin{corollary}
Assume there is taste-based discrimination. Let $\ell \geq 0$, $c \geq 0$ and $\alpha \in [0,1]$ be arbitrary. Assume $\rho_{ij} = \rho \in (0,1)$ for all $i,j$ and symmetric screening offers. Then the set of equilibrium outcomes $C,\omega$ is independent of $\ell$.
\end{corollary}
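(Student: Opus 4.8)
The plan is to argue that the firm-side taste cost enters payoffs as a path-independent constant, so that every comparison driving the proof of \Cref{prop:perception_discrimination} is left intact and that proof carries over verbatim. The first and most important step is to establish this path-independence. Because the cost $c$ is undiscounted within a period and is paid exactly when the firm employs a $Y$ worker --- irrespective of the round in which agreement is reached or of the wage --- it is incurred identically in every continuation that the proof needs to compare. The condition $s'' - c > s'$ guarantees that even a high-type firm facing the pooling wage strictly prefers to employ each $Y$ worker rather than take its outside option, and the gains from cooperation do the same for the low type; hence on every equilibrium path, and in every candidate deviation, the firm employs all $Y$ workers in both periods. Consequently the total taste cost is a constant $\kappa$ that does not vary with the set $A \subseteq C$ of screening offers the high type accepts.

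Given this, I would record that the high-type firm's payoff in the taste model equals its counterpart in \Cref{prop:perception_discrimination}, shifted by $-\kappa$ uniformly in $A$. Since $\kappa$ is the same for every acceptance set, all marginal comparisons of the form $\pi(A \cup B) - \pi(B)$ cancel the constant, so supermodularity is inherited directly from \Cref{lem:simple_supermodular} and the indifference characterization $\pi(C|C,\omega,\ell) = \pi(\varnothing|C,\omega,\ell)$ together with $\varnothing \in \chi^i$ is inherited from \Cref{lem:perception_wages}. The same cancellation preserves the low type's acceptance threshold of $s'$, since the cost appears on both sides of its accept-now-versus-wait comparison, and hence \Cref{lem:simple_singlecross}, \Cref{lem:simple_screendeviate1}, \Cref{lem:simple_screendeviate2}, and \Cref{lem:simple_pooldeviate} all continue to hold.

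With the structural results intact, I would replay the deviation-by-deviation argument of \Cref{prop:perception_discrimination}: for each of the four worker categories, for the firm itself, and for the candidate entrapment strategy, the fact that rejection of any screening offer triggers rejection of all screening offers (by supermodularity) implies that no verifiable discrimination ever arises on path, so the penalty $\ell$ is never triggered and the equilibrium set $C,\omega$ cannot depend on it. I expect the main obstacle to be entirely contained in the first paragraph, namely verifying that the taste cost really is path-independent --- that both firm types continue to employ $Y$ workers across all the continuations being compared, which is exactly what $s'' - c > s'$ and the gains-from-cooperation assumption secure. Once that is established, the cost $\kappa$ drops out of every inequality in the perception-model proof, and the conclusion follows unchanged.
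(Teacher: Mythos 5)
Your approach is essentially the paper's: the paper's entire proof here is the remark that, because the taste cost is undiscounted within a period and $s''-c>s'$, ``the proof is unchanged,'' and your elaboration---that the cost enters the high-type firm's payoff as an additive constant independent of the acceptance set $A$, so it cancels from every marginal comparison and leaves supermodularity, the indifference condition $\pi(C|C,\omega,\ell)=\pi(\varnothing|C,\omega,\ell)$, and the deviation-by-deviation argument of \Cref{prop:perception_discrimination} intact---is exactly the justification the paper leaves implicit. One caveat: your claim that path-independence extends to the low type because ``on every equilibrium path, and in every candidate deviation, the firm employs all $Y$ workers in both periods'' is contradicted by the paper's own subsequent discussion, which notes that a discriminatory \emph{low}-type firm employs no $Y$ workers in the second period (it earns $s'-s'-c<0$ at the minimum wage) and may reject their first-period offers as well, with that hiring decision genuinely depending on $\ell$ and $\rho$. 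This does not damage the corollary, which concerns the screening set and offers $C,\omega$ driven by the high-type firm's supermodular payoffs, but the blanket path-independence claim for both types is overstated and should be confined to the high type.
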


The analysis of statistical discrimination is slightly more complicated. Assume that statistical discrimination takes the following form. Workers can be either high ability ($g$) or low ability ($b$). Ability is irrelevant at the low type firm. At the type firm a high ability worker produces output $s''$, while a low ability worker produces output $s'' - c > s'$. Neither workers nor the firm know their ability at the start of the first period, but both learn it by the end of the first period (independent of the firm's type). The firm discriminates because it believes that $Y$ workers are low ability with higher probability than non-$Y$ workers. Without loss of generality, assume that it believes non-$Y$ workers are always high ability. This belief may well be incorrect.

The difference between statistical discrimination and taste-based discrimination, as described above, is that the expected difference in output between $Y$ and non-$Y$ workers is discounted by the firm between negotiation rounds. This complicates the analysis. However under some conditions \Cref{prop:perception_discrimination} continues to hold. For any $c \geq 0$ and screening set $C$, define $\bar{W}(C|c)$ to be the wage sum such that the firm is indifferent between accepting all screening offers and rejecting all screening offers when each worker makes an offer of $\bar{W}(C|c)/|C|$. \Cref{eq:bar_W} defines $\bar{W}(C|0)$. When $c > 0$ and $C\cap Y \neq \varnothing$ then $\bar{W}(C|c) < \bar{W}(C|0)$, since the firm loses less by delaying agreement with $Y$ workers when $c > 0$ then when $c = 0$. The key property that must be preserved for \Cref{prop:perception_discrimination} to hold is that $C \mapsto \bar{W}(C|c)/|C|$ should be strictly increasing, in the set inclusion order. The proof is unchanged from that of \Cref{prop:perception_discrimination}. 

\begin{corollary}
Assume there is statistical discrimination and $C \mapsto \bar{W}(C|c)/|C|$ is strictly increasing (in the set inclusion order). Let $\ell \geq 0$, $c \geq 0$ and $\alpha \in [0,1]$ be arbitrary. Assume $\rho_{ij} = \rho \in (0,1)$ for all $i,j$ and symmetric screening offers. Then the set of equilibrium outcomes $C,\omega$ is independent of $\ell$.
\end{corollary}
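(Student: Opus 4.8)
The plan is to re-run the proof of \Cref{prop:perception_discrimination} essentially verbatim, with $\bar{W}(C)$ everywhere replaced by $\bar{W}(C|c)$, and to verify that the two structural facts on which that proof rests survive the introduction of statistical discrimination. Those facts are: (i) supermodularity of $A \mapsto \pi(A|C,\omega,\ell)$ in the accepted screening set, which underlies \Cref{lem:perception_wages} and hence the indifference condition $\pi(C|C,\omega,\ell) = \pi(\varnothing|C,\omega,\ell)$; and (ii) strict monotonicity of $C \mapsto \bar{W}(C|c)/|C|$ in the set-inclusion order, which is the only property of the average screening wage actually invoked in \Cref{lem:simple_screendeviate2} (via \Cref{lem:simple_avg_wage}). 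Fact (ii) is imposed as a hypothesis of the corollary, so the real work is to confirm (i) and to check that the ability-cost structure does not disturb any individual step.

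For (i), the key observation is that statistical discrimination enters the firm's payoff only through per-worker output, and output stays additively separable across workers: a high-type firm earns $s'' - c$ from a $Y$ worker realized as low ability and $s''$ from every non-$Y$ (and high-ability $Y$) worker. All cross-worker coupling in $\pi$ arises from information sharing through the terms $P^j(\cdot)$, whose submodularity (\Cref{lem:submodular}) is untouched by $c$. Hence the term-by-term decomposition in the proof of \Cref{lem:simple_supermodular} transfers, with $c$ merely rescaling the additive output contributions; and, exactly as in \Cref{lem:perception_wages}, the penalty $\ell$ only reinforces supermodularity, since rejecting a screening offer from $i \in Y$ becomes strictly costlier as more offers in $C\setminus Y$ are accepted. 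With (i) and (ii) in hand, \Cref{lem:simple_screendeviate1,lem:simple_screendeviate2,lem:simple_pooldeviate} all hold, now read off the curve $\bar{W}(\cdot|c)$.

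Given this, the four worker-deviation cases and the firm deviation go through as in \Cref{prop:perception_discrimination}: any deviation by a screening worker (in or out of $Y$) triggers rejection of all of $C$ and so yields an $\ell$-independent payoff; a pooling $Y$ worker's highest acceptable screening offer is pinned down by the same threshold $\tilde w(C)$ and is unchanged in $\ell$; and the firm's only deviation, rejecting part of $C$, is weakly worse under $\ell' > 0$ because it opens the door to a lawsuit. The one step meriting care is ruling out the entrapment strategy in the second half of the proof. There one writes the firm's payoff as operating surplus plus expected lawsuit costs; because the ability cost is additive across workers, operating surplus still decomposes worker-by-worker, so an entrapping $i$ can profitably switch to the offer $\tilde w(C) + P(C\setminus Y)\ell$ without altering the firm's lawsuit exposure from the remaining entrappers, and the firm accepts. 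This reproduces the original contradiction.

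The main obstacle is conceptual rather than computational: the discounting of the expected $Y$/non-$Y$ output gap between bargaining rounds---the sole formal difference from the taste-based case---could in principle break the monotonicity of the average screening wage, and nothing in the primitives forces $\bar{W}(\cdot|c)/|C|$ to inherit the increasingness that \Cref{lem:simple_avg_wage} establishes for $\bar{W}(\cdot|0)/|C|$. This is precisely why (ii) is assumed rather than derived. Once it is granted, the additive separability of output across workers guarantees that supermodularity, the indifference characterization, and the entrapment argument all transfer unchanged, so the set of equilibrium outcomes $C,\omega$ is independent of $\ell$.
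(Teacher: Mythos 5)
Your proposal is correct and follows the same route as the paper, which simply asserts that the proof of \Cref{prop:perception_discrimination} carries over once the monotonicity of $C \mapsto \bar{W}(C|c)/|C|$ is assumed; you merely spell out the verification (additive separability of the ability cost leaving the $P^j(\cdot)$-driven supermodularity intact, and the entrapment step) that the paper leaves implicit.
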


$C \mapsto \bar{W}(C|0)/|C|$ is strictly decreasing by \Cref{lem:simple_avg_wage}. For any $c \geq 0$ and $i \not\in Y$, $\bar{W}(C \cup i|c)/(|C|+1) > \bar{W}(C|c)/|C|$. For $i \in Y$, $\bar{W}(C \cup i|c)/(|C|+1) > \bar{W}(C|c)/|C|$ will hold provided $c$ is not too large.

\subsection{Discriminatory low-type firm}

Consider either taste-based or statistical discrimination. A discriminatory low-type firm will employ no $Y$ workers in the second period (unless there is an additional round of information sharing after the second period), since the minimum wage is $s'$. In the first period, the low-type firm will either accept any offer of $s'$ made by a $Y$ worker, and thus avoid paying any penalty for discrimination, or reject all such offers and risk paying penalties. The higher the penalty $\ell$ and the higher the information sharing parameter $\rho$, the more likely it is that the firm will avoid lawsuits by hiring type $A$ workers.

\subsection{Empirical observations}

The results on discrimination can be summarized as follows
\begin{enumerate}
    \item For low-type discriminatory firms, increasing $\ell$ and $\rho$ reduces hiring discrimination.
    \item The penalty $\ell$ has no impact on the actions of high-type discriminatory firms. Increasing $\rho$ will not have a direct impact on discrimination. 
\end{enumerate}

The second point conforms with empirical evidence, such as that discussed by \cite{babcock2009women}, that women are paid less than men for the same job, and are less likely to ask for higher pay. The gender pay gap has remained virtually the same over the past decade, despite regulations such as the 2009 Lilly Ledbetter Fair Pay Act, which increased the penalty for pay discrimination and made it easier for employees to file discrimination cases. 

Pay gaps between men and women are highest in professions such as financial services, in which there is potentially large variability in productivity across firms.\footnote{\cite{stebbins2019paygap}}. In industries in which firms are homogeneously low-type, such as food service, the gender pay gap is smaller. 

These results are also consistent with patterns of racial discrimination. Black and Hispanic/Latino workers are over-represented in low paying jobs such as food servers and porters.\footnote{\cite{solomon2019systematic}} While there are a host of reasons for this pattern, it is consistent with the observation that measures meant to prevent discrimination, through increased wage transparency and higher penalties for discrimination, are more effective in preventing hiring discrimination in low-type firms than in closing the pay gap in high-type firms.

\newpage

\bibliography{reputation}

\begin{thebibliography}{25}
\providecommand{\natexlab}[1]{#1}
\providecommand{\url}[1]{\texttt{#1}}
\expandafter\ifx\csname urlstyle\endcsname\relax
  \providecommand{\doi}[1]{doi: #1}\else
  \providecommand{\doi}{doi: \begingroup \urlstyle{rm}\Url}\fi

\bibitem[Ausubel et~al.(2002)Ausubel, Cramton, Deneckere,
  et~al.]{ausubel2002bargaining}
L.~M. Ausubel, P.~Cramton, R.~J. Deneckere, et~al.
\newblock Bargaining with incomplete information.
\newblock \emph{Handbook of game theory}, 3:\penalty0 1897--1945, 2002.

\bibitem[Babcock and Laschever(2009)]{babcock2009women}
L.~Babcock and S.~Laschever.
\newblock \emph{Women don't ask: Negotiation and the gender divide}.
\newblock Princeton University Press, 2009.

\bibitem[Bolton and Harris(1999)]{bolton1999strategic}
P.~Bolton and C.~Harris.
\newblock Strategic experimentation.
\newblock \emph{Econometrica}, 67\penalty0 (2):\penalty0 349--374, 1999.

\bibitem[Bowles et~al.(2007)Bowles, Babcock, and Lai]{bowles2007social}
H.~R. Bowles, L.~Babcock, and L.~Lai.
\newblock Social incentives for gender differences in the propensity to
  initiate negotiations: Sometimes it does hurt to ask.
\newblock \emph{Organizational Behavior and human decision Processes},
  103\penalty0 (1):\penalty0 84--103, 2007.

\bibitem[Bradford(2018)]{bradford2018}
L.~Bradford.
\newblock Are tech companies breaking the law with pay secrecy policies?, Sep
  2018.
\newblock URL
  \url{https://www.forbes.com/sites/laurencebradford/2018/09/11/are-tech-companies-breaking-the-law-with-pay-secrecy-policies}.

\bibitem[Chaves(2019)]{chaves2019privacy}
I.~N. Chaves.
\newblock Privacy in bargaining: The case of endogenous entry.
\newblock \emph{Available at SSRN 3420766}, 2019.

\bibitem[Collard-Wexler et~al.(2019)Collard-Wexler, Gowrisankaran, and
  Lee]{collard2019nash}
A.~Collard-Wexler, G.~Gowrisankaran, and R.~S. Lee.
\newblock “nash-in-nash” bargaining: A microfoundation for applied work.
\newblock \emph{Journal of Political Economy}, 127\penalty0 (1):\penalty0
  163--195, 2019.

\bibitem[Doval and Skreta(2019)]{doval2019optimal}
L.~Doval and V.~Skreta.
\newblock Optimal mechanism for the sale of a durable good.
\newblock \emph{arXiv preprint arXiv:1904.07456}, 2019.

\bibitem[Fudenberg and Kreps(1987)]{fudenberg1987reputation}
D.~Fudenberg and D.~M. Kreps.
\newblock Reputation in the simultaneous play of multiple opponents.
\newblock \emph{The Review of Economic Studies}, 54\penalty0 (4):\penalty0
  541--568, 1987.

\bibitem[Fudenberg and Levine(1989)]{fudenberg1989reputation}
D.~Fudenberg and D.~Levine.
\newblock Reputation and equilibrium selection in games with a single long-run
  player.
\newblock \emph{Econometrica}, 57:\penalty0 759--778, 1989.

\bibitem[Ghosh(2014)]{ghosh2014multiple}
S.~Ghosh.
\newblock Multiple long-lived opponents and the limits of reputation.
\newblock \emph{unpublished}, 2014.

\bibitem[Grossman and Perry(1986)]{grossman1986perfect}
S.~J. Grossman and M.~Perry.
\newblock Perfect sequential equilibrium.
\newblock \emph{Journal of economic theory}, 39\penalty0 (1):\penalty0 97--119,
  1986.

\bibitem[Hernandez et~al.(2019)Hernandez, Avery, Volpone, and
  Kaiser]{hernandez2019bargaining}
M.~Hernandez, D.~R. Avery, S.~D. Volpone, and C.~R. Kaiser.
\newblock Bargaining while black: The role of race in salary negotiations.
\newblock \emph{Journal of Applied Psychology}, 104\penalty0 (4):\penalty0 581,
  2019.

\bibitem[Keller et~al.(2005)Keller, Rady, and Cripps]{keller2005strategic}
G.~Keller, S.~Rady, and M.~Cripps.
\newblock Strategic experimentation with exponential bandits.
\newblock \emph{Econometrica}, 73\penalty0 (1):\penalty0 39--68, 2005.

\bibitem[Kreps and Wilson(1982)]{kreps1982reputation}
D.~M. Kreps and R.~Wilson.
\newblock Reputation and imperfect information.
\newblock \emph{Journal of economic theory}, 27\penalty0 (2):\penalty0
  253--279, 1982.

\bibitem[Murto and V{\"a}lim{\"a}ki(2011)]{murto2011learning}
P.~Murto and J.~V{\"a}lim{\"a}ki.
\newblock Learning and information aggregation in an exit game.
\newblock \emph{The Review of Economic Studies}, 78\penalty0 (4):\penalty0
  1426--1461, 2011.

\bibitem[Pei(2020)]{pei2020trust}
H.~Pei.
\newblock Trust and betrayals: Reputational payoffs and behaviors without
  commitment.
\newblock \emph{Theoretical Economics}, forthcoming, 2020.

\bibitem[Rubinstein(1982)]{rubinstein1982perfect}
A.~Rubinstein.
\newblock Perfect equilibrium in a bargaining model.
\newblock \emph{Econometrica: Journal of the Econometric Society}, pages
  97--109, 1982.

\bibitem[Schmidt(1993)]{schmidt1993commitment}
K.~M. Schmidt.
\newblock Commitment through incomplete information in a simple repeated
  bargaining game.
\newblock \emph{Journal of Economic Theory}, 60\penalty0 (1):\penalty0
  114--139, 1993.

\bibitem[Solomon et~al.(2019)Solomon, Maxwell, and
  Castro]{solomon2019systematic}
D.~Solomon, C.~Maxwell, and A.~Castro.
\newblock Systematic inequality and economic opportunity.
\newblock \emph{Center for American Progress}, 2019.

\bibitem[Stebbins and Frohlich(2019)]{stebbins2019paygap}
S.~Stebbins and T.~C. Frohlich.
\newblock 20 jobs with the largest gender pay gap for women, Dec. 2019.
\newblock URL
  \url{https://www.usatoday.com/story/money/careers/2018/12/05/gender-pay-gap-2018-worst-paying-jobs-women/38565069/}.

\bibitem[Stole and Zwiebel(1996)]{stole1996intra}
L.~A. Stole and J.~Zwiebel.
\newblock Intra-firm bargaining under non-binding contracts.
\newblock \emph{The Review of Economic Studies}, 63\penalty0 (3):\penalty0
  375--410, 1996.

\bibitem[Strulovici(2010)]{strulovici2010learning}
B.~Strulovici.
\newblock Learning while voting: Determinants of collective experimentation.
\newblock \emph{Econometrica}, 78\penalty0 (3):\penalty0 933--971, 2010.

\bibitem[Thomas(2020)]{thomas2020strategic}
C.~Thomas.
\newblock Strategic experimentation with congestion.
\newblock \emph{American Economics Journal: Microeconomics}, forthcoming, 2020.

\bibitem[Wilson~Pecci(2020)]{wilsonpecci2020}
A.~Wilson~Pecci.
\newblock Price transparency: Why disclosing negotiated rates is sticking point
  for hospitals, Jul 2020.
\newblock URL
  \url{https://www.healthleadersmedia.com/revenue-cycle/price-transparency-why-disclosing-negotiated-rates-sticking-point-hospitals}.

\end{thebibliography}

\newpage
\appendix

\section{Alternating offers bargaining}\label{sec:alternatingoffers}

There is a set $W$ of workers, one firm, and 2 periods. The firm has private information about the per-period surplus generated by a worker, which is $s \in \{s', s''\}$, with $s'' > s' > 0$. The worker's outside option is $0$, so there are always gains from cooperation. The worker's prior belief is that the firm is of type $s''$ with probability $p$. The firm discounts between periods at rate $\beta$, and workers do not discount between periods.

In each period the firm and worker engage in alternating-offers bargaining over a wage to be paid to the worker in the first period, \`a la \cite{rubinstein1982perfect} (so there is an infinite time horizon within each period).\footnote{I assume that the worker and firm cannot commit to a second period wage in the first period. If they could then this model would be equivalent to a single worker-single firm model. I will discuss a different but related model in which workers arrive sequentially. In this case we can just assume that each worker works for only one period.} Workers and firms have a common discount factor $\delta$ in the bargaining game. After the first period, but before the second round of wage negotiations, worker $i$ observes worker $j$'s first period wage with probability $\rho_{ij}$. The solution concept is SPBNE, with some restrictions which I will introduce later on.

\subsection{Period 2}
Consider first the negotiation in the second period, which I refer to as the P2 game. In period 2 the problem separates completely across workers, so we only need to consider a single-worker negotiation. An important preliminary observation is that discounting generates single crossing in the firm's propensity to accept a given wage. In other words, high type firms are more impatient, since delay is more costly.

\begin{lemma}\label{lem:P2_singlecrossing}
In any P2 equilibrium if firm $s'$ accepts a wage proposal then so does $s''$.
\end{lemma}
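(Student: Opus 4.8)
The plan is to prove a standard single-crossing (monotone best-response) property: the type-$s$ firm's net gain from accepting the proposal is increasing in $s$, so the set of types that accept any given wage is an upper interval, which on a binary type space means acceptance by $s'$ forces acceptance by $s''$. Fix a P2 equilibrium and a decision node at which the firm must respond to a wage proposal $w$ (the worker having proposed). Accepting yields the payoff $s - w$, while rejecting leads to the continuation subgame; write $V(s)$ for the type-$s$ firm's equilibrium continuation value at that subgame, so that rejecting is worth $\delta V(s)$. Define the net gain from acceptance $g(s) = (s - w) - \delta V(s)$; the firm accepts iff $g(s) \ge 0$. It therefore suffices to show $g(s'') \ge g(s')$, i.e. $(s'' - s') \ge \delta\big(V(s'') - V(s')\big)$.

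The crux is bounding the type-differential in continuation values, namely showing $V(s'') - V(s') \le s'' - s'$, and I would establish this with a mimicking (single-deviation) argument. Consider the low type $s'$ deviating in the continuation subgame to play the exact strategy that the high type $s''$ uses on its equilibrium path. Because the worker's beliefs and responses depend only on the firm's observable actions (its offers and its accept/reject choices), and never directly on the firm's private type, replicating $s''$'s action plan induces exactly the same worker behavior, hence the same (possibly random) agreement time $\tau$ and agreed wage $w_\tau$ that $s''$ faces on path. Accounting for the only difference, the flow surplus, this deviation gives $s'$ a payoff of $\mathbb{E}_{s''}[\delta^{\tau}(s' - w_\tau)] = V(s'') - (s'' - s')\,\mathbb{E}_{s''}[\delta^{\tau}]$. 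Optimality of $s'$'s own equilibrium continuation then yields $V(s') \ge V(s'') - (s'' - s')\,\mathbb{E}_{s''}[\delta^{\tau}]$, and since $\mathbb{E}_{s''}[\delta^{\tau}] \le 1$ this delivers the desired bound $V(s'') - V(s') \le s'' - s'$.

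Combining the pieces finishes the argument: $g(s'') - g(s') = (s'' - s') - \delta\big(V(s'') - V(s')\big) \ge (s'' - s') - \delta(s'' - s') = (1 - \delta)(s'' - s') \ge 0$, so $g(s') \ge 0$ implies $g(s'') \ge 0$ (strictly, when $\delta < 1$), and the high type accepts whenever the low type does. I expect the main obstacle to be making the mimicking step fully rigorous in the incomplete-information environment: one must verify that the low type can feasibly reproduce the high type's entire contingent plan, that doing so leaves the worker's belief path and responses unchanged along every realized history, and that perpetual-disagreement paths (where $\tau = \infty$ and $\delta^{\tau} = 0$) are handled correctly by the same inequality. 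Once the bound $V(s'') - V(s') \le s'' - s'$ is secured, the conclusion is immediate from $\delta \le 1$.
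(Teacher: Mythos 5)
Your proof is correct and rests on the same single-crossing idea as the paper's: because the worker's responses depend only on observable actions, the continuation possibilities after rejection are type-independent, and the gain from accepting now rather than waiting, $(1-\delta^{j})s - w + \delta^{j}w_{t+j}$, is increasing in $s$. The paper runs this comparison pointwise over feasible continuation outcomes, whereas you package it as a mimicking bound $V(s'') - V(s') \le s'' - s'$ on continuation values; these are the same argument in substance, and yours is if anything spelled out more carefully.
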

\begin{proof}
Suppose that the worker has proposed $w$ in round $t$. Let $w_{t+j}$ be the lowest wage the firm can obtain at time $t+j$ for $j \in \mathbbm{N}$ by rejecting and making counter offers. Type $s$ accepts $w$ if and only if 
\begin{equation*}
    s - w_{t+j} \geq \delta^{t+j}(s - w_{t+j}) \ \ \forall \ j \in \mathbbm{N}. 
\end{equation*}
Clearly if this condition holds for $s'$ then it holds for $s''$. 
\end{proof}

Recall that in the complete information stage game in which the worker knows that the firm is of type $s$, \cite{rubinstein1982perfect} shows that there is a unique subgame perfect equilibrium in which the worker proposes a wage of $w(s) := s(1-\delta)/(1-\delta^2)$ and the firm accepts. Another general property of equilibrium strategies is that a worker will never propose a wage below $w(s')$.

\begin{lemma}\label{lem:P2_lowerbound}
No P2 equilibrium strategy involves the worker proposing a wage below $w(s')$. 
\end{lemma}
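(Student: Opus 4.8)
The plan is to adapt Rubinstein's stationary inf/sup argument to this one-sided-incomplete-information game, exploiting single crossing (Lemma \ref{lem:P2_singlecrossing}) so that a guarantee the worker obtains ``against the low type'' automatically carries the high type along. Fix an arbitrary P2 equilibrium and define two numbers: $W_1$, the infimum over all equilibria and all worker-proposer histories of the worker's expected continuation payoff, and $F_1$, the supremum over all equilibria and all firm-proposer histories of the \emph{type-$s'$} firm's continuation payoff. Both are finite since payoffs are bounded. The goal is to show $W_1 \ge w(s')$; the lemma then follows quickly.

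Next I would establish two linked inequalities. For the first, $W_1 \ge s' - \delta F_1$: at any worker-proposer history the worker may deviate to propose $w = s' - \delta F_1$. The type-$s'$ firm's payoff from rejecting is a firm-proposer continuation discounted once, hence at most $\delta F_1 = s' - w$, so type $s'$ accepts; by Lemma \ref{lem:P2_singlecrossing} type $s''$ accepts as well. Thus both types pay $w$ with certainty, so the worker's expected payoff from the deviation is exactly $s' - \delta F_1$, independent of beliefs. For the second, $F_1 \le s' - \delta W_1$: at any firm-proposer history, if the worker rejects, play moves to a worker-proposer node where the worker's continuation is at least $\delta W_1$; hence the worker rejects any offer below $\delta W_1$, an accepting offer costs the type-$s'$ firm at least $\delta W_1$, and delay only lowers its payoff. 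Combining the two gives $W_1 \ge s'(1-\delta) + \delta^2 W_1$, i.e.\ $W_1 \ge s'(1-\delta)/(1-\delta^2) = w(s')$.

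With this in hand I would finish as follows. Feeding $W_1 \ge w(s')$ back into $F_1 \le s' - \delta W_1$ gives $F_1 \le s' - \delta w(s') = w(s')$; hence at any worker-proposer history the type-$s'$ firm's payoff from rejecting, being a once-discounted firm-proposer continuation, is at most $\delta F_1 \le \delta w(s') = s' - w(s')$. Therefore it accepts any proposal $w \le w(s')$, and by single crossing so does type $s''$. Consequently, were the worker to propose some $w < w(s')$, it could instead propose any $w' \in (w, w(s'))$: both types would still accept, delivering the strictly higher sure wage $w'$. So proposing below $w(s')$ is strictly dominated and cannot occur in equilibrium.

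The delicate point, and the reason the argument is not a verbatim copy of Rubinstein, is the incomplete information: the worker's continuation is an expectation over firm types, not a type-contingent value. The trick that makes it go through is that each bound only needs one-sided information --- the worker needs a \emph{lower} bound on its own (expected) continuation, supplied by $\delta W_1$, while the type-$s'$ firm needs an \emph{upper} bound on its own continuation, supplied by $\delta F_1$ --- and single crossing guarantees that whenever the worker secures acceptance from type $s'$ it simultaneously secures it from type $s''$. The hard part will be checking these one-sided bounds carefully (in particular that the relevant continuation node after a rejection is genuinely a worker- resp.\ firm-proposer node); the algebra then collapses to the familiar Rubinstein fixed point.
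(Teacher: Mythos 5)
Your proof is correct and follows essentially the same route as the paper's: a Shaked--Sutton/Rubinstein extremal-bound argument combined with single crossing (Lemma \ref{lem:P2_singlecrossing}) so that securing acceptance from type $s'$ automatically secures it from type $s''$. The paper compresses your two linked inequalities into a single self-referential bound on $\tilde{w}$, the infimum of proposed wages, but the fixed point is the same, $w(s') = s'(1-\delta)/(1-\delta^2)$, and your step ``delay only lowers its payoff'' is exactly as (in)formal as the paper's implicit claim that the firm's best post-rejection agreement is at wage $\delta\tilde{w}$.
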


\begin{proof}
Let $\Tilde{w}$ be the infimum of the set of wages proposed by the worker following any history, in any equilibrium. A necessary condition for type $s$ to reject a proposal of $w'$ is 
\begin{equation*}
    s - w' \leq \delta(s - \delta \Tilde{w})
\end{equation*}
or equivalently,
\begin{equation*}
    w' \geq (1-\delta) s + \delta^2 \Tilde{w}.
\end{equation*}
Notice that $\Tilde{w} < (1-\delta) s + \delta^2 \Tilde{w} \Leftrightarrow \Tilde{w} < w(s)$. So if $\Tilde{w} < w(s') < w(s'')$ there exists a wage $w' > \Tilde{w}$ such that both types of firms accept any $w \leq w'$ following any history. This means that proposing any $w < w'$ at any point is dominated by proposing $w'$, which contradicts the definition of $\Tilde{w}$. 
\end{proof}

Lemma \ref{lem:P2_singlecrossing} implies that, when the worker does not know the firm's type, equilibrium in the P2 game take one of two forms, pooling or separation. In the pooling equilibrium the worker makes a wage offer in round 1 which is accepted by both firms. In the separating equilibrium the worker first makes an offer $w_h$ which is accepted by the type $s''$ firm, and then the type $s'$ firm offers $\delta w(s')$, which is accepted by the worker. The high type firm will accept $w_h$ if and only if 
\begin{equation*}
    s'' - w_h \geq \delta(s'' - \delta w(s'))
\end{equation*}
and so the highest acceptable wage that the worker can offer is $w_h = (1-\delta)s'' + \delta^2 w(s')$. This can be written as $w_h = w(s'') - \delta^2 (w(s'') - w(s'))$. The payoff to the worker of engaging in screening, when the worker attaches probability $\hat{p}$ to the firm being type $s''$, is given by 
\begin{equation*}
    V_S(\hat{p}) = \hat{p} w_h + (1-\hat{p})\delta^2 w(s').
\end{equation*}

Alternatively the worker can make an offer in the first period that is accepted by both firms. Without restriction off-path beliefs there are many equilibria that can be supported, including a wage offer of $s'$. However these equilibria are ruled out by a mild monotonicity condition.\footnote{This condition is satisfied under various refinements of SPBNE, such as the perfect sequential equilibrium solution concept of \cite{grossman1986perfect}} It turns out that under this condition the worker can do no better than if they were facing the low type firm alone.  

\vst
\noindent \textbf{Inner Weak Monotonicity.} Suppose the worker in the P2 game assigns positive probability to the firm being type $s'$. If the firm rejects an offer and makes a lower one then the worker continues to assign positive probability to type $s'$. 

\begin{lemma}\label{lem:P2_upperbound}
Assume Inner Weak Monotonicity. In any P2 equilibrium, the type $s'$ firm will never accept a wage greater than $w(s')$. 
\end{lemma}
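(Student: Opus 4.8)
The plan is to run a supremum argument dual to the infimum argument used for \Cref{lem:P2_lowerbound}. Let $\bar w$ denote the supremum, taken over all P2 equilibria and all histories at which the worker assigns positive probability to type $s'$, of the wages that type $s'$ accepts. (These histories are exactly the ones reached with positive probability when the firm is $s'$, so the restriction is harmless.) The goal is to show $\bar w \leq w(s')$; suppose instead that $\bar w > w(s')$, and fix a history $h$ at which the worker has just proposed a wage $w$ with $w(s') < w \leq \bar w$ that type $s'$ accepts, with $w$ within $\varepsilon$ of $\bar w$.

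The contradiction will come from exhibiting a profitable deviation for type $s'$: reject $w$ and, as proposer in the next round, offer a wage $\hat w$ slightly above $\delta \bar w$ but below $w$ (possible since $\delta \bar w < \bar w$ and $w$ is close to $\bar w$). By Inner Weak Monotonicity, since the firm rejected $w$ and then made the strictly lower offer $\hat w < w$, the worker continues to assign positive probability to $s'$. The claim to establish is that the worker accepts $\hat w$; granting this, type $s'$ secures a continuation payoff of at least $\delta(s' - \hat w)$, which for $\hat w$ close to $\delta \bar w$ exceeds $s' - w$ whenever $w > w(s')$, contradicting optimality of acceptance at $h$. Taking the supremum over accepted wages in the resulting acceptance inequality $s' - w \geq \delta(s' - \delta \bar w)$ then yields $\bar w \leq (1-\delta)s' + \delta^2 \bar w$, i.e. $\bar w \leq s'(1-\delta)/(1-\delta^2) = w(s')$, the desired contradiction.

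The main obstacle, and the only place the hypotheses do real work, is showing that the worker accepts the firm's low counteroffer $\hat w$ rather than holding out. The worker accepts iff $\hat w \geq \delta V_W$, where $V_W$ is the worker's continuation value from rejecting and proposing in the following round, so it suffices to show $V_W \leq \bar w$. I would argue this as follows. Because the firm can keep rejecting and lowering its demand, Inner Weak Monotonicity guarantees the worker never becomes certain it faces $s''$; against a firm that mimics the complete-information $s'$ strategy, perpetual disagreement yields the worker only its outside option of $0$, so by optimality the worker strictly prefers to settle with the $s'$-type at some wage, which by the definition of $\bar w$ together with single crossing (\Cref{lem:P2_singlecrossing}) cannot exceed $\bar w$, while \Cref{lem:P2_lowerbound} prevents it from dropping below $w(s')$. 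Formalizing that the worker cannot profitably sustain a holdout strategy to extract more than $\bar w$ in present value from the low type — i.e. that the worker's attempt to screen a possibly-high type does not let it push the low type's continuation wage above $\bar w$ — is the delicate step, and is precisely what Inner Weak Monotonicity is designed to deliver by forbidding the inference that a rejection-and-lowering reveals type $s''$.
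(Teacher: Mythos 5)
Your proposal is correct and is essentially the paper's own argument: the same supremum $\hat{w}$ over wages accepted by type $s'$ at histories where the worker retains positive probability on $s'$, the same use of Inner Weak Monotonicity to keep that belief non-degenerate after the firm rejects and counteroffers lower, and the same fixed-point inequality $\hat{w} \leq (1-\delta)s' + \delta^2\hat{w}$ forcing $\hat{w} \leq w(s')$. The only difference is presentational: you explicitly flag and argue the step that the worker must accept a counteroffer just above $\delta\hat{w}$, which the paper leaves implicit in writing down the acceptance condition $s' - w \geq \delta(s' - \delta\hat{w})$.
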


\begin{proof}
Let $\Hat{w}$ be the supremum of the set of wages accepted by type $s'$ and offered by a worker who assigns positive probability to type $s'$. Suppose the worker makes a proposal of $w$. The low type firm accepts $w$ if this is better than rejecting $w$ and proposing a lower wage. Under Weak Monotonicity, following such a rejection the worker continues to assign positive probability to type $s'$. Thus a necessary condition for $w$ to be accepted by $s'$ is.
\begin{equation*}
   s' - w \geq \delta(s' - \delta \hat{w})
\end{equation*}
or equivalently,
\begin{equation*}
    w \leq (1 - \delta)s' + \delta^2 \hat{w}.
\end{equation*}

Notice that $(1 - \delta)s' + \delta^2 \hat{w} < \hat{w} \Leftrightarrow \hat{w} > w(s')$. So if $\hat{w} > w(s')$ then there exists $\varepsilon > 0$ such that no wage greater than $\hat{w} - \varepsilon$ is accepted by type $s'$. But this contradicts the definition of $\hat{w}$. 
\end{proof}

\begin{corollary}\label{cor:P2_pooling}
Under Inner Weak Monotonicity, the unique NIP2 pooling equilibrium wage is $w(s')$. 
\end{corollary}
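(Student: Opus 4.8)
The plan is to pin down the pooling wage by sandwiching it between the lower bound of Lemma \ref{lem:P2_lowerbound} and the upper bound of Lemma \ref{lem:P2_upperbound}. Fix a pooling equilibrium of the no-information P2 game. On the equilibrium path the worker, who enters period 2 assigning positive probability to the firm being type $s'$, makes some first-round offer $w$ that both firm types accept.

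First I would apply the upper bound. Since type $s'$ accepts $w$, and the offer originates from a worker who assigns positive probability to type $s'$, Lemma \ref{lem:P2_upperbound} (which invokes Inner Weak Monotonicity) yields $w \leq w(s')$. Next I would apply the lower bound: by Lemma \ref{lem:P2_lowerbound}, no equilibrium strategy has the worker proposing a wage below $w(s')$, so $w \geq w(s')$. The two inequalities force $w = w(s')$, which is the claimed uniqueness.

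For existence I would verify that $w(s')$ is actually supported as a pooling wage. Against a known type $s'$, the Rubinstein benchmark has the worker offer $w(s')$ and the low type accept it; the single-crossing property of Lemma \ref{lem:P2_singlecrossing} then guarantees that type $s''$ accepts $w(s')$ as well, so both types indeed pool at this wage. I do not expect a substantive obstacle here. The only point requiring care is conceptual rather than technical: one must recognize that Lemma \ref{lem:P2_upperbound} applies to the very first offer precisely because, prior to any rejection, the uninformed worker still places positive probability on the low type $s'$. Once this observation is in place, the corollary follows immediately as the sandwich of the two bounds.
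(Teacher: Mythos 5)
Your proof is correct and matches the paper's intended argument: the corollary is stated without proof precisely because it is the immediate sandwich of Lemma \ref{lem:P2_lowerbound} (the worker never proposes below $w(s')$) and Lemma \ref{lem:P2_upperbound} (type $s'$ never accepts above $w(s')$), which is exactly what you do. Your added existence check via Lemma \ref{lem:P2_singlecrossing} is a harmless supplement, not a departure.
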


By Corollary \ref{cor:P2_pooling}, the worker screens the firms if and only if 
\begin{equation*}
    p w_1 + (1-p)\delta^2 w(s') \geq w(s') \ \ \Leftrightarrow \ \ p s'' \geq s'.
\end{equation*}

It will be important in what follows that the low type is indifferent between the outcomes of the complete information equilibrium, the pooling equilibrium, and the screening equilibrium. In the former two cases the firm immediately accepts the wage offer of $w(s')$. In the latter case the firm delays one round and then proposes the wage $\delta w(s')$, which is accepted. Indifference follows since by construction $s' - w(s') = \delta (s' - \delta w(s'))$.

\subsection{Period 1 - no information}

Consider now the game beginning in period 1, but assume that there is no information sharing between workers; $\rho_{i,j} = 0$ for all $i,j$. Call this the NIP1 (no information period 1) game. In this case the conclusion of Lemma \ref{lem:P2_upperbound} continues to hold. 

\begin{lemma}\label{lem:NIP1_upperbound}
Under Inner Weak Monotonicity, in the NIP1 game the type $s'$ firm never accepts a wage offer above $w(s')$. 
\end{lemma}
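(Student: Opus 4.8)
The plan is to reduce the claim to the within-period argument already carried out for the P2 game in Lemma \ref{lem:P2_upperbound}. The obstacle to a direct repetition is that in period 1 the low-type firm also carries a second-period continuation payoff, so a priori its acceptance decision could be driven by a desire to shape the worker's period-2 belief rather than by period-1 surplus alone. The key observation that removes this obstacle is that the low type's continuation payoff is in fact a belief-independent constant, so it cancels out of every acceptance comparison and the period-1 problem becomes strategically identical, for the low type, to the P2 problem.

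First I would record that, because there is no information sharing ($\rho_{ij}=0$) and the firm's output is additively separable across workers, the firm's play with a single worker in period 1 influences only that worker's period-2 belief; hence it suffices to analyze one worker in isolation. Next I would argue that for any period-2 belief $\hat p \in [0,1]$ the low type's period-2 payoff equals $s' - w(s')$. For $\hat p$ inducing pooling this is immediate from Corollary \ref{cor:P2_pooling}; for $\hat p$ inducing screening it follows from the indifference property noted after that corollary, namely $s' - w(s') = \delta(s' - \delta w(s'))$; and even for an off-path belief $\hat p = 1$, single crossing (Lemma \ref{lem:P2_singlecrossing}) implies that a rejection of the high offer $w_h$ reveals the low type, after which the complete-information outcome again delivers $s' - w(s')$ by the same indifference. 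Thus the low type's total continuation into period 2 is the constant $\beta(s'-w(s'))$, regardless of what it does in period 1.

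Given this, I would run the supremum argument of Lemma \ref{lem:P2_upperbound} verbatim on the period-1 bargaining. Let $\hat w$ be the supremum of period-1 wages accepted by the low type when offered by a worker who assigns positive probability to $s'$. Because the period-2 term $\beta(s'-w(s'))$ is additively common to accepting now and to every continuation reached by rejecting, it cancels, and the low type accepts an offer $w$ only if its period-1 surplus satisfies $s' - w \geq \delta(s' - \delta \hat w)$, i.e. $w \leq (1-\delta)s' + \delta^2 \hat w$; here Inner Weak Monotonicity, applied to the period-1 bargaining, guarantees that after the firm rejects and proposes a lower wage the worker still assigns positive probability to $s'$, which is what pins the counteroffer value at $\delta(s'-\delta\hat w)$. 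Since $(1-\delta)s' + \delta^2\hat w < \hat w$ precisely when $\hat w > w(s')$, assuming $\hat w > w(s')$ yields an $\varepsilon>0$ such that no wage above $\hat w - \varepsilon$ is accepted, contradicting the definition of $\hat w$. Hence $\hat w \leq w(s')$, which is the assertion.

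The step I expect to be the main obstacle is the second one: verifying that the low type's period-2 value is genuinely belief-independent, including at off-path beliefs such as $\hat p = 1$ that could only be reached if the low type had (counterfactually) accepted a high wage. One must check there that the subsequent rejection of the screening offer returns the worker to the complete-information low-type continuation, so that no period-1 acceptance of a high wage can ``buy'' the low type a more favorable period-2 position. Once this invariance is established, the remainder is a mechanical transcription of Lemma \ref{lem:P2_upperbound}.
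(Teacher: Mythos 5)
Your proposal is correct and follows essentially the same route as the paper: both rest on the observation that the low type's period-2 payoff is a belief-independent constant (so long as the worker retains positive probability on $s'$, which Inner Weak Monotonicity secures along the relevant rejection path), after which the supremum argument of Lemma \ref{lem:P2_upperbound} transfers verbatim. Your extra discussion of the degenerate belief $\hat p = 1$ is slightly overstated (with degenerate beliefs a rejection need not restore positive probability on $s'$, and the low type's continuation there could be strictly worse than $s'-w(s')$), but this only reinforces the inequality you need, so the argument stands.
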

\begin{proof}
Recall that the period 2 payoff of the type $s'$ firm is the same if in the second period it's type is fully revealed, screened, or pooled. Thus the firm's second period payoff is independent of worker beliefs, so long as these assign positive probability to $s'$. The proof is then the same as that of Lemma \ref{lem:P2_upperbound}. 
\end{proof}

The separating equilibrium in the NIP1 game takes a similar form. Define $w_h^1$ by 
\begin{equation*}
     s'' - w_h^1 + \beta(s'' - w(s'')) = \delta(s'' - \delta w(s')) + \beta(s'' - w(s')).
\end{equation*}
In words, $w_h^1$ is the highest wage that type $s''$ is willing to accept in period 1 and be revealed, rather then delay by one round of negotiation to pass as type $s'$.\footnote{$w_h^1 \geq w(s')$ iff $1-\delta^2 > \beta$} Rearranging, 
\begin{equation*}
    w_h^1 = (1-\delta)s'' + \delta^2 w(s') - \beta (w(s'') - w(s')).
\end{equation*}

\begin{lemma}\label{lem:NIP1_separating}
Under Inner Weak Monotonicity, in the NIP1 game, if the equilibrium in the negotiation with a worker is separating then the worker first offers wage $w_h^1$, and the high type firm accepts. If the first offer is not accepted then the firm offers $\delta w(s')$, which is accepted. 
\end{lemma}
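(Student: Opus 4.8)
The plan is to mirror the structure of the period-2 analysis, adding the reputational twist that period-1 play determines the worker's period-2 belief. Throughout I assume the negotiation is separating, so by definition the first offer $w_1$ is accepted by the type $s''$ firm and rejected by the type $s'$ firm; single crossing (Lemma \ref{lem:P2_singlecrossing}) guarantees that no other split of responses is consistent with separation. First I would record that separation forces $w_1 > w(s')$: by Lemma \ref{lem:NIP1_upperbound} the low type accepts no wage exceeding $w(s')$, so any $w_1 \leq w(s')$ would be accepted by both types and hence pool. Consequently, on the equilibrium path a rejection of the first offer reveals the firm to be the low type, and the worker's belief becomes degenerate on $s'$.

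Next I would establish the second assertion (behavior after a rejected first offer). Once the first offer is rejected, beliefs are concentrated on the low type, so the round-2 subgame is the complete-information Rubinstein game with the firm as proposer. Invoking the Rubinstein outcome together with the bounds already proved — the worker never proposes below $w(s')$ (Lemma \ref{lem:P2_lowerbound}) and the low type never accepts above $w(s')$ (Lemma \ref{lem:NIP1_upperbound}) — the unique continuation has the firm offering the worker's discounted proposer value $\delta w(s')$, which the worker accepts. Here I would also note, as recorded earlier, that the low type is indifferent across its period-2 continuations, so this continuation payoff does not depend on residual worker beliefs as long as they keep positive weight on $s'$.

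Then I would pin down the first offer. The type $s''$ firm can always guarantee the \emph{mimicking} payoff by rejecting the first offer and thereafter playing exactly as the low type: it offers $\delta w(s')$ in round 2, which the worker accepts since it cannot tell the types apart, and in period 2 it is still taken to be the low type and receives $s'' - w(s')$. This mimicking value is exactly the right-hand side $\delta(s'' - \delta w(s')) + \beta(s'' - w(s'))$ of the equation defining $w_h^1$, while accepting $w_1$ and being revealed yields the firm $s'' - w_1 + \beta(s'' - w(s''))$, the left-hand side. Hence the high type is willing to accept $w_1$ precisely when $w_1 \leq w_h^1$. Since the worker's payoff strictly increases in $w_1$ over the separating range $(w(s'), w_h^1]$ — a higher accepted wage is extracted from the high type while the low type continues to reject — the worker optimally offers $w_1 = w_h^1$, at which the high type is indifferent and accepts.

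The step I expect to be the main obstacle is verifying that, upon rejecting the first offer, the high type can do no better than mimic the low type, so that its reservation value is exactly the mimicking value and the acceptance threshold is precisely $w_h^1$. This requires ruling out a profitable ``delayed separation'' by the high type — rejecting and then making a higher counteroffer — which I would handle using Inner Weak Monotonicity (a counteroffer at or below $\delta w(s')$ leaves positive probability on $s'$, so the worker will not concede more) together with the fact that revealing the high type in period 1 only lowers its period-2 payoff (from $s'' - w(s')$ to $s'' - w(s'')$). The same considerations rule out a profitable worker deviation to a first offer above $w_h^1$, which would trigger a non-separating continuation and leave the worker worse off.
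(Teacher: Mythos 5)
Your proof is correct and follows essentially the same route as the paper's (much terser) argument: separation means the high type accepts the first offer, rejection reveals the low type and puts the continuation in the complete-information Rubinstein game with the firm proposing $\delta w(s')$, and the high type's indifference between accepting-and-being-revealed and mimicking the low type pins the first offer at $w_h^1$. The additional detail you supply — the explicit reservation-value computation and the use of Inner Weak Monotonicity to cap the high type's post-rejection payoff at the mimicking value — fills in steps the paper leaves implicit but does not change the approach.
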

\begin{proof}
Since the equilibrium is not pooling, the firm's type will be revealed. Then it must be that on-path the negotiation with type $s''$ ends before that with $s'$. This is because the highest continuation payoff for $s''$ comes from convincing workers it is type $s'$, combined with the fact that waiting is more costly for type $s''$ (see Lemma \ref{lem:P2_singlecrossing}). Once the low type is revealed we are back to the complete information game, so the worker will offer $w(s')$.
\end{proof}

Without further restrictions on off-path beliefs we cannot rule out pooling equilibria with wages below $w(s')$. Such an equilibrium could be supported if the worker's believes that the firm is type $s''$ for sure flowing acceptance of a wage above the equilibrium level. Such equilbibria are ruled out under natural refinements, for example, the Perfect Sequential Equilibrium (PSE) of \cite{grossman1986perfect}. Roughly speaking, off path beliefs in PSE are guided by the following question: for a given deviation, is there a set $S$ of firm types that would benefit from deviating if the worker's belief upon observing the deviation is the restriction of the prior to $S$? If so then this should be the belief of the worker. In the case of a NIP1 pooling equilibrium with first period wage $\bar{w} < w(s')$, the deviation would be accepting a wage in $(\bar{w}, w(s')]$. If the worker believes that both types will accept such a wage then it is optimal for both types to do so. Thus no such PSE can be supported.

\begin{lemma}
Under Inner Weak Monotonicity, the unique pooling PSE in the NIP1 game is at a wage of $w(s')$. 
\end{lemma}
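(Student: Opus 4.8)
The plan is to pin down the pooling wage by a two-sided bound, showing that no pooling PSE can have a wage above $w(s')$ and that none can have a wage strictly below it. The upper bound is immediate: in any pooling equilibrium both firm types accept the worker's first-period offer, so in particular type $s'$ accepts it, and Lemma \ref{lem:NIP1_upperbound} then forces the wage to be at most $w(s')$. The substance of the argument is therefore the lower bound, which is where the PSE refinement does its work.

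For the lower bound I would argue by contradiction. Suppose there is a pooling PSE with first-period wage $\bar w < w(s')$, and consider the worker deviating to an offer $w' \in (\bar w, w(s'))$ (a nonempty interval since $\bar w < w(s')$). The heart of the argument is the Grossman–Perry credibility test applied to the off-path event that the firm accepts $w'$: I claim the self-consistent belief is that acceptance comes from \emph{both} types, so the worker retains the prior $p$ going into period 2. Because the period-2 continuation under belief $p$ is identical whether the firm accepted $\bar w$ on-path or $w'$ off-path, that term cancels from every comparison and the whole analysis reduces to period-1 payoffs. If the candidate belief is credible, then both types accept $w'$, and the worker secures $w' > \bar w$ (with the same period-2 payoff), a strictly profitable deviation contradicting the supposed equilibrium.

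To verify credibility I would check that, with the post-acceptance belief equal to the prior, both types \emph{strictly} prefer to accept $w'$ rather than reject. For the low type this is direct: accepting $w' < w(s')$ yields period-1 surplus $s' - w' > s' - w(s')$, which by the Rubinstein indifference $s' - w(s') = \delta(s' - \delta w(s'))$ exceeds its value from rejecting and bargaining on as the low type. For the high type I would bound its best rejection continuation by the payoff from optimally passing as the low type, namely countering with $\delta w(s')$ after one round of delay, which gives at most $\delta(s'' - \delta w(s'))$ in period-1 terms; accepting $w'$ then dominates rejection by $(1-\delta)s'' - w' + \delta^2 w(s') \ge (1-\delta)s'' - (1-\delta^2)w(s') = (1-\delta)(s'' - s') > 0$, using $w' \le w(s')$ and $w(s') = s'(1-\delta)/(1-\delta^2)$. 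Hence the set of accepting types is exactly $\{s', s''\}$, the prior is the credible belief, and the deviation goes through; so $\bar w \ge w(s')$, and together with the upper bound, $\bar w = w(s')$.

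The step I expect to be the main obstacle is exactly this credibility check for the high type: I must rule out that rejection of $w'$ could, under some admissible off-path continuation belief, give the high type more than immediate acceptance. The bound above handles the natural ``pass as the low type'' continuation, but I would need Inner Weak Monotonicity — together with the period-2 indifference of the low type established earlier — to confirm that no alternative continuation belief makes rejection attractive, so that the accepting set is genuinely $\{s', s''\}$ and the profitable worker deviation indeed destroys every pooling wage below $w(s')$.
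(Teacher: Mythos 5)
Your approach is the same as the paper's. The paper states this lemma without a formal proof, relying on the informal argument in the preceding paragraph: for a pooling wage $\bar w < w(s')$, the Grossman--Perry credibility test applied to acceptance of some $w' \in (\bar w, w(s')]$ yields the prior as the credible belief, under which both types accept, giving the worker a profitable deviation; the upper bound is Lemma \ref{lem:NIP1_upperbound}, exactly as you say. Your write-up is more explicit than the paper's and your arithmetic is correct.

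The step you flag but do not close is a real issue, and it is not covered by your claim that the period-2 continuation ``cancels from every comparison.'' It cancels when comparing the worker's payoff from $\bar w$ versus $w'$ (both accepted under the prior), but it does not cancel in the high type's accept-versus-reject decision on $w'$: by rejecting and countering with $\delta w(s')$, the high type may persuade the worker it is type $s'$ and thereby avoid being screened in period 2, gaining up to $\beta\left[\left(s''-w(s')\right) - \delta\left(s''-\delta w(s')\right)\right] = \beta(1-\delta)(s''-s')$ in continuation value relative to accepting under the prior. Since this reputational incentive is the central force of the paper, it must be checked rather than assumed away. Fortunately it closes: the period-1 loss from delay is $s'' - w' - \delta\left(s''-\delta w(s')\right) > (1-\delta)(s''-s')$ for any $w' < w(s')$, which strictly exceeds the maximal reputational gain $\beta(1-\delta)(s''-s')$ because $\beta \le 1$ --- the same $\delta + \beta$ versus $1 + \beta\delta$ comparison that drives Lemma \ref{lem:simple_pool} in the main text. (For the low type your period-1 comparison does suffice, since its continuation value is belief-independent.) Adding that one inequality makes your argument complete.
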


I will say that an equilibrium is a symmetric PSE if it is a PSE and all workers $j \neq i$ have the same beliefs about the firm's type after observing $i$'s wage. 

\subsection{Period 1 - information sharing}
The central question I want to answer is how the equilibria with and without information sharing compare. Consider now the game with information sharing beginning in the first period. A type $s'$ firm receives the same period 2 payoff regardless of worker beliefs, so long as workers continue to assign positive probability to the firm being type $s'$. The first step is to show that, as in the P2 game, the type $s'$ firm will never accept a wage above $w(s')$. This depends on a mild monotonicity condition on beliefs of workers after observing the wages of others.\footnote{If workers shared their beliefs, rather than just their wages, then this assumption would not be needed.}

\vst
\noindent \textbf{Outer Weak Monotonicity.} If worker $i$ assigns positive probability to the firm being type $s'$ after observing a wage of $w$ for worker $j$, then $i$ continues to do so after observing a wage for $j$ less than $w$.
\vst

\begin{lemma}\label{lem:P1_upperbound}
Under Inner and Outer Weak Monotonicity, the type $s'$ firm never accepts a wage offer above $w(s')$. 
\end{lemma}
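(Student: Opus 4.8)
The plan is to adapt the self-referential supremum argument from the proof of Lemma \ref{lem:P2_upperbound} to the period-1 game with information sharing, using the two monotonicity conditions to control the inferences that feed back into the type $s'$ firm's payoff. Let $\hat{w}$ denote the supremum, taken over all equilibria, all histories, and all workers $i$, of wages that the type $s'$ firm accepts at a history where the \emph{proposing} worker $i$ still assigns positive probability to type $s'$. I would suppose toward a contradiction that $\hat{w} > w(s')$ and recover the same necessary condition for acceptance, $w \le (1-\delta)s' + \delta^2 \hat{w}$, which upon taking the supremum over accepted wages forces $\hat{w} \le s'/(1+\delta) = w(s')$.

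The first substantive step is to isolate the bargaining with a single proposing worker $i$ from everything else. Because output is additively separable across workers and the period-2 problem separates across workers, the type $s'$ firm's payoff decomposes into its surplus in the period-1 bargaining with $i$, its discounted period-2 payoff with $i$, and the sum over the remaining workers $j$ of their discounted period-2 payoffs, which enter only through whether $j$ observes $i$'s realized wage. By the observation in the lemma's preamble, each of these period-2 terms is pinned down and independent of worker beliefs so long as those beliefs keep positive probability on $s'$; and by the indifference of the low type across the complete-information, pooling, and screening outcomes, even a worker who is convinced the firm is type $s''$ leaves the type $s'$ firm's continuation unchanged, since the firm rejects the resulting high offer and is re-identified as low. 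Hence the \emph{only} channel through which $i$'s acceptance decision shifts the firm's period-2 payoffs is the belief externality on observing workers $j$, and for the type $s'$ firm this externality is never strictly beneficial: raising an observer's posterior on $s''$ can only weakly reduce the wages that observer later offers the low type.

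I would then run the acceptance comparison exactly as in Lemma \ref{lem:P2_upperbound}. If the type $s'$ firm rejects an offer $w$ from $i$ and counteroffers, Inner Weak Monotonicity guarantees that $i$ retains positive probability on $s'$, so by the definition of $\hat{w}$ the worker can never extract more than $\hat{w}$ in the continuation; the firm thus secures present value at least $\delta(s' - \delta \hat{w})$ from its bargaining with $i$, on top of its standard period-2 terms. Accepting $w$ instead yields $s' - w$ plus a period-2 externality that, by the previous paragraph, is no larger than the externality along the reject-and-settle path: here Outer Weak Monotonicity does the work, since the settlement reached after rejection is at a lower wage, and Outer Weak Monotonicity then ensures observing workers keep positive probability on $s'$ and deliver the low type its standard (weakly higher) continuation. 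Combining, optimality of acceptance requires $s' - w \ge \delta(s' - \delta \hat{w})$; taking the supremum over accepted $w$ gives $\hat{w} \le (1-\delta)s' + \delta^2\hat{w}$, i.e. $\hat{w}\le w(s')$, the desired contradiction.

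The main obstacle is the second step: making rigorous that the cross-worker information externality cannot flip the sign of the acceptance inequality. The delicate point is that the reject path and the accept path generate \emph{different} observed wages for third parties, so I must verify that the observers' beliefs along the reject path are weakly more favorable to the low type than along the accept path. This is precisely what the combination of ``higher posterior on $s''$ hurts the low type'' and Outer Weak Monotonicity (lower observed wage preserves positive probability on $s'$) delivers, but stating it cleanly requires being careful that the continuation settlement after rejection is indeed at a wage below $w$ and that no observer is driven to certainty of $s''$ in a way that the indifference property fails to neutralize.
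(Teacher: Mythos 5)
Your argument is essentially the paper's: the paper also reduces the claim to the supremum argument of Lemma \ref{lem:P2_upperbound} by observing that the type $s'$ firm's period-2 payoff is independent of worker beliefs so long as those beliefs retain positive probability on $s'$, with Inner Weak Monotonicity controlling the proposing worker and Outer Weak Monotonicity controlling observers. One side remark of yours is inaccurate: an observer who becomes convinced the firm is type $s''$ does \emph{not} leave the low type's continuation unchanged via ``re-identification'' --- with a degenerate posterior no re-identification is possible, and the paper's own footnote (in the take-the-money-and-run analysis) states the low type then earns zero in period 2. This error is harmless here, since a worse accept-path externality only tightens the necessary condition $s'-w\geq\delta(s'-\delta\hat{w})$, and your operative argument correctly relies only on the weak inequality that a higher observer posterior on $s''$ cannot benefit the low type.
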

\begin{proof}
Recall that the period 2 payoff of the type $s'$ firm is the same if in the second period it's type is fully revealed, screened, or pooled. Thus the firm's second period payoff is independent of worker beliefs, so long as these assign positive probability to $s'$. The proof is then the same as that of Lemma \ref{lem:P1_upperbound}, where Outer Weak Monotonicity guarantees that the beliefs of other workers do not become degenerate on $s''$. 
\end{proof}

Without further refinements the conclusion of Lemma \ref{lem:P2_lowerbound} does not hold in the first period, and so we cannot conclude that the unique pooling equilibrium wage in period 1 is $w(s')$ as in \ref{cor:P2_pooling}. This is because the type $s''$ firm wants to influence period 2 beliefs. The $s''$ is indifferent between passing as a type $s'$ and pooling with $s'$ if the $P2$ equilibrium is pooling. It strictly prefers either of these outcomes to the separating P2 equilibrium, which in turn it strictly prefers to being fully revealed as type $s''$. Depending on off-path beliefs, it may be possible to sustain pooling wages below $w(s')$ in equilibrium. Fortunately, we can do comparative statics and draw welfare conclusions without further refinements. In particular, Lemma \ref{lem:P1_upperbound} allows us to identify what the separating equilibrium in period 1 will look like.

\begin{lemma}\label{lem:P1_separating}
Under Inner and Outer Weak Monotonicity, there exist $w_1^i,w_2^i$ with $w_2^i \leq w(s')$ such that if the equilibrium in the negotiation with worker $i$ is separating then the worker first offers $w_1^i$, and the high type firm accepts. If the first offer is not accepted then the firm offers $\delta w_2^i$, which is accepted. In the unique separating symmetric PSE $w_2^i = w(s')$, and in the unique pooling symmetric PSE the wage is $w(s')$.
\end{lemma}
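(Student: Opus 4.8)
The plan is to follow the template of \Cref{lem:NIP1_separating}, using single crossing to pin down the coarse structure of any separating equilibrium, and then to layer on the bounds coming from \Cref{lem:P1_upperbound} together with the PSE refinement to determine the continuation wage. The two ``unique PSE'' claims are handled separately at the end: the separating one by subgame-perfection together with Rubinstein uniqueness, and the pooling one by the off-path-belief argument already used for the NIP1 pooling PSE.

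First I would establish the structure of a separating negotiation. Since the equilibrium is separating, worker $i$ eventually learns the firm's type. By the single-crossing property (\Cref{lem:P2_singlecrossing}), which carries over to the within-period bargaining because delay is strictly more costly for type $s''$, the high type's negotiation must conclude (weakly) before the low type's: were the low type to settle first at some wage, the high type would strictly prefer to imitate it and settle earlier, contradicting revelation. Hence on path the worker opens with an offer $w_1^i$ that only the type $s''$ firm accepts; the type $s'$ firm rejects, and this rejection is an on-path event only for the low type, so Bayes' rule leaves worker $i$ certain it faces $s'$. From that point the negotiation between $i$ and the known low type proceeds as a complete-information bargaining game, in which the firm (now the proposer) offers the worker $\delta w_2^i$, accepted. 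This yields the asserted two-stage form.

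Next I would bound $w_2^i$. The quantity $w_2^i$ is the worker-proposer-equivalent continuation value extracted from the revealed low type, and by \Cref{lem:P1_upperbound} the type $s'$ firm never accepts a wage above $w(s')$; hence $w_2^i \leq w(s')$. Equality need not hold in a general SPBNE because, unlike in the P2 and NIP1 games, the lower bound of \Cref{lem:P2_lowerbound} fails in period $1$: the post-revelation continuation is not a genuine subgame (the other workers bargain simultaneously and hold private histories), so off-path beliefs can support worker proposals below $w(s')$. The PSE refinement removes exactly this slack. In the unique separating symmetric PSE the continuation is pinned to the Rubinstein outcome with surplus $s'$, giving $w_2^i = w(s')$; the low type is indifferent between accepting $w(s')$ immediately and delaying (as noted after \Cref{cor:P2_pooling}), so no lower continuation wage survives the refinement.

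Finally, for the pooling claim I would combine the two-sided bound exactly as in the NIP1 pooling PSE lemma. Any pooling wage is accepted by the low type, so by \Cref{lem:P1_upperbound} it is at most $w(s')$; and any candidate pooling PSE with wage $\bar{w} < w(s')$ is destroyed by the worker's deviation to an offer $w' \in (\bar{w}, w(s')]$, since under the PSE belief restriction both firm types strictly prefer to accept $w'$ when they are believed to do so, making the deviation profitable. Hence the pooling PSE wage equals $w(s')$. I expect the main obstacle to be the middle step: justifying that the failure of the \Cref{lem:P2_lowerbound} lower bound in period $1$ is precisely what opens the gap in $w_2^i \leq w(s')$, and that the PSE refinement (rather than plain subgame perfection, which does not apply to the non-subgame continuation) is what restores $w_2^i = w(s')$. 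Establishing that separation occurs in the opening round, rather than being spread over several rounds, also requires care, but this is delivered cleanly by the single-crossing impatience comparison.
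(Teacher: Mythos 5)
Your proposal follows the paper's proof essentially step for step: single crossing (Lemma \ref{lem:P2_singlecrossing}) orders the settlements so that the high type must accept the opening offer and the revealed low type then plays the complete-information continuation, Lemma \ref{lem:P1_upperbound} delivers $w_2^i \leq w(s')$, and the symmetric-PSE belief restriction (both types accept any $w \in (w_2^i, w(s')]$ when believed to do so, so the worker would propose such a $w$) eliminates continuation and pooling wages below $w(s')$. The only difference is cosmetic: for the separating case the paper spells out the same acceptance-deviation argument that you give explicitly only for the pooling case, whereas you assert the Rubinstein continuation directly, but the mechanism you identify is exactly the one the paper uses.
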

\begin{proof}
If the equilibrium is separating, the firm's type will be revealed. Then it must be that on-path the negotiation with type $s''$ ends before that with $s'$. This is because the highest continuation payoff for $s''$ comes from convincing workers it is type $s'$, combined with the fact that waiting is more costly for type $s''$ (see Lemma \ref{lem:P2_singlecrossing}). Lemma \ref{lem:P1_upperbound} implies $w_2^i \leq w(s')$. 

Suppose the equilibrium is pooling at wage $w$. If $w > w_2^i$ then firm $s'$ would also be willing to accept $w$ when its type has been revealed to worker $i$ in the separating equilibrium. If $w < w_2^i$ then worker $i$ could instead propose $w_2^i$ in the first period. Since the $s'$ firm was willing to accept $w_2^i$ in the separating equilibrium, it will be willing to do so in the first period. If all workers believe that both firms accept $w_2^i$ then both will, and so in a symmetric PSE this should be the off path belief. 

If $w_2^i < w(s')$ consider the deviation by a firm in the pooling equilibrium of accepting a wage $w \in (w_2^i, w(s')]$. If all workers believe that both types accept such a wage then it is optimal for both to do so, so in the symmetric PSE all workers believe that both types accept $w$. But then the worker would offer $w$ rather than $w_2^i$, so this cannot be an equilibrium. 
\end{proof}

It is not necessary to assume symmetric PSE to rule out $w_2^i = w(s')$. We can rule out equilibria in which $w_2^i < w(s')$ as long as workers who observe a wage $w \in (w_2^i,w(s')]$ do not believe that the firm is type $s''$ with probability $1$.

It need not be that $w_1^i > w_2^i$, so we cannot ignore the incentives of type $s'$ firms to accept $w_1^i$. In what follows I will use the incentive constraint of type $s''$ to pin down $w_1^i$. If it turns out that the incentive constraint of type $s'$ is violated at this wage then it will not be possible for $i$ to screen in equilibrium.

In the separating equilibrium, $w^i_1$ depends on the information sharing parameters. On the other hand, $w^i_2$ depends only on off-path beliefs. The willingness of a type $s''$ firm to reveal itself to worker $i$ depends on two characteristics of equilibrium information sharing; the number workers other than $i$ that are not screening the firm in period 1, and the probability that these workers observe $i$'s wage, conditional on not receiving information on the firm's type from another worker. The first effect I refer to as `information free-riding''. Workers are willing to pay a cost in the first period to screen, since the information will be useful in the second period. If they are likely to learn the firm's type from other workers then they will be less willing to do so. The second effect is the firms ``reputation effect''. The high type firm will demand a greater premium for being revealed to worker $i$ in the first period if other workers are likely to observe $i$'s wage and learn the firms type. The magnitude of this effect depends on the information sharing parameters, as well as the number of workers that are not screening in the first period. 

If worker $i$ does not screen in the first period, and receives no information about the firm's type from other workers' wages, then $i$ will screen in the second period if and only if
\begin{equation*}
    p w_h + (1-p) \delta^2w(s') \geq w(s').
\end{equation*}
Substituting for $w_h$, this condition becomes
\begin{equation*}
    p w(s'') \geq w(s').
\end{equation*}

\section{Equilibrium in the screening game}

I will focus on equilibria in which the worker never offers a wage below $w(s')$. As Lemma \ref{lem:P1_separating} and the subsequent discussion show, any other equilbrium can be ruled out by assuming PSE, or simply ruling out extreme belief revisions that would support $w^i_2 < w(s')$. It is also necessary to identify how workers who pool will react if they see wage $w^i_1$ from worker $i$ and $\delta w^j_2$ from worker $j$. I assume that in this case the worker infers that the firm is type $s''$.

Let $V_2$ be the value that workers would obtain in the second period if they did not receive any additional information about the firm's type following the first period (so either $V_2 = V_S(p)$ or $V_2 = w(s')$). Let $U_2(s)$ be the corresponding second period value the the type $s$ firm derives from a single worker who has not received information the first period.

Fix the first period separating wage offers $\omega_1 = \{w^i_1\}_{i \in C}$. Let $\pi(A|C,\omega)$ be the payoff of a type $s''$ firm that accepts $w^i_1$ iff $i \in A$, and for $j \in C\setminus A$ offers $\delta w(s')$ (and accepts $w(s')$ for all $i \in W\setminus C$). Then the firms on path payoff is $\pi(C|C,\omega)$, given by
\begin{align*}
   \pi(C|C,\omega) &= \sum_{i \in C} \left( s''-  w^i_1 + \beta (s'' -w(s'')) \right) \\
   &+ \sum_{j \in W\setminus C} \left(s'' - w(s') + \beta\left(P^j(C)(s'' -w(s'')) + (1-P^j(C))U_2(s'') \right)\right).
\end{align*}
For any $A \subseteq C$ we have 
\begin{align*}
    &\pi(A|C,\omega) = \sum_{i \in A} \left( s''-  w^i_1 + \beta (s'' -w(s'')) \right) \\
    & + \sum_{j \in W\setminus C} \left(s''-w(s') + \right. \\
    & \left.\hspace{8mm} \beta\left[P^j(A)(s'' - w(s'')) + (1-P^j(A))\left( P^j(C\setminus A)(s'' - w(s')) +  (1-P^j(C\setminus A)) U_2(s'') \right) \right]\right) \\
    &+ \sum_{j \in C\setminus A}\left( \delta(s'' - \delta w(s')) + \beta\left[P^j(A)(s''-w(s'')) + (1-P^j(A))(s''-w(s'))  \right] \right)
\end{align*}

\begin{lemma}\label{lem:supermodular}
For any $C$ and $\omega$, $\pi(\cdot|C,\omega)$ is supermodular, and strictly supermodular if $\rho_{i,j} \in (0,1)$ for all $i,j$.\footnote{The condition $\rho_{i,j} \in (0,1)$ for all $i,j$ is clearly not necessary for strict supermodularity to hold for a given $C$. Lemma \ref{lem:submodular}) gives the relevant conditions for strict supermodularity.}
\end{lemma}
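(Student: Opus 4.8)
The plan is to mirror the decomposition in the proof of \Cref{lem:simple_supermodular}, with the single change that revealing the high type to a worker now leaves the firm a strictly positive residual surplus $s''-w(s'')$ rather than zero. I want to show $\pi(A\cup B\,|\,C,\omega)+\pi(A\cap B\,|\,C,\omega)\geq \pi(A\,|\,C,\omega)+\pi(B\,|\,C,\omega)$ for all $A,B\subseteq C$, and I will add up the contributions of the partition classes $A\cap B$, $A\setminus B$, $B\setminus A$, $C\setminus(A\cup B)$ of $C$, together with the external pooling workers $W\setminus C$, separately. Throughout write $a=s''-w(s'')$ and $b=s''-w(s')$, and note $b>a$ because $w(s'')>w(s')$; this one inequality drives every sign below and is the alternating-offers counterpart of $s''>s'$ in the simple model.

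First, each worker $i\in A\cap B$ is accepted in all four argument sets and contributes the \emph{constant} $s''-w_1^i+\beta a$ to each of $\pi(A),\pi(B),\pi(A\cup B),\pi(A\cap B)$, so these cancel. Next take $j\in A\setminus B$: it is accepted (hence contributes a constant) in $\pi(A)$ and $\pi(A\cup B)$ and rejected in $\pi(B)$ and $\pi(A\cap B)$, so its net contribution to the inequality is the rejected-worker payoff at $A\cap B$ minus that at $B$. A rejected screening worker delivers $\delta(s''-\delta w(s'))+\beta[P^j(\cdot)a+(1-P^j(\cdot))b]$, which, since $a<b$, is strictly decreasing in $P^j(\cdot)$; as $A\cap B\subseteq B$ gives $P^j(A\cap B)\leq P^j(B)$, the difference is $\geq 0$ (strict when $P^j(A\cap B)<P^j(B)$). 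The class $B\setminus A$ is identical by symmetry. For $j\in C\setminus(A\cup B)$, which is rejected in all four sets, the constant part of the rejected payoff cancels and the $X$-dependent part is $\beta P^j(X)(a-b)$, so the net contribution is $\beta(a-b)\bigl[P^j(A\cup B)+P^j(A\cap B)-P^j(A)-P^j(B)\bigr]\geq 0$, because $a-b<0$ and $P^j$ is submodular by \Cref{lem:submodular}.

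Finally, for $j\in W\setminus C$ the first-period term $s''-w(s')$ cancels, and using $\Pi^j(X):=\prod_{k\in X}(1-\rho_{jk})=1-P^j(X)$ together with $\Pi^j(X)\Pi^j(C\setminus X)=\Pi^j(C)$, the period-2 payoff rewrites as
\[
P^j(X)a+(1-P^j(X))\bigl[P^j(C\setminus X)b+(1-P^j(C\setminus X))U_2(s'')\bigr]=a+\Pi^j(X)(b-a)+\Pi^j(C)\bigl(U_2(s'')-b\bigr).
\]
Only the middle term depends on $X$; since $b-a>0$ and $\Pi^j$ is supermodular (this is precisely the inequality $\Pi^j(A\cap B)+\Pi^j(A\cup B)\geq \Pi^j(A)+\Pi^j(B)$ established inside the proof of \Cref{lem:submodular}), this term is supermodular in $X$. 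Summing the non-negative contributions of all classes yields supermodularity of $\pi(\cdot\,|\,C,\omega)$. Strictness follows because when $\rho_{ij}\in(0,1)$ for all $i,j$ and $A,B$ are non-nested, the conditions of \Cref{lem:submodular} hold for every $j$, making each submodularity/monotonicity inequality above strict.

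The main obstacle is the bookkeeping for the mixed classes $A\setminus B$ and $B\setminus A$: one must observe that an \emph{accepted} screening worker contributes a payoff completely independent of the rest of the configuration, so those terms cancel outright and the comparison collapses to monotonicity of the \emph{rejected}-worker payoff in the reputation probability $P^j(\cdot)$. The only substantively new point relative to \Cref{lem:simple_supermodular} is checking that the relevant surplus gap is $b-a=w(s'')-w(s')>0$ rather than $s''-s'$; once that is in hand, the algebraic rewriting for $W\setminus C$ and the appeals to \Cref{lem:submodular} go through verbatim.
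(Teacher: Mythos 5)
Your proof is correct and follows essentially the same route as the paper's: the identical partition of workers into $A\cap B$, $A\setminus B$, $B\setminus A$, $C\setminus(A\cup B)$, and $W\setminus C$, the same rewriting of the $W\setminus C$ term via $\Pi^j(X)\Pi^j(C\setminus X)=\Pi^j(C)$, and the same appeals to Lemma~\ref{lem:submodular} for the sub/supermodularity of $P^j$ and $\Pi^j$. Your explicit $a=s''-w(s'')$, $b=s''-w(s')$ bookkeeping and the observation that everything reduces to monotonicity of the rejected-worker payoff in $P^j(\cdot)$ is a slightly cleaner presentation of the same argument.
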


\begin{proof}
For simplicity I will write $\pi(X)$ rather than $\pi(X|C,\omega)$. I wish to show that $\pi(A\cup B) - \pi(B) \geq  \pi(A) - \pi(A\cap B)$. In all cases, the payoff the firm derives from workers in $A\cap B$ is unchanged. I will consider separately the payoffs of workers in $W\setminus C$, $C\setminus (A \cup B)$, $A \setminus B$ and $B\setminus A$.

Consider the payoff derived from workers in $W \setminus C$. Let $\Pi^j(X) = \prod_{k\in X}(1-\rho_{jk})$. Then the, using the fact that $\Pi^j(X)\Pi^j(C\setminus X) = \Pi^j(C)$, the $W\setminus C$ component of $\pi(X)$ can be written as
{\small
\begin{align*}
    \sum_{j\in W\setminus C}& \left( s'' - w(s') + \beta\left[ (1-\Pi^j(X))(s''-w(s'') ) + \Pi^j(X)(1- \Pi^j(C\setminus X))(s''-w(s')) + \Pi^j(C)U_2(s'') \right] \right)\\
    &= \sum_{j\in W\setminus C} \left( s'' - w(s') + \beta\left[s'' - w(s'') + \Pi^j(X)\left(w(s'') - w(s')\right) + \Pi^j(C)\left( U_2(s'') - s'' + w(s') \right) \right] \right)
\end{align*}
}
Then the $W\setminus C$ component of $\pi(X) - \pi(Y)$ is given by
\begin{equation*}
    \beta\sum_{j\in W\setminus C} \left(\Pi^j(X) - \Pi^j(Y) \right)\left( w(s'')-w(s') \right)
\end{equation*}
For all $j$, $\Pi^j(A\cup B) - \Pi^j(B) \geq \Pi^j(A) - \Pi^j(A\cap B)$ (with strict inequality whenever the conditions in Lemma \ref{lem:submodular} for strict inequality are satisfied). This shows supermodularity on the $W\setminus C$ component of firm payoffs.

Now for the $C\setminus (A\cup B)$ component of $\pi(X)$. These are given by
\begin{equation*}
    \sum_{j\in C\setminus(A\cup B)} \left( \delta(s'' - \delta w(s')) + \beta\left[s'' - w(s') + P^j(X)\left(w(s') - w(s'')  \right) \right] \right).
\end{equation*}
Then the $C\setminus (A\cup B)$ component of $\pi(X) - \pi(Y)$ is given by 
\begin{equation*}
    \beta \sum_{j\in C\setminus(A\cup B)} \left(P^j(X) - P^j(Y)\right)\left(w(s') - w(s'')  \right) 
\end{equation*}
Then submodularity of $P^j(\cdot)$ (Lemma \ref{lem:submodular}) implies that supermodularity holds for this part of payoffs (strictly when the conditions given in Lemma \ref{lem:submodular} are satisfied). 

Now consider the $A\setminus B$ component of payoffs. In both $\Pi(A\cup B)$ and $\Pi(A)$, these workers are successfully screening, so they generate the same payoff for the firm. Supermodularity here will follow by showing that the $A\setminus B$ component of $\pi(B)$ is less than that of $\pi(A\cap B)$. In $\pi(B)$, this is given by 
\begin{equation*}
    \sum_{j\in A \setminus B} \left( \delta(s'' - \delta w(s')) + \beta\left[ s'' - w(s') + P^j(B)\left( w(s') - w(s'') \right) \right] \right).
\end{equation*}
The expression in $\pi(A\cap B)$ is the same, except that $P^j(A\cap B)$ replaces $P^j(B)$. Since $P^j(B) \geq P^j(A\cap B)$ (with strict inequality as long as $P^j(A\cap B) < 1$ and there exists $k \in B\setminus A$ with $\rho_{jk} > 0$) we conclude that supermodularity holds on $A\setminus B$.

Finally, consider the $B\setminus A$ component of payoffs. Since these workers successfully screen in both $\pi(A\cup B)$ and $\pi(B)$, we need only show that the $B\setminus A$ component of payoffs is greater in $\pi(A\cap B)$ than in $\pi(A)$. This follows from the same argument given for the $A\setminus B$ component above.  

\end{proof}

I will first characterize the wage that each screening worker offers to the high type firm. The subtlety here is that the binding incentive constraint under which type $s''$ accepts a wage is generally not the ``single deviation'' of rejecting $i$ and accepting all other $k \in C$. To see this, consider an equilibrium in which the set of screening workers is $C$, and these workers make initial offers of $\omega = \{w_1^i \}_{i\in C}$. $C$ and $\omega$ fully characterize on-path play in equilibrium. For each $i\in C$, let 
\begin{equation*}
    \chi^i = \argmax_{X \subseteq C\setminus i}  \pi(X|C,\omega).
\end{equation*}
$\chi^i$ need not be single valued, but when it will not cause confusion I will discuss as if it is, and refer to this set as $X^i$. $X^i$ is the set of screening offers that type $s''$ will accept if it rejects $i$'s initial offer. If $X^i \neq \varnothing$, let $k \in X^i$. By definition of $X^i$, $\pi(X^i|C,\omega) \geq \pi(X^i\cap X^k|C,\omega)$. But then Lemma \ref{lem:supermodular} implies $\pi(X^i \cup X^k|C,\omega) \geq \pi(X^k|C,\omega)$. 

Suppose $X^i$ and $X^k$ satisfy the conditions for strict supermodularity of $\pi$. Then we have $\pi(X^i \cup X^k|C,\omega) > \pi(X^k|C,\omega)$. Consider what happens if $k$ offers a slightly higher wage. This can only make the payoff from rejecting $k$'s offer worse, since by rejecting $w_1^k$ the firm would have signaled that it was type $s'$ (it could be strictly worse, since $k$ may reject the firms proposal of $\delta w(s')$ in the next round). If the type $s''$ firm accepts $k$'s offer then at worst worker $k$, and any other worker who observes $k$'s wage, will believe that the firm is type $s''$ for sure. But this is the same belief they would hold if the firm accepted $w^k_1$. Thus by raising it's wage offer slightly, worker $k$ makes only a small change to the firm's payoff from accepting the offers of workers in $X^i \cup X^k$. Since $\pi(X^i \cup X^k|C,\omega) > \pi(X^k|C,\omega)$, the firm will still accept $k$'s offer. But then $\omega$ and $C$ cannot characterize an equilibrium, since $k$ has a profitable deviation. The conclusion is that $X^i$ and $X^k$ cannot satisfy the conditions for strict supermodularity. 

\begin{lemma}\label{lem:deviation}
Assume $\rho_{ij} \in (0,1)$ for all $i,j$. In any equilibrium characterized by $C, \omega$, it must be that $\pi(C|C,\omega) = \pi(\varnothing|C,\omega)$, and $\varnothing \in \chi^i$ for all $i\in C$.
\end{lemma}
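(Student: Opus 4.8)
The plan is to follow the three-claim structure used in the proof of Proposition~\ref{prop:simple_deviation} essentially verbatim, replacing Lemma~\ref{lem:simple_supermodular} with its alternating-offers analog, Lemma~\ref{lem:supermodular}; since $\rho_{ij}\in(0,1)$ for all $i,j$, that lemma delivers strict supermodularity of $\pi(\cdot|C,\omega)$ for every non-nested pair $A,B\subseteq C$. The discussion immediately preceding the statement already carries out the essential first step, which I would record as Claim~1: for any $i\in C$, any $X^i\in\chi^i$, any $k\in X^i$, and any $X^k\in\chi^k$, one has $X^k\subsetneq X^i$. The argument is that optimality of $X^i$ (as a best response for the firm that has rejected $i$) gives $\pi(X^i|C,\omega)\geq\pi(X^i\cap X^k|C,\omega)$, whence supermodularity yields $\pi(X^i\cup X^k|C,\omega)\geq\pi(X^k|C,\omega)$; were $X^i$ and $X^k$ non-nested, strict supermodularity would make this strict, and a slight upward perturbation of $k$'s offer would then be a profitable deviation for $k$, contradicting equilibrium. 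Hence $X^i$ and $X^k$ must be nested, and since $k\in X^i$ but $k\notin X^k$ the inclusion is strict.

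From here Claim~2 is a finiteness argument: starting from any $i$ with $X^i\neq\varnothing$, pick $j\in X^i$; Claim~1 gives a strictly smaller $X^j\subsetneq X^i$, and iterating produces a strictly descending chain of subsets of the finite set $C$, which must terminate at $\varnothing$. Thus there exists some $k\in C$ with $X^k=\varnothing$.

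Claim~3 holds without any condition on $\rho$: optimality of $w_1^i$ for worker $i$ forces $\pi(C|C,\omega)\leq\pi(X^i|C,\omega)$ (otherwise $i$ could raise its offer and still be accepted), while optimality for the firm on path gives $\pi(C|C,\omega)\geq\pi(A|C,\omega)$ for every $A\subseteq C$; together these give $\pi(X^i|C,\omega)=\pi(C|C,\omega)$ for all $i$. Combining the three claims finishes the proof: fixing any $i$ and (if $X^i\neq\varnothing$) choosing via Claim~1 a $k\in X^i$ with $X^k=\varnothing$ from Claim~2, the equalities $\pi(X^i|C,\omega)=\pi(C|C,\omega)=\pi(X^k|C,\omega)=\pi(\varnothing|C,\omega)$ chain together, so $\varnothing\in\chi^i$ and $\pi(C|C,\omega)=\pi(\varnothing|C,\omega)$.

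The main obstacle, and the only place where the alternating-offers setting differs from Proposition~\ref{prop:simple_deviation}, is the perturbation step inside Claim~1: I must confirm that raising $k$'s first offer slightly changes the firm's payoff from accepting only by a small amount, while never improving the firm's payoff from rejecting. Rejection still reveals type $s'$ to $k$ and to everyone observing $k$'s wage, and it may additionally induce $k$ to reject the firm's counteroffer $\delta w(s')$ in the following round; but this only lowers the firm's rejection payoff, so the strict inequality $\pi(X^i\cup X^k|C,\omega)>\pi(X^k|C,\omega)$ is preserved and acceptance of $k$'s slightly higher offer remains strictly optimal. Everything else transfers unchanged from the known proof.
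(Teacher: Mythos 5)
Your proposal is correct and follows essentially the same route as the paper: the paper's own proof of this lemma leans on the discussion immediately preceding it (the perturbation argument establishing that $X^i$ and $X^k$ must be nested) and then compresses the remaining steps, which you spell out via the descending-chain and worker/firm-optimality claims exactly as in Proposition~\ref{prop:simple_deviation}. Your handling of the one genuinely new feature—that rejection of $k$'s perturbed offer can only become (weakly) worse because $k$ may now reject the counteroffer $\delta w(s')$—matches the paper's own parenthetical treatment of that point.
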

\begin{proof}
I will show that $\varnothing \in \chi^i$ for all $i \in C$; the remainder of the result then follows from optimality of the worker's wage offers. If $\rho_{ij} \in (0,1)$ for all $i,j$ then the conditions for strict supermodularity of $\pi(\cdot|C,\omega)$ are satisfied for all non-empty $A,B$. This, and the discussion preceding Lemma \ref{lem:deviation}, imply that there is no pair of workers $i,k$ with $k \in X^i$ and $X^k \neq \varnothing$.  

Even without conditions on $\rho$, for any equilibrium $C,\omega$ we will have $\pi(X^i|C,\omega) = \pi(X^k|C,\omega)$ for all $k\in X^i$. To see this, first note that optimality of $w^i_1$ for worker $i$ implies $\pi(C|C,\omega) \leq \pi(X^i|C,\omega)$. Moreover, firm optimality implies $\pi(C|C,\omega) \geq \pi(A|C,\omega)$ for all $A\subseteq C$. Thus $\pi(C|C,\omega) = \pi(X^i|C,\omega) \geq \pi(X^k|C,\omega)$. If $\pi(X^i|C,\omega) > \pi(X^k|C,\omega)$ then firm $k$ can make a slightly higher wage offer that will still be accepted, so we must have $\pi(X^i|C,\omega) = \pi(X^k|C,\omega)$.

Suppose $X^i \neq \varnothing$ and $X^k = \varnothing$ for all $k \in X^i$. Then by the previous claim, $\pi(X^i|C,\omega) = \pi(\varnothing|C,\omega)$, so $\varnothing \in \chi^i$.
\end{proof}

The condition $\pi(C|C,\omega) = \pi(\varnothing|C,\omega)$ pins down $\sum_{i\in C} w^i_1$. This condition is given by
{\small
\begin{align*}
    &\sum_{i \in C} \left( s''-  w^i_1 + \beta (s'' -w(s'')) \right) + \sum_{j \in W\setminus C} \left(s'' - w(s') + \beta\left(P^j(C)(s'' -w(s'')) + (1-P^j(C))U_2(s'') \right)\right) \\
   &= \sum_{i \in C} \left( \delta(s''- \delta w(s')) + \beta (s'' -w(s')) \right) + \sum_{j \in W\setminus C} \left(s'' - w(s') + \beta\left(P^j(C)(s'' -w(s')) + (1-P^j(C))U_2(s'') \right)\right)
\end{align*}
}
This simplifies to 
\begin{equation*}
  |C| \left( \delta(s'' - \delta w(s')) - \left(s'' - \frac{1}{|C|}\sum_{i\in C} w^i_1\right) \right) + \beta \left(|C| + P^j(C)\right)\left( w(s'') - w(s') \right) = 0
\end{equation*}
so
\begin{align*}
    \sum_{i\in C} w^i_1 = \bar{W}(C) &:= |C|\left( (1-\delta)s'' + \delta^2w(s')) \right) - \beta\left( w(s'') - w(s') \right) \left(\bar{P}(C) +|C| \right)\\
    &= |C|\left( (1-\delta^2)w(s'') + \delta^2w(s')) \right) - \beta\left( w(s'') - w(s') \right) \left(\bar{P}(C) +|C| \right)
\end{align*}
where $\bar{P}(C) =\sum_{j \in W\setminus C} P^j(C)$ is the expected number of pooling workers who will observe a wage from a worker in $C$. As expected, the sum of screening wages is decreasing in $\bar{P}(C)$.

The individual wage offers in $\omega$ are constrained by the conditions that $\varnothing \in X^i$ for all $i$. This condition implies that there cannot be too much dispersion in $\omega$.\footnote{Bounds on wages satisfying this condition can be determined by looking recursively at smaller screening sets.} As we will see, it is satisfied automatically, conditional on the sum being $\bar{W}(C)$, in the symmetric equilibria of interest.  

I now turn to the incentives of the workers to engage in screening. If worker $i$ decides not to screen, they must offer a wage that both firm types will accept. By Lemma \ref{lem:P1_upperbound}, the highest such wage is $w(s')$. I will show now that the high type firm will accept an offer of $w(x')$ from any worker in $C$. This is easiest to show under the assumption that workers' prior beliefs are intermediate.

\vst
\noindent \textbf{Intermediate Beliefs.} I say that workers have intermediate beliefs if they screen in the NIP1 game, but not in the NIP2 game. This holds iff
\begin{equation*}
    \dfrac{s'}{s''} \geq p \geq \dfrac{(1-\delta^2)}{(2 - \delta^2 - \beta)\frac{s''}{s'} + \beta-1}
\end{equation*}

\vst
When workers have intermediate beliefs the high type firm is indifferent in the second period between pooling and passing as the low type. This simplifies the analysis. If $p > s'/s''$ the worker screens in the NIP2 game. In this case the high type firm strictly prefers passing as the low type to pooling.

\begin{lemma}\label{lem:screen_deviate}
Assume Intermediate Beliefs. In any equilibrium in which the set of screening workers is $C$, the type $s''$ firm will accept an initial offer of $w(s')$ from any worker in $C$. 
\end{lemma}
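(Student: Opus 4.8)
The plan is to mirror the argument of Lemma \ref{lem:simple_pool}, adapted to the alternating-offers continuation and simplified by the Intermediate Beliefs assumption. Fix an equilibrium with screening set $C$ and suppose $i \in C$ deviates from its screening offer $w_1^i$ to the pooling wage $w(s')$. I only need to show the type $s''$ firm accepts; the type $s'$ firm is willing to accept $w(s')$, as this is its complete-information wage.

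First I would pin down the relevant beliefs. Since $w(s')$ is exactly the wage both types pay when a worker pools, acceptance of $w(s')$ is uninformative: it cannot reveal the high type, because the low type accepts it too, so worker $i$ and any worker observing $i$'s period-$1$ wage of $w(s')$ retain the prior $p$. By Intermediate Beliefs, $p \le s'/s''$, so every such worker pools in the second period. Crucially, the information worker $i$ receives from \emph{other} workers in period $2$ is unaffected by the firm's response to $i$'s own offer, so it plays no role in the accept-versus-reject comparison and can be ignored.

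Next I would compare the two continuations for the type $s''$ firm. The only way to reject $w(s')$ without revealing the high type is to mimic the low type by countering with $\delta w(s')$, which the worker accepts; any revealing counteroffer is strictly worse. Accepting yields period-$1$ surplus $s'' - w(s')$, while rejecting-and-mimicking yields the discounted round-$2$ surplus $\delta(s'' - \delta w(s'))$. Using $w(s')(1+\delta) = s'$, the difference is
\begin{equation*}
    \big(s'' - w(s')\big) - \delta\big(s'' - \delta w(s')\big) = (1-\delta)\big(s'' - w(s')(1+\delta)\big) = (1-\delta)(s'' - s') > 0,
\end{equation*}
so acceptance strictly dominates on the period-$1$ dimension alone. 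For period $2$, in both the accept case (beliefs stay at $p$) and the reject case (beliefs fall toward the low type) worker $i$ and all observers have beliefs weakly below $s'/s''$ and hence pool, so the firm's discounted period-$2$ payoff is identical across the two continuations. Combining the two dimensions, the type $s''$ firm strictly prefers to accept $w(s')$.

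The main obstacle is handling the off-path inference of workers who observe $i$'s wage of $w(s')$: I must rule out that they read $w(s')$ as good news and screen in period $2$, which would give the firm a rejection motive, since rejecting and revealing the low type would then suppress their screening. This is exactly where the identification of $w(s')$ as the pooling wage does the work — because both types pay $w(s')$ when a worker pools, the natural belief after observing it is the prior, and Intermediate Beliefs then guarantees pooling. This same observation is what makes the period-$2$ payoffs coincide and reduces the whole comparison to the one-round delay cost computed above, which is strictly positive for $\delta < 1$.
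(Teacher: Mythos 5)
Your overall route is the same as the paper's: fix beliefs so that acceptance of $w(s')$ leaves observers at the prior and rejection-plus-mimicking pushes worker $i$ toward type $s'$, use Intermediate Beliefs to argue that either belief yields a period-2 wage of $w(s')$, and then reduce the comparison to the one-round delay cost $(s''-w(s'))-\delta(s''-\delta w(s')) = (1-\delta)(s''-s')>0$. That computation and the role you assign to Intermediate Beliefs are both right.

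There is, however, a genuine gap in the step where you assert that ``the information worker $i$ receives from \emph{other} workers in period $2$ is unaffected by the firm's response to $i$'s own offer, so it plays no role.'' That claim is not justified, and taken literally it is false: the firm's response to $i$ is tied, through supermodularity, to its response to the remaining screening offers in $C\setminus i$. In the natural acceptance continuation the firm keeps accepting the screening offers of $C\setminus i$, so worker $i$ may observe one of those high wages, learn the type, and demand $w(s'')$ in period $2$ --- a period-2 cost of acceptance that has no counterpart in the rejection continuation, where (by Lemma \ref{lem:deviation}) the firm rejects everything and $i$ sees nothing informative. So the two continuations' period-2 payoffs from worker $i$ are \emph{not} identical, and your accounting does not close. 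The paper's proof handles this by invoking Lemma \ref{lem:deviation}: since rejecting $i$'s offer implies rejecting all of $C\setminus i$, it suffices to show that $[\text{accept from } i,\ \text{reject all others}]$ beats $[\text{reject from } i,\ \text{reject all others}]$; holding the treatment of $C\setminus i$ fixed at ``reject all'' makes every term except the period-1 delay on worker $i$ cancel, and then your inequality finishes the argument. You need this reduction (or some substitute for it) to make the ``no informational difference'' claim airtight; without it the comparison you write down is not the one the firm actually faces.
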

\begin{proof}
Let $i$ be the deviating worker. Assume that if the firm rejects the offer of $w(s')$ and proposes $\delta w(s')$ then $i$ believes that the firm is type $s'$. I will show that if this is the case the type $s''$ firm prefers to accept $w(s'')$, so these should indeed be the beliefs under PSE. Similarly if the firm accepts $w(s')$, all workers who observe this will retain their prior belief about the firm's type. 

By Lemma \ref{lem:deviation}, if the firm rejects $i$'s initial offer then it should also reject all offers in $C\setminus i$. Therefore a sufficient condition for the type $s''$ firm to accept $w(s')$ from $i$ is that it prefers accepting $w(s')$ from $i$ to rejecting, given that it is rejecting all other offers in $C$. Given the assumption of intermediate beliefs, from the firm's perspective, convincing workers that it is type $s'$ is the same as getting them to maintain their prior belief. Thus there is no informational difference between accepting $w(s')$ from $i$ or rejecting and proposing $\delta w(s')$. The type $s''$ firm then prefers to accept $w(s')$, since $s'' - w(s') > \delta(s'' - \delta w(s'))$.
\end{proof}

Lemma \ref{lem:screen_deviate} does not imply that if $i\in C$ deviates by proposing $w(s')$, the firm will reject all initial offers in $C\setminus i$. Asymmetry in the initial offers of workers in $C$ may help provide incentives for workers to engage in screening. This is because the distribution of these offers affects the outside option of workers. To see this, suppose for illustrative purposes that $w^j_1 = 0$ for all $j\in C\setminus i$ (ignoring the fact that these workers would then prefer to deviate and offer $w(s')$). If $w^j_1$ were to deviate and offer $w(s')$ then the firm would continue to accept the initial offers of $j \in C\setminus i$. On the other hand, if $w_1^i = 0$ and $i$ offers $w(s')$, the firm will in general reject the initial offers of $j \in C\setminus i$. 

This suggest that when information-sharing parameters differ across workers, screening will be best supported by the following pattern of initial wage offers in $C$ (holding fixed $\sum_{i\in C} w^i_1$): workers with high $P^i(C\setminus i)$, i.e. workers who are likely to observe another wage from $C$, should make low initial wage offers $w^i_1$, while workers with low $P^j(C\setminus j)$ should make high initial wage offers. This is because workers with high are those with the highest ability to free-ride off the information generated by workers in $C\setminus i$. When their offer $w^i_1$ is low, which means $\sum_{C\setminus i} w^j_1 $ is relatively high, a deviation to pooling will lead the firm to reject all initial offers in $C\setminus i$, rendering $i$'s informational advantage useless. Thus, somewhat counter-intuitively, $i$'s informational advantage if it deviates means precisely that it should receive a lower wage on path. 

Given this discussion, it is natural to wonder if asymmetric offers can help support screening even when workers are symmetric. It turns out that this is not the case; if screening can be supported with asymmetric offers than it can be supported with symmetric offers. This will follow from the fact that, with symmetric information-sharing parameters, the \textit{average} initial wage offer of screening workers must be increasing in the size of the set of screening workers. The intuition is that as the set of pooling workers shrinks, it becomes less costly for the high type firm to reveal itself to the screening workers.  

\begin{lemma}\label{lem:avg_wage}
Assume $\rho_{ij} = \rho \in (0,1)$ for all $i,j$. Let $C,\omega$ and $C',\omega'$ be equilibrium screening sets and wage offers, with $|C'| > |C|$. Then $\frac{1}{|C'|}\sum_{i\in C'}w'^i_1 > \frac{1}{|C|}\sum_{i\in C}w^i_1$ 
\end{lemma}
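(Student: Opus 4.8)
The plan is to reduce the claim to the monotonicity of $\bar{P}(C)/|C|$ already established in the proof of Lemma \ref{lem:simple_avg_wage}. First I would invoke Lemma \ref{lem:deviation}, which pins down the sum of the equilibrium screening offers through the high-type firm's indifference condition $\pi(C|C,\omega) = \pi(\varnothing|C,\omega)$. Under symmetric information sharing this gives $\sum_{i\in C} w^i_1 = \bar{W}(C)$ with
\[
\bar{W}(C) = |C|\left( (1-\delta^2)w(s'') + \delta^2 w(s') \right) - \beta\left( w(s'') - w(s') \right)\left( \bar{P}(C) + |C| \right),
\]
so the average offer depends only on $|C|$ and equals
\[
\frac{\bar{W}(C)}{|C|} = (1-\delta^2)w(s'') + \delta^2 w(s') - \beta\left( w(s'') - w(s') \right)\left( \frac{\bar{P}(C)}{|C|} + 1 \right).
\]

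Next I would isolate the single term that varies with $|C|$. Every term on the right is constant in $C$ except $-\beta(w(s'')-w(s'))\,\bar{P}(C)/|C|$. Since $w(s) = s(1-\delta)/(1-\delta^2)$ is strictly increasing in $s$, we have $w(s'') > w(s')$, and $\beta > 0$, so the coefficient $\beta(w(s'')-w(s'))$ is strictly positive. Hence $|C| \mapsto \bar{W}(C)/|C|$ is strictly increasing if and only if $|C| \mapsto \bar{P}(C)/|C|$ is strictly decreasing.

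Finally I would appeal directly to the computation carried out in the proof of Lemma \ref{lem:simple_avg_wage}: under symmetric information sharing $\bar{P}(C) = (|W|-|C|)\left(1-(1-\rho)^{|C|}\right)$, and the function $|C| \mapsto \frac{|W|-|C|}{|C|}\left(1-(1-\rho)^{|C|}\right)$ is strictly decreasing for $\rho \in (0,1)$, since its derivative is strictly negative there. Applying this with $|C'| > |C|$ yields $\bar{P}(C')/|C'| < \bar{P}(C)/|C|$, and therefore $\bar{W}(C')/|C'| > \bar{W}(C)/|C|$, which is the desired strict inequality. The only point requiring care is that the comparison is between two \emph{distinct} equilibria with possibly different, and internally asymmetric, offer vectors $\omega$ and $\omega'$; but Lemma \ref{lem:deviation} guarantees that in each equilibrium the \emph{sum} of offers is fixed at $\bar{W}(\cdot)$, so the average is a function of $|C|$ alone and the internal dispersion of the offers is irrelevant to the comparison. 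I do not expect any genuinely hard step here: the statement is essentially an immediate corollary of Lemma \ref{lem:deviation} together with the monotonicity of $\bar{P}(C)/|C|$ already proved en route to Lemma \ref{lem:simple_avg_wage}.
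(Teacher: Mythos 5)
Your proposal is correct and follows essentially the same route as the paper: the paper's proof likewise starts from the expression $\frac{1}{|C|}\sum_{i\in C}w^i_1 = \left( (1-\delta^2)w(s'') + \delta^2 w(s')\right) - \left(\frac{\bar{P}(C)}{|C|}+1\right)\beta\left(w(s'')-w(s')\right)$ pinned down by the firm's indifference condition, reduces the claim to $\bar{P}(C)/|C|$ being decreasing, and verifies this by the same derivative computation you cite from Lemma \ref{lem:simple_avg_wage}. Your added remarks---that the coefficient $\beta(w(s'')-w(s'))$ is strictly positive and that internal dispersion of the offer vectors is irrelevant because only the sum is pinned down---are correct and make explicit points the paper leaves implicit.
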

\begin{proof}
Recall that 
\begin{equation*}
    \frac{1}{|C|}\sum_{i\in C}w^i_1 = \left( (1-\delta^2)w(s'') + \delta^2w(s')) \right) - \left(\dfrac{\bar{P}(C)}{|C|} +1\right) \beta\left( w(s'') - w(s') \right).
\end{equation*}
The claim follows in $\bar{P}(C)/|C| > \bar{P}(C')/|C'|$. Under symmetry, $\bar{P}(C) = (|W| - |C|) \left(1-(1-\rho)^{|C|} \right) $. Then the claim is that the function $|C| \mapsto \frac{|W|-|C|}{|C|}(1- (1-\rho)^{|C|})$ is decreasing. The derivative of this function is negative iff
\begin{equation*}
    \dfrac{|W|}{|C|(|W| - |C|)} > -\ln(1-\rho)\dfrac{(1-\rho)^{|C|}}{1 - (1-\rho)^{|C|}}.
\end{equation*}
The right hand side of this expression is bounded above by $1/|C|$ (the limit as $\rho \rightarrow 0$). The inequality follows. 
\end{proof}

As a result of Lemma \ref{lem:avg_wage}, we can see that creating dispersion in the wages of screening workers cannot help their incentives to screen. To see this, suppose that there is a uniform screening wage $w = w^i_1$ for all $i \in C$. If a worker deviates and offers $w(s')$, Lemma \ref{lem:avg_wage} implies that the firm will be unwilling to accept any wages in $C$. This means that the deviating worker anticipates receiving no information. If workers want to deviate in this situation, then adding dispersion only increases the incentives to deviate of the workers who are to make lower initial offers. 

\begin{lemma}\label{lem:no_asymmetry}
Assume Intermediate Beliefs and $\rho_{ij} = \rho \in (0,1)$ for all $i,j$. If screening cannot be supported with a uniform screening wage then in cannot be supported with non-uniform screening wages. 
\end{lemma}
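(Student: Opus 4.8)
The plan is to reduce each worker's screening incentive constraint to a single scalar inequality, exploiting the fact that a worker who screens successfully already learns the firm's type in period~1, so the other workers' wages convey no further information to them.

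First I would record the incentive constraint. Fix the screening set $C$ and a profile $\omega=\{w_1^i\}_{i\in C}$ with $\sum_{i\in C}w_1^i=\bar{W}(C)$, the level pinned down by Lemma \ref{lem:deviation}. For $i\in C$ let $v_S(w_1^i)$ denote $i$'s on-path screening payoff. Since a successful screener already knows the firm's type, the wages of other workers are informationally irrelevant to $i$; hence $v_S$ is a function of $w_1^i$ alone, and it is strictly increasing in $w_1^i$ (the high-type branch, of probability $p>0$, pays $i$ the first-period wage $w_1^i$, while the low-type branch is unchanged). The only relevant deviation is to pool, i.e.\ to offer $w(s')$; write $v_D^i$ for the resulting payoff. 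Screening is supported with $\omega$ iff $v_S(w_1^i)\ge v_D^i$ for every $i\in C$.

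Second I would pin down the deviation payoff in the uniform case and use it to encode the hypothesis. With the uniform offer $w_1^i=\bar{W}(C)/|C|$, Lemma \ref{lem:avg_wage} gives $\bar{W}(C\setminus i)/(|C|-1)<\bar{W}(C)/|C|$, so once one worker deviates the offers of $C\setminus i$ are too high for the high-type firm to accept; it rejects all of them and the deviator obtains no information. By Lemma \ref{lem:screen_deviate} the firm accepts the deviator's offer of $w(s')$, so the uniform deviation payoff equals exactly $w(s')+\beta V_2$, where $V_2$ is the no-information continuation (equal to $w(s')$ under Intermediate Beliefs). Thus ``screening cannot be supported with a uniform wage'' is precisely
\[
 v_S\!\left(\tfrac{\bar{W}(C)}{|C|}\right) < w(s')+\beta V_2 .
\]
Third I would transfer this to an arbitrary profile. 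Given any $\omega$ with $\sum_{i\in C}w_1^i=\bar{W}(C)$, let $i^\ast$ attain $\min_{i\in C}w_1^i$; since the minimum is at most the average, $w_1^{i^\ast}\le\bar{W}(C)/|C|$, and monotonicity gives $v_S(w_1^{i^\ast})\le v_S(\bar{W}(C)/|C|)$. For $i^\ast$'s deviation I need only a lower bound: the offer $w(s')$ is accepted by Lemma \ref{lem:screen_deviate}, and any information $i^\ast$ might glean from others' wages can only raise its period-2 value above $V_2$ (a single agent with a finer signal is weakly better off), so $v_D^{i^\ast}\ge w(s')+\beta V_2$. Chaining,
\[
 v_S\!\left(w_1^{i^\ast}\right)\le v_S\!\left(\tfrac{\bar{W}(C)}{|C|}\right) < w(s')+\beta V_2 \le v_D^{i^\ast},
\]
so $i^\ast$ strictly prefers to pool and $\omega$ cannot support screening.

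The main obstacle is the two informational claims: that $v_S$ depends only on the worker's own wage (so own-wage monotonicity is exact and the lowest-wage worker has the lowest screening payoff), and that the pooling deviation guarantees at least the no-information continuation for $i^\ast$. Both hinge on handling the off-path beliefs triggered by an offer of $w(s')$ and on the principle that additional information is never harmful to the deviator; once these are isolated, the scalar chain above closes immediately, and dispersion merely depresses the screening payoff of the lowest-wage worker relative to the symmetric benchmark.
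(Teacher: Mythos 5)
Your proposal is correct and follows essentially the same route as the paper, which states this lemma without a formal proof and relies on the preceding discussion: under a uniform wage a deviator gets no information (since by Lemma \ref{lem:avg_wage} the remaining offers exceed $\bar{W}(C\setminus i)$ and are all rejected), and dispersion only worsens the screening payoff of the lowest-wage worker while their deviation payoff is bounded below by the no-information pooling payoff. Your explicit chain through the minimum-wage worker $i^\ast$, using only a lower bound $v_D^{i^\ast}\ge w(s')+\beta V_2$ so that you need not resolve exactly which offers the firm accepts after an asymmetric deviation, is a faithful and slightly more careful formalization of the paper's argument.
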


Similarly, if the incentive constraint of the screening workers is violated, i.e. they prefer to deviate and offer $w(s')$, when there is a large set of screening firms, then they will also prefer to deviate when there is a smaller set of screening firms. 

\begin{lemma}\label{lem:single_crossing}
Assume Intermediate Beliefs and $\rho_{ij} = \rho \in (0,1)$ for all $i,j$. If the incentive constraint of screening workers is violated when $N'$ workers screen, then it will be violated if $N<N'$ workers screen. 
\end{lemma}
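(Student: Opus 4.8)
The plan is to reduce the single-crossing claim to a monotonicity comparison between two payoffs: the equilibrium payoff of a screening worker, which I will show is increasing in the number $N$ of screeners, and the payoff from deviating to an offer of $w(s')$, which I will show is constant in $N$. Once these two facts are in hand the conclusion is immediate, since an increasing function can cross a constant only once from below.

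First I would write out the payoff $V^{\mathrm{scr}}(N)$ of a worker in a symmetric equilibrium with screening set $C$, $|C| = N$. In period~1 the worker offers $\bar{W}(C)/|C|$, which the high type accepts and the low type rejects before conceding $\delta w(s')$, so the period-1 value is $p\,\bar{W}(C)/|C| + (1-p)\delta^2 w(s')$. Having screened, the worker enters period~2 knowing the firm's type, so (since workers do not discount between periods) the continuation value is $p\,w(s'') + (1-p)w(s')$, which is independent of $N$. Hence $V^{\mathrm{scr}}(N) = p\,\frac{\bar{W}(C)}{|C|} + (1-p)\delta^2 w(s') + p\,w(s'') + (1-p)w(s')$ depends on $N$ only through the average screening wage $\bar{W}(C)/|C|$, and by \Cref{lem:avg_wage} this is strictly increasing in $N$; since $p>0$, $V^{\mathrm{scr}}(N)$ is strictly increasing in $N$.

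Next I would compute the deviation payoff $V^{\mathrm{dev}}$. If a screener deviates to $w(s')$, both firm types accept (\Cref{lem:screen_deviate}), so the worker obtains $w(s')$ in period~1 and learns nothing about the type. The key step is to argue, via \Cref{lem:avg_wage} exactly as in the discussion preceding \Cref{lem:no_asymmetry} (the alternating-offers analog of \Cref{lem:simple_screendeviate2}), that the high type then rejects every other initial offer in $C$; consequently no worker in $C\setminus i$ successfully screens and the deviator receives no useful information. Under Intermediate Beliefs ($p \le s'/s''$) a worker entering period~2 with belief $p$ — and, more generally, with any off-path inference that does not make second-period screening profitable — pools at $w(s')$. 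Thus $V^{\mathrm{dev}} = w(s') + w(s') = 2w(s')$, independent of $N$.

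Finally, the incentive constraint of screening workers is $V^{\mathrm{scr}}(N) \ge V^{\mathrm{dev}}$. If it is violated when $N'$ workers screen, i.e.\ $V^{\mathrm{scr}}(N') < V^{\mathrm{dev}}$, then for any $N < N'$ monotonicity gives $V^{\mathrm{scr}}(N) \le V^{\mathrm{scr}}(N') < V^{\mathrm{dev}}$, so the constraint is violated at $N$ as well. I expect the main obstacle to be the second step: establishing that $V^{\mathrm{dev}}$ is genuinely $N$-independent. This requires both that a deviation to $w(s')$ triggers rejection of all remaining screening offers, and the observation — which is where Intermediate Beliefs is used — that whatever the deviator infers in period~2 it can do no better than pooling at $w(s')$. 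The monotonicity of $V^{\mathrm{scr}}$ is then a routine consequence of \Cref{lem:avg_wage}.
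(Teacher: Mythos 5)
Your proposal is correct and follows essentially the same route as the paper's (much terser) proof: the deviation payoff is $N$-independent because a deviator to $w(s')$ anticipates the high type rejecting all remaining screening offers and hence receives no information, while the equilibrium screening payoff is increasing in $N$ via Lemma \ref{lem:avg_wage}. Your write-up simply makes explicit the two payoff expressions and the role of Intermediate Beliefs that the paper leaves implicit.
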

\begin{proof}
Since screening workers that deviate to pooling anticipate receiving no information from other workers, their payoff from deviation is the same regardless of $|C|$. The claim follows from Lemma \ref{lem:avg_wage}.
\end{proof}

I now turn to the incentives of the low type firm. In equilibrium, the type $s'$ firm receives a payoff of $(1+\beta)\left(s' - w(s') \right)|W|$. This is because the low type firm is indifferent between screening, pooling, and being revealed. In an equilibrium with screening set $C$, the firm could deviate by accepting the initial offers from workers in $A \subseteq C$. This is the so-called ``take the money and run'' deviation. The payoff from doing so, under the assumption that workers who receive mixed signals assume the firm is the high type, is given by\footnote{If workers become convinced that the type $s'$ firm is actually type $s''$, the firm gets a payoff of zero in the second period.}
\begin{equation*}
    (|W| - |A|)(s' - w(s')) + \sum_{i\in A} (s' - w^i_1) + \beta \sum_{i \in W\setminus A} \left( 1 - P^i(A) \right)\left( s' - w(s') \right).
\end{equation*}
The firm prefers not to deviate in this way iff the equilibrium payoff is higher, the condition for which is given by
\begin{equation}\label{eq:low_deviate}
\beta\left( \bar{P}(A) + |A| \right)\left(s' - w(s')\right) \geq \sum_{i\in A} \left( w(s') - w^i_1 \right)
\end{equation}
Here $\bar{P}(A) + |A|$ is the expected number of workers with whom the type $s'$ firm will be unable to reach any agreement in period 2, and $\sum_{i\in A} \left( w(s') - w^i_1 \right)$ is the potential gain in the first period of accepting wages that may be below $w(s')$. Note that the condition in (\ref{eq:low_deviate}) does not depend on $C$ directly, only on $A$ and $\omega$. I say that the low type incentive constraint is satisfied for $C,\omega$ if (\ref{eq:low_deviate}) is satisfied for all $A \subseteq C$.

\begin{lemma}\label{lem:lowfirm_IC}
Let $C$ be the set of screening workers, and suppose $w^i_1 = \bar{W}(C)/|C|$ for all $i\in C$. If $\rho_{i,j} = \rho \in (0,1)$ for all $i,j$ then the low type incentive constraint is satisfied iff (\ref{eq:low_deviate}) holds for $A=C$.
\end{lemma}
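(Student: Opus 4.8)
The plan is to show that, once we impose the symmetric wage $w^i_1 = \bar{W}(C)/|C|$ and symmetric information sharing, the family of constraints (\ref{eq:low_deviate}) is hardest to satisfy at the full set $A = C$, so that checking it there is both necessary and sufficient. The ``only if'' direction is immediate: if (\ref{eq:low_deviate}) holds for every $A \subseteq C$ then in particular it holds for $A = C$. The substance is the ``if'' direction. First I would substitute $w^i_1 = \bar{W}(C)/|C|$ into (\ref{eq:low_deviate}) and divide through by $|A|$ (the case $A = \varnothing$ being trivial, with both sides zero) to obtain the per-worker form
\begin{equation*}
    \beta\left( \frac{\bar{P}(A)}{|A|} + 1 \right)\left( s' - w(s') \right) \geq w(s') - \frac{\bar{W}(C)}{|C|},
\end{equation*}
whose right-hand side does not depend on $A$.

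Next I would record two facts about the left-hand side. First, since $w(s') = s'(1-\delta)/(1-\delta^2) = s'/(1+\delta) < s'$, we have $s' - w(s') > 0$, so the left-hand side is strictly increasing in $\bar{P}(A)/|A|$. Second, under symmetric information sharing $\bar{P}(A) = (|W| - |A|)\bigl(1 - (1-\rho)^{|A|}\bigr)$ depends on $A$ only through its cardinality, so the entire left-hand side is a function of $|A|$ alone; moreover the computation in the proof of Lemma \ref{lem:avg_wage} shows that $|A| \mapsto \bar{P}(A)/|A|$ is decreasing. Combining the two facts, the left-hand side is a decreasing function of $|A|$, and is therefore minimized over nonempty subsets $A \subseteq C$ at $A = C$.

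Since the right-hand side is constant in $A$, the per-worker inequality holds for every nonempty $A \subseteq C$ exactly when it holds for the subset on which the left-hand side is smallest, namely $A = C$. Hence (\ref{eq:low_deviate}) holds for all $A \subseteq C$ if and only if it holds for $A = C$, which is the claim. The only point that requires care is the interaction between the direction of monotonicity of $\bar{P}(A)/|A|$ and the sign of $s' - w(s')$: because $s' - w(s') > 0$, \emph{shrinking} $A$ \emph{relaxes} the constraint (a smaller set of accepted screening offers yields fewer lost period-2 agreements relative to the per-worker gain), so the binding deviation is the ``take the money and run'' from the whole set $C$ rather than from any singleton. This is the analogue in the present model of the supermodularity-driven conclusion (Lemma \ref{lem:deviation}) that the relevant high-type deviation is to reject all offers in $C$.
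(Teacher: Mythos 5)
Your proof is correct and follows essentially the same route as the paper's: divide (\ref{eq:low_deviate}) by $|A|$ so that the right-hand side becomes the constant $w(s') - \bar{W}(C)/|C|$, observe that $s' - w(s') > 0$ and that $|A| \mapsto \bar{P}(A)/|A|$ is decreasing (from the proof of Lemma \ref{lem:avg_wage}), so the left-hand side is minimized over subsets of $C$ at $A = C$. You simply make explicit the sign check and the monotonicity direction that the paper's terse proof leaves implicit.
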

\begin{proof}
Rewriting (\ref{eq:low_deviate}), we have 
\begin{equation*}
    \beta \left( \frac{\bar{P}(|A|)}{|A|} + 1\right)\left( s' - w(s')\right) \geq w(s') - \frac{\bar{W}(C)}{|C|} 
\end{equation*}
As shown in the proof of Lemma \ref{lem:avg_wage}, $\bar{P}(A)/|A|$ is decreasing under symmetric information sharing. The claim follows. 
\end{proof}

I now want to consider the incentives of pooling workers. A pooling worker may deviate by trying to screen the firm. Assume that the deviating pooling worker first proposes a wage $\hat{w}$, which it expects to be accepted if and only if the firm is type $s''$. I will show when these beliefs are consistent with the incentives of the firm. 

Suppose there is an equilibrium with $|W|>|C|>0$. If a pooling worker $i \in W\setminus C$ deviates and tries to screen, the firm can either accept $i$'s initial offer or reject and propose $\delta w(s')$ in the second round. 

\begin{lemma}\label{lem:pool_deviate}
Assume $\rho_{i,j} \in (0,1)$ for all $i,j$. In an equilibrium with screening set $C$, if a pooling worker $i$ makes an initial offer of $\hat{w}$, the type $s''$ firm will accept if and only if $\bar{W}(C\cup i) - \bar{W}(C) \geq \hat{w}$. 
\end{lemma}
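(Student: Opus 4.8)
The plan is to follow the structure of the proof of \Cref{lem:simple_pooldeviate}: reduce the firm's decision to a comparison between accepting all of $C\cup i$ and rejecting all of $C\cup i$, and then show that in this alternating-offers model the indifference point lands exactly at $\hat w = \bar W(C\cup i)-\bar W(C)$. The one substantive departure from the simple model — and the step I expect to be the crux — is the off-path inference drawn by third parties who observe $i$'s realized wage after a rejected deviation.

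First I would pin down that inference. If the high type rejects $i$'s deviation offer it counter-offers $\delta w(s')$, which $i$ accepts, so $i$'s realized wage is $\delta w(s')$. In the simple model the rejected-screener wage coincides with the pooling wage $s'$, so a third party who believes $i$ pools draws no inference; this is precisely why the simple threshold $\Tilde w(C,i)$ carries the $U_2(s'')$ term rather than $s''-w(s')$. Here, by contrast, $\delta w(s') < w(s')$ lies strictly below the on-path pooling wage. By Outer Weak Monotonicity (with the refinement ruling out beliefs degenerate on $s''$), any worker who assigns positive probability to $s'$ after the pooling wage $w(s')$ continues to do so after the strictly lower wage $\delta w(s')$, and in fact behaves exactly as against a revealed low type, offering $w(s')$ in period $2$. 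Consequently the firm's payoff from accepting any $A\subseteq C\cup i$ in response to the deviation coincides with the genuine screening-set payoff $\pi(A\mid C\cup i,\omega\cup\{\hat w\})$, with $i$ treated as an ordinary member of the screening set whose rejection reveals the low type.

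Given this identification, I would reduce to the two extreme responses exactly as in \Cref{lem:deviation}. Writing $\pi(\cdot)$ for $\pi(\cdot\mid C\cup i,\omega\cup\{\hat w\})$, supermodularity (\Cref{lem:supermodular}) gives, for every $A\subseteq C$, the marginal inequality $\pi(C\cup i)-\pi(A\cup i)\ge \pi(C)-\pi(A)$, and the equilibrium property from \Cref{lem:deviation} (that $C$ maximizes the firm's payoff among subsets of $C$) gives $\pi(C)\ge \pi(A)$; hence if the firm accepts $i$ it optimally accepts all of $C$, so the best acceptance set containing $i$ is $C\cup i$. The symmetric argument shows the best acceptance set omitting $i$ is $\varnothing$, which also rules out the firm supporting a higher accepted offer by accepting $i$ together with a strict subset $A\subsetneq C$. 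Therefore the firm accepts $i$'s offer if and only if $\pi(C\cup i)\ge \pi(\varnothing)$.

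Finally I would read off the threshold. These two payoffs are exactly the ones whose equality defines $\bar W(C\cup i)$; since $\pi(C\cup i)$ falls at unit rate in $\hat w$ through the term $s''-\hat w$ while $\pi(\varnothing)$ does not depend on $\hat w$, the inequality holds precisely when $\hat w\le \bar W(C\cup i)-\bar W(C)$, delivering both directions at once. The main obstacle is the inference step: everything else transcribes the supermodularity argument already used for \Cref{lem:deviation}, but the exact location of the threshold — and in particular why it is $\bar W(C\cup i)-\bar W(C)$ here rather than a $U_2$-adjusted quantity as in the simple model — hinges entirely on establishing that a rejected deviation leaves $i$ with a wage low enough to reveal the low type to all observers.
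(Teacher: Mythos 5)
Your proposal is correct and follows essentially the same route as the paper's proof: both reduce the firm's problem to the all-or-nothing comparison between accepting and rejecting every offer in $C\cup i$ (via \Cref{lem:supermodular} and the equilibrium property from \Cref{lem:deviation}) and then read the threshold off the indifference condition defining $\bar{W}(C\cup i)$. Your explicit treatment of the third-party inference after a rejected deviation (realized wage $\delta w(s')$ plus Outer Weak Monotonicity) is a welcome elaboration of a step the paper leaves implicit, though note that monotonicity alone only rules out beliefs degenerate on $s''$ --- the claim that observers then behave exactly as against a revealed low type, so that the threshold is $\bar{W}(C\cup i)-\bar{W}(C)$ with no $U_2$ adjustment, additionally uses Intermediate Beliefs (which the paper invokes only parenthetically).
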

\begin{proof}
Suppose that the type $s''$ firm rejects $i$'s offer. Then $i$ will believe that the firm is type $s'$ as long as it does not observe evidence to the contrary. Since the second period payoff to the firm from a worker who believes it is type $s'$ is weakly higher when workers maintain their prior, the incentive of the firm to screen with the workers in $C$ is lower than in equilibrium. By Lemma \ref{lem:deviation}, on path the firm weakly prefers rejecting all initial offers in $A$ to accepting all initial offers in $A$, for all $A \subseteq C$. Thus following it's rejection of $i$'s initial offer, the firm will also weakly prefer rejecting all initial offers in $A$ to accepting all initial offers in $A$, for all $A \subseteq C$. (Under Intermediate Beliefs the firm is indifferent between rejecting all initial offers in $C$ and accepting all initial offers in $C$, both on path and following the rejection of $i$'s initial offer.)

By the definition of $\bar{W}(C\cup i)$, the firm is indifferent between accepting all initial offers in $C \cup i$ and rejecting all such offers. Combined with the preceding paragraph, this implies that if $i$ makes an initial offer of $\bar{W}(C\cup i) - \bar{W}(C)$, the type $s''$ firm will accept. In particular, the firm will prefer accepting all initial offers in $C\cup i$ to rejecting all initial offers in $C\cup i$. Moreover, given acceptance of $i$'s offer, the high-type firm is more willing to be screened by other workers than it was with no deviation by $i$. 

It remains to show that no higher offer by worker $i$ will be accepted. This would only be the case if there was some set $A \subset C$ such that the firm strictly preferred accepting only the initial offers in $A\cup i$ to accepting all initial offers in $C \cup i$. This is ruled out however by submodularity of $P^j$ (Lemma \ref{lem:submodular}), using a similar argument to that given for Lemma \ref{lem:supermodular}. (It does not follow directly from Lemma \ref{lem:supermodular} because we are comparing a case in which $i$ pools to one in which it screens. However the difference in the argument is minor.)
\end{proof}

Lemma \ref{lem:pool_deviate} does not imply that the deviating pooling worker will be able to successfully screen by offering $\bar{W}(C\cup i) - \bar{W}(C)$: it may be that the low type firm will also want to accept. Mild conditions on primitives guarantee that this is not the case. The following Lemma is an intermediate step. It narrows down the set of possible responses by the low type firm to the pooling workers $i$'s deviation; either the low type firm rejects $i$'s initial offer, or it accepts all initial offers.  

\begin{lemma}\label{lem:pool_deviate2}
Assume $\rho_{i,j} = \rho \in (0,1)$ for all $i,j$ and the low type incentive constraint is satisfied with screening set $C$ and $w^i_1 = \bar{W}(C)/|C|$ for all $i\in C$. If $i$ deviates by offering $\hat{w} = \bar{W}(C\cup i) - \bar{W}(C)$ and the type $s'$ firm accepts, then it must be that the firm also accepts all initial wage offers in $C$.\footnote{It is easy to see from condition (\ref{eq:low_deviate}) that weaker conditions can be given for the conclusion, but they are not as convenient to state. I present this version since it is what will be used later on.}
\end{lemma}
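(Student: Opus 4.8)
The plan is to reduce the low type firm's problem, after $i$'s deviation, to a one-dimensional convex optimization over the \emph{number} of above-$w(s')$ offers it accepts, and then to exploit two earlier facts: the low-type incentive constraint for $C$ (reduced to $A=C$ by Lemma \ref{lem:lowfirm_IC}) and the monotonicity of the average screening wage (Lemma \ref{lem:avg_wage}).

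First I would write the low type firm's ``take-the-money-and-run'' payoff $G(A)$ from accepting the set $A\subseteq C\cup\{i\}$ of offers above $w(s')$, exactly as in the display preceding (\ref{eq:low_deviate}), but now with $i$ offering $\hat{w}$ and every $j\in C$ offering $\bar{W}(C)/|C|$. The crucial simplification is that for the \emph{low} type the off-path belief revisions triggered by $i$'s deviation are irrelevant: since type $s'$ is indifferent between being revealed, screened, and pooled, it earns $s'-w(s')$ from every worker who is not convinced it is the high type, regardless of whether that worker retains the prior or concludes the firm is type $s'$. Hence a worker contributes $0$ in period 2 precisely when it observes an accepted (high) wage, which happens with probability $P^j(A)$, and $G(A)$ takes a modular-plus-observation form. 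Because $\rho_{ij}=\rho$ and all $j\in C$ make the same offer, $G(A)$ depends only on $|A\cap C|$ and on whether $i\in A$; I write $g_0(m)=G(A)$ for $A\subseteq C$ with $|A|=m$, and $g_1(m)=G(A\cup\{i\})$ for such $A$.

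Next I would establish two facts. First, $g_1(m)=g_0(m+1)+\kappa$ with the constant $\kappa=\bar{W}(C)/|C|-\hat{w}$, because under symmetric $\rho$ accepting $i$ together with $m$ workers of $C$ is informationally identical to accepting $m+1$ workers of $C$; the only difference is the wage paid to $i$, which contributes the constant $\kappa$. Second, $g_0$ is strictly convex in $m$: its only nonlinear piece is the period-2 term proportional to $(|W|-m)(1-\rho)^m$, whose second difference is strictly positive for $\rho\in(0,1)$. I would also record that $\kappa<0$, which follows from Lemma \ref{lem:avg_wage}: since $\bar{W}(C\cup i)/(|C|+1)>\bar{W}(C)/|C|$, rearranging gives $\hat{w}=\bar{W}(C\cup i)-\bar{W}(C)>\bar{W}(C)/|C|$.

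Finally I would combine these. The low-type incentive constraint for $C$ (a hypothesis of the lemma) says $g_0(0)\ge g_0(m)$ for all $m\le|C|$, so the best response that \emph{excludes} $i$ is to reject everything, with value $g_0(0)$. By strict convexity, the best response among sets \emph{containing} $i$ is attained at an endpoint, i.e.\ at $g_1(0)$ or $g_1(|C|)$. But $g_1(0)=g_0(1)+\kappa\le g_0(0)+\kappa<g_0(0)$, so accepting only $i$ is strictly worse than rejecting everything; a firm that accepts $i$ can therefore only do so at count $|C|$, which is the single set $C\cup\{i\}$. Hence if the type $s'$ firm accepts $i$'s offer it accepts every offer in $C$. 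I expect the main obstacle to be the first step — carefully verifying that the low type's payoff is genuinely unchanged by the belief revisions that distinguish $i$'s off-path deviation from on-path screening — together with signing $\kappa$; once the clean identity $g_1(m)=g_0(m+1)+\kappa$ is in hand, the all-or-nothing conclusion follows from convexity and, notably, is independent of the exact value of $\hat{w}$.
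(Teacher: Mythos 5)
Your proof is correct, and it isolates the right ingredients: under symmetric $\rho$ and uniform offers in $C$, the low type's ``take-the-money-and-run'' payoff depends only on the cardinality of the accepted set and on whether $i$ belongs to it; the identity $g_1(m)=g_0(m+1)+\kappa$ with $\kappa=\bar{W}(C)/|C|-\hat{w}<0$ (via Lemma \ref{lem:avg_wage}); and the hypothesis that $g_0(m)\le g_0(0)$ for all $m\le|C|$. The paper's own proof is a one-step exchange argument: if the low type weakly prefers accepting $\{i\}\cup S$ with $S\subsetneq C$ to rejecting everything, swap $i$ for one more member of $C$ --- same reputational cost under symmetry, strictly lower wage since $\hat{w}>\bar{W}(C)/|C|$ --- and condition (\ref{eq:low_deviate}) is violated at some $X\subseteq C$, contradicting the low-type IC. In your notation that is exactly $g_1(m)=g_0(m+1)+\kappa<g_0(m+1)\le g_0(0)$ for \emph{every} $m\le|C|-1$, so the identity you establish already disposes of all $i$-containing sets short of $C\cup\{i\}$ in one line; the strict convexity of $g_0$ and the endpoint argument are correct but redundant, since you apply the identity only at $m=0$ and let convexity handle the interior when the identity works at every $m$. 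Two small remarks. First, your closing claim that the all-or-nothing conclusion is ``independent of the exact value of $\hat{w}$'' overstates things: ruling out the sets with $m<|C|$ does use $\kappa<0$, i.e.\ $\hat{w}>\bar{W}(C)/|C|$, which is precisely where Lemma \ref{lem:avg_wage} enters (if $\hat{w}$ were below $\bar{W}(C)/|C|$ the firm could strictly prefer accepting $i$ alone). Second, your explicit verification that the low type's payoff is unaffected by the off-path belief revisions triggered by $i$'s deviation --- because type $s'$ is indifferent among being revealed, screened, and pooled --- is a point the paper leaves implicit, and it is worth keeping.
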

\begin{proof}
As shown in the proof of Lemma \ref{lem:avg_wage}, with symmetric information sharing $\bar{W}(C)/|C|$ is decreasing in $|C|$. This implies that $\bar{W}(C\cup i) - \bar{W}(C) > \bar{W}(C)$. Suppose that in response to $i$'s deviation, the type $s'$ firm accepts all initial offers in $i \cup S$ for $S \subset C$, (or $S = \varnothing$). This means that condition (\ref{eq:low_deviate}) is violated for $A = i \cup S$. Since $\bar{W}(C\cup i) - \bar{W}(C) > \bar{W}(C)$, this meas there also exists a set $X \subset S$ with $|X| = |S\cup i|$ such that $\sum_{k \in X} w^k_1 < \bar{W}(C\cup i) - \bar{W}(C) + \sum_{k \in S} w^k_1$. But then (\ref{eq:low_deviate}) does not hold for $A =X$, so this cannot be an equilibrium.
\end{proof}

Using Lemma \ref{lem:pool_deviate2}, we can identify conditions under which the low type firm will not accept the initial wage offer of a pooling worker who deviates to offer $\bar{W}(C\cup i) - \bar{W}(C)$. In other words, the pooling worker will be able to screen with this deviation.

\begin{lemma}\label{lem:pool_deviate3}
Assume $\rho_{i,j} = \rho \in (0,1)$ for all $i,j$ and the low type incentive constraint is satisfied with screening set $C$ and $w^i_1 = \bar{W}(C)/|C|$ for all $i\in C$. If $j \in W\setminus C$ deviates by offering $\hat{w} = \bar{W}(C\cup i) - \bar{W}(C)$ then the low type will reject this initial offer iff 
\begin{equation*}
    \beta\left(\dfrac{\bar{P}(C\cup j)}{|C|+1} + 1 \right) \geq (1-\delta)\frac{s' - s''}{s' - w(s'')}.
\end{equation*}
\end{lemma}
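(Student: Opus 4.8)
The plan is to reduce the low type's problem to the binary ``take the money and run'' comparison and then substitute the closed form of $\bar{W}(C\cup j)$ into the low type's incentive constraint (\ref{eq:low_deviate}).

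First I would invoke Lemma \ref{lem:pool_deviate2}: if the type $s'$ firm accepts $j$'s deviation offer $\hat{w} = \bar{W}(C\cup j)-\bar{W}(C)$, then it must also accept every initial offer in $C$. Hence accepting $\hat{w}$ is exactly the take-the-money-and-run deviation on the set $A = C\cup j$, so the low type rejects $\hat{w}$ if and only if that deviation is unprofitable. Since the workers in $C$ offer $\bar{W}(C)/|C|$ and $j$ offers $\hat{w}$, the accepted wage sum is $\sum_{i\in C\cup j} w^i_1 = \bar{W}(C)+\hat{w} = \bar{W}(C\cup j)$, and the rejection condition is (\ref{eq:low_deviate}) evaluated at $A = C\cup j$, namely
\[
\beta\big(\bar{P}(C\cup j)+|C|+1\big)\big(s'-w(s')\big) \geq (|C|+1)w(s') - \bar{W}(C\cup j).
\]

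Next I would substitute $\bar{W}(C\cup j) = (|C|+1)\big((1-\delta^2)w(s'')+\delta^2 w(s')\big) - \beta\big(w(s'')-w(s')\big)\big(\bar{P}(C\cup j)+|C|+1\big)$ into the right-hand side. The two simplifying identities I would use are $w(s') - \big((1-\delta^2)w(s'')+\delta^2 w(s')\big) = -(1-\delta^2)\big(w(s'')-w(s')\big)$ and $w(s'')-w(s') = (s''-s')/(1+\delta)$, which together give $-(1-\delta^2)\big(w(s'')-w(s')\big) = (1-\delta)(s'-s'')$. Collecting the $\beta\big(\bar{P}(C\cup j)+|C|+1\big)$ terms on the left and using $(s'-w(s'))-(w(s'')-w(s')) = s'-w(s'')$, the condition collapses to $\beta\big(\bar{P}(C\cup j)+|C|+1\big)(s'-w(s'')) \geq (|C|+1)(1-\delta)(s'-s'')$. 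Dividing by $|C|+1$ and by $s'-w(s'')$ then yields the displayed inequality. The remaining off-path deviations (the high type accepting, and the other workers' incentives) are already settled by Lemmas \ref{lem:pool_deviate} and \ref{lem:pool_deviate2}, so no further work is needed there.

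The main obstacle is the sign bookkeeping at the final step. Dividing through by $s'-w(s'')$ preserves the inequality only when $s'-w(s'')>0$ and reverses it otherwise, so the orientation of the conclusion (whether $\geq$ or $\leq$) hinges entirely on the sign of $s'-w(s'')$, i.e.\ on whether the low type could profitably accept a wage of $w(s'')$. I would therefore make this sign regime explicit at the outset and check that it is consistent with the maintained assumption underlying (\ref{eq:low_deviate})---that a worker who becomes convinced the firm is type $s''$ delivers the low type a second-period payoff of zero---since that assumption is precisely what fixes $s'-w(s'')$ relative to $0$ and hence the direction of the stated inequality.
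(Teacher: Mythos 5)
Your proof is correct and takes essentially the same route as the paper: the paper reduces the low type's decision to the accept-all/reject-all comparison via Lemma \ref{lem:pool_deviate2} and then combines the resulting version of (\ref{eq:low_deviate}) at $A=C\cup j$ with the high type's indifference condition, which is algebraically identical to your substitution of the closed form of $\bar{W}(C\cup j)$. Your closing caveat about the sign of $s'-w(s'')$ is well taken and is precisely the step the paper elides when it says ``combining the two conditions yields the desired condition'': after the subtraction the rejection condition is $\beta\left(\bar{P}(C\cup j)+|C|+1\right)\left(s'-w(s'')\right) \geq (|C|+1)(1-\delta)(s'-s'')$, and dividing by $s'-w(s'')$ delivers the lemma's displayed orientation only when $s'>w(s'')$, in which case the right-hand side is negative and the condition is vacuous; in the regime $s'<w(s'')$ --- the one consistent with the maintained assumption that a worker convinced the firm is type $s''$ yields the low type a second-period payoff of zero --- the inequality reverses. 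So your plan to fix the sign regime explicitly at the outset is not pedantry; it is the piece of bookkeeping the paper's own proof is missing.
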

\begin{proof}
By construction, the type $s''$ firm is indifferent between accepting all initial offers in $C\cup j$, given $j$'s proposal of $\bar{W}(C\cup j) - \bar{W}(C)$, and rejecting all these offers. This indifference condition is given by 
\begin{equation}\label{eq:pf1}
    \beta\left( \bar{P}(C\cup j) + |C| + 1 \right)\left( w(s'') - w(s') \right) = (|C|+1)\left( s'' - \frac{1}{|C|+1}\bar{W(C\cup j)} - \delta(s'' - \delta w(s') \right).
\end{equation}
The left hand side of this expression is the reputational benefit of delay, and the right hand side is the period 1 cost. 

By Lemma \ref{lem:pool_deviate2}, if the low type firm accepts $j$'s initial offers then it accepts all initial offers in $C \cup j$. On the other hand, the fact that condition (\ref{eq:low_deviate}) was satisfied for all $A\subseteq C$ implies that if the low type rejects $j$'s initial offer than it also rejects all initial offers in $C$. Thus the low type rejects $j$'s initial offer if and only if it prefers rejecting all initial offers in $C\cup j$ to rejecting all such offers. The condition for this is 
\begin{equation}\label{eq:pf2}
    \beta\left( \bar{P}(C\cup j) + |C| + 1 \right)(s' - w(s')) \geq (|C| + 1)\left( s' - \frac{1}{|C|+1}\bar{W}(C\cup j) - \delta(s'-\delta w(s') \right).
\end{equation}
Combining conditions (\ref{eq:pf1}) and (\ref{eq:pf2}) yields the desired condition.
\end{proof}

Since $\bar{P}(C)/|C|$ is decreasing, it is sufficient to check the condition of Lemma \ref{lem:pool_deviate3} for $|C| = |W|-1$, in which case it reduces to 
\begin{equation*}
    \beta \geq (1-\delta)\frac{s' - s''}{s' - w(s'')}. 
\end{equation*}

A worker in $W\setminus C$ can deviate and offer at most $\bar{W}(C\cup i) - \bar{W}(C)$ if they hope the high type firm to accept. Under the conditions of Lemma \ref{lem:pool_deviate3}, this leads to successful screening; the firm will accept this initial offer if and only if it is type $s''$. Therefore the worker's belief that offering $\bar{W}(C\cup i) - \bar{W}(C)$ will lead to screening is consistent, and this constitutes the best deviation from pooling. 

Under the conditions of Lemma \ref{lem:pool_deviate3}, $\bar{W}(C)/|C|$ is decreasing. Therefore a pooling worker who deviates when $|C|=N$ does better than a screening worker in the symmetric-offers equilibrium when $|C| = N+1$. Since the payoff of screening workers is increasing in $|C|$, this means that in equilibrium pooling workers always do better than screeners. 

\subsection{Properties of equilibrium}

The properties of equilibrium can be summarized as follows, beginning with those that hold under the least restrictive assumptions. 

\vst
\noindent Assume $\rho_{i,j} \in (0,1)$ for all $i,j$.
\begin{itemize}
    \item Lemma \ref{lem:deviation}. Interpretation: Fix the set $C$ of workers who will screen in equilibrium, and consider \textit{a)} the choice of initial wage offers by workers in $C$, and \textit{b)} the incentives of the high type firm to accept the initial offer, rather than delay to pass as the low type. Supermodularity in information transmission implies that each worker knows that if they increase their offer the firm will reject not just theirs, but all other initial offers in $C$. Thus the workers know they will get no information about the firm's type. This pins down the sum of the initial offers for workers in $C$. 
\end{itemize}

\vst
\noindent Assume $\rho_{i,j} \in (0,1)$ for all $i,j$ and Intermediate Beliefs. 
\begin{itemize}
    \item Lemma \ref{lem:screen_deviate}. Interpretation: Any screening worker can deviate to pooling by making an initial offer of $w(s')$.
\end{itemize}

\vst
\noindent Assume $\rho_{ij} = \rho \in (0,1)$ for all $i,j$.\footnote{Definitely not necessary for Lemma \ref{lem:avg_wage}}
\begin{itemize}
    \item Lemma \ref{lem:avg_wage}. Interpretation: given the supermodularity of information sharing, there are two positive externalities of screening. First, pooling workers may get some information. Second, other screening workers need to pay lower information rents. 
    \item Screening is fragile; if any screening worker deviates no screening occurs. 
\end{itemize}

\vst
\noindent Assume $\rho_{ij} = \rho \in (0,1)$ for all $i,j$ and symmetry of screening wage offers ($w^i_1 = \bar{W}(C)/|C|$ for all $i \in C$).
\begin{itemize}
    \item In this case Lemma \ref{lem:avg_wage} implies that when all workers in $C$ make the same initial wage offer $\bar{W}(C)/|C|$, we have $X^i = \varnothing$ for all $i\in C$. 
    \item Lemma \ref{lem:lowfirm_IC}. Interpretation: The binding constraint for the low type firm is to not accept all initial offers in $C$. 
    \item Lemma \ref{lem:pool_deviate3}. Interpretation: Under the condition given in the lemma, workers who are supposed to pool in equilibrium can deviate to screening by making an offer of $\bar{W}(C\cup i) - \bar{W}(C)$. By Lemma \ref{lem:pool_deviate}, this is the relevant incentive constraint for workers who pool in equilibrium. 
\end{itemize}

\vst
\noindent Assume $\rho_{ij} = \rho \in (0,1)$ for all $i,j$ and Intermediate Beliefs.
\begin{itemize}
    \item Lemma \ref{lem:single_crossing}. Interpretation: Incentives of workers to screen are stronger when more workers screen. 
\end{itemize}

In the intersection of the restrictions, we have the following (partial) characterization of equilibrium. 

\vst
\noindent Assume $\rho_{ij} = \rho \in (0,1)$ for all $i,j$,  Intermediate Beliefs, and symmetric screening wage offers. The incentive constraints in an equilibrium in which the screening set is $C$ are as follows.
\begin{itemize}
    \item Screening workers: either screen by offering $\bar{W}(C)/|C|$ or deviate to pooling by offering $w(s')$, in which case the high type firm will reject all other initial offers from screening workers, and so there will be no information revealed by wages. 
    \item Pooling workers: pool and get wage $w(s')$ or deviate to screening by making initial wage offer $\bar{W}(C\cup i) - \bar{W}(C)$. Following a deviation screening will continue with all workers in $C\cup i$.
    \item High type firm. Accept all initial offers of workers in $C$ or deviate and reject all initial offers of workers in $C$. The firm will be indifferent as long as all workers offer $\bar{W}(C)/|C|$.
    \item Low type firm. Reject all initial offers in $C$ or deviate and accept all initial offers in $C$.
\end{itemize}
Moreover, equilibrium has the following properties.
\begin{itemize}
    \item Pooling workers are always better off than screening workers. 
    \item The payoffs of both pooling and screening workers are increasing in $|C|$. 
    \item Screening is fragile; if any screening worker deviates then no screening occurs. 
\end{itemize}

\subsection{Key takeaways}

The key constraint is that screening workers must prefer screening to offering $w(s')$ and receiving no information. When information sharing increases this constraint eliminates equilibria with smaller screening sets. As long as the incentive constraints of the pooling worker and the low type firm do not bind, this is the comparative static.

\end{document}